\newif\iftodo   
\newif\iftodoshort  
\theoremstyle{plain}
\newtheorem{theorem}{Theorem}
\theoremstyle{nonumberplain}
\theoremstyle{plain}
\theoremstyle{plain}
\newtheorem{coro}{Corollary}
\theoremstyle{plain}
\newtheorem{lemma}{Lemma}
\theoremstyle{plain}
\newtheorem{defn}{Definition}[section]
\newcommand{\beq}{\begin{equation}}
\newcommand{\eeq}{\end{equation}}
\newcommand{\lp}{ \left(}
\newcommand{\rp}{ \right)}
\long\def\comment#1{}
\newfont{\bbb}{msbm10 scaled 700}
\newfont{\bb}{msbm10 scaled 1100}
\begin{document}

\title{Divide-and-conquer: Approaching the capacity of the two-pair
bidirectional Gaussian relay network
\thanks{Manuscript received January 18, 2010; revised April 5, 2011. Parts of this paper have been presented at ITW'09~\cite{AvestimehrITW09} and ISIT'09~\cite{SezKhaAveHasISIT09}, respectively.
        The associate editor coordinating the review of this
manuscript and approving it for publication was Dr. Suhas Diggavi.}
\thanks{
Aydin Sezgin has been with the Emmy-Noether-Research Group on Wireless Networks, TAIT, Ulm University, Germany. He is now with the Ruhr-University Bochum, 44801 Bochum, Germany, email: aydin.sezgin@rub.de. The research of A. Sezgin was supported by the DFG Grant 1697/3.}
\thanks{
A.~Salman Avestimehr is with the School of Electrical and Computer Engineering, Cornell University, USA. email: avestimehr@ece.cornell.edu. The research of A. S. Avestimehr was supported in part by the NSF CAREER award 0953117.}
\thanks{
M.~Amin Khajehnejad and Babak Hassibi are with the Department
of Electrical Engineering, California Institute of Technology, Pasadena, CA, USA. email:\{amin,hassibi\}@caltech.edu}}

\author{Aydin Sezgin,~\IEEEmembership{Member,~IEEE}, A.~Salman Avestimehr,~\IEEEmembership{Member,~IEEE}, M.~Amin Khajehnejad, and Babak Hassibi}
\markboth{IEEE Transactions on Information Theory,~Vol.~xx,
No.~xx,~month~year}{Sezgin et.al.: Divide-and-conquer: Approaching the capacity of the two-pair
bidirectional Gaussian relay network}

\maketitle

\begin{abstract}
The capacity region of multi-pair bidirectional relay networks, in which a relay node facilitates the communication between multiple pairs of users, is studied. This problem is first examined in the context of the linear shift deterministic channel model.
The capacity region of this network when the relay is operating at either full-duplex mode or half-duplex mode for arbitrary number of pairs is characterized. It is shown that the cut-set upper-bound is tight and the capacity region is achieved by a so called divide-and-conquer relaying strategy. The insights gained from the deterministic network are then used for the Gaussian bidirectional relay network. The strategy in the deterministic channel translates to a specific superposition of lattice codes and random Gaussian codes at the source nodes and successive interference cancelation at the receiving nodes for the Gaussian network. The achievable rate of this scheme with two pairs is analyzed and it is shown that for all channel gains it achieves to within 3 bits/sec/Hz per user of the cut-set upper-bound. Hence, the capacity region of the two-pair bidirectional Gaussian relay network to within 3 bits/sec/Hz per user is characterized.
\end{abstract}
\begin{keywords}
Bidirectional communication, capacity region, deterministic approach, multi-pair relay network, two-way
\end{keywords}

\section{Introduction}
Cooperative communication and relaying is one of the important research topics in wireless network information theory. The basic model to study this problem is the 3-node relay channel which was first introduced in 1971 by van der Meulen \cite{Meulen} and several strategies for this network were developed by Cover and El Gamal~\cite{CoverGamal}.

While the main focus so far has been on the one-way-relay channel, bidirectional communication has also attracted attention. Bidirectional (or two-way) communication between two nodes was first studied by Shannon himself in~\cite{ShannonInt}. Nowadays the bidirectional communication where an additional node acting as a relay is supporting the exchange of information between the two nodes (or one pair) is gaining increased attention. Some relaying strategies for the one-pair bidirectional relay channel, such as decode-and-forward, compress-and-forward and amplify-and-forward, have been analyzed in~\cite{RankovISIT}. An interesting strategy referred to as noisy network coding was proposed in~\cite{LimKimGamalChung}, which generalizes the compress-and-forward strategy in~\cite{CoverGamal}.

Network coding type techniques have been proposed also by~\cite{KattiI,HauslHagenauer,BaikChung,NaryananLatticeBiDi} (and others) in order to improve the transmission rate.
{In~\cite{KattiI}, a network coding approach is used for the first time in a wireless network in order to reduce the number of transmissions needed to exchange the number of data packets between two nodes of bidirectional setup. While before $4$ transmissions were needed, the number of transmissions was to reduce to 3 in~\cite{KattiI} resulting in higher data rates. The transmit strategy in~\cite{HauslHagenauer} is similar to~\cite{KattiI} with the extension that a channel code is used by the nodes when communicating to the relay. Once the data is received at the nodes, they perform iterative network and channel decoding resulting in higher rates than without network coding. In~\cite{BaikChung,NaryananLatticeBiDi} the number of transmissions is further reduced by allowing the nodes to submit their data simultaneously to the relay resulting in a multiple-access setup.  Additionally, \cite{BaikChung,NaryananLatticeBiDi} utilize the idea of network coding for the binary case to extend it to the Gaussian case by using lattice coding, which is referred to as physical layer network coding. In~\cite{NaryananLatticeBiDi}, it is shown that the lattice based scheme outperforms other schemes at high $\mathsf{SNR}$. It turns out, however, that decoding the individual data streams in the multiple-access hop gives better performance at lower $\mathsf{SNR}$. In~\cite{BaikChung} decode-and-forward, amplify-and-forward, and modulo-and-forward relaying strategies are compared in terms of transmission rate. It turns out that depending on the scenario, one of schemes outperforms the other two, i.e., neither one is always outperforming the other. The tightest gap characterization on the capacity for the two-way relay channel is provided in~\cite{NamChungLeeJournal}, where it is shown that upper and lower bounds only differ by 1/2 bit. }

\subsection{System under investigation}
The bidirectional relay channel problem discussed above can be generalized to a multi-pair (or multiuser) setting in which the relay facilitates the communication between multiple pairs of users. The achievable degrees of freedom for a three user case with multiple antennas were determined in~\cite{NamLimChun}. In~\cite{RankovMU} authors analyzed the case that the relay orthogonalizes different bidirectional transmissions by a distributed zero forcing algorithm and then multiple pairs communicate with each other via several orthogonalize-and-forward relay terminals. In \cite{ChenYener,ChenYenerCISS} authors investigated this problem for interference limited systems in which each pair of users share a common spreading signature to distinguish themselves from the other pairs, and proposed a jointly demodulate-and-XOR forward strategy. However, so far no attempt has been done to characterize the capacity region of this network, and the optimal relaying strategy is unknown.

In this paper we study the information theoretic capacity of the multi-pair bidirectional wireless relay network. We first examine this problem in the context of the linear shift deterministic channel introduced by Avestimehr, Diggavi, and Tse~\cite{AvestimehrIT}. This model simplifies the wireless network interaction model by eliminating the noise and allows us to focus on the interaction between signals. This approach was successfully applied to the relay network in \cite{AvestimehrIT}, and resulted in insight in terms of transmission techniques which also led to an approximate characterization of the capacity of Gaussian relay networks. This approach has also been recently applied to the bidirectional relay channel problem~\cite{AvestimehrSezgin,AvestimehrSezginETT}, which again resulted in finding near optimal relaying strategies as well as approximating the capacity region of the noisy (Gaussian) bidirectional relay channel. The deterministic approach is not restricted to relay networks. For instance, an approximate characterization of the capacity for the Gaussian interference channel was obtained in~\cite{BreslerTseETT} using the deterministic approach. Transmission techniques in a deterministic relay-interference network were studied in~\cite{MohajerDiggaviAl08}.

\subsection{Main contributions}
Inspired by the results mentioned above, we apply the linear shift deterministic model to the multi-pair bidirectional relay network and analyze its capacity when the relay is operating at either full-duplex mode or half-duplex mode (with non adaptive listen-transmit scheduling). In both cases we exactly characterize the capacity region and show that the cut-set upper-bound is tight.
{We show that the capacity region is achieved by dividing the
signal level space elegantly between the multiple pairs, i.e., different pairs are orthogonalized on the signal level space. Each
pair is then operating on the portion of the signal level space assigned
to it.  The relay uses a similar \emph{functional-forwarding} scheme as
in~\cite{AvestimehrSezgin}, in which the relay re-orders the received superposed
signals on the different levels and forwards them without decoding everything explicitly. The strategy is therefore referred to as divide-and-conquer-strategy.}

Later on, we use these insights to find a near optimal transmission technique for the Gaussian case. More specifically, we propose a superposition of lattice codes and random Gaussian codes at the source nodes. {However, orthogonalization as in the deterministic setup is not possible in the Gaussian setup as all signals arriving at the relay interact with each other. Thus the relay attempts to decode the Gaussian codewords of the respective nodes and the superposition of the lattice codewords of each pair by using successive interference cancelation. The relay then forwards this information to the intended destinations}. We analyze the achievable rate region of this scheme and show that for all channel gains it achieves to within 3 bits/sec/Hz per user of the cut-set upper-bound on the capacity region of the two-pair bidirectional relay network.

The paper is organized as follows. In Section~\ref{sec:Main} we investigate the full-duplex and half-duplex multi-pair bidirectional linear shift deterministic relay network and characterize the exact capacity region of this network. In Section~\ref{sec:Insights}, we discuss the insights gained from the linear shift deterministic model and how these insights can be used in the Gaussian setup in the subsequent Section~\ref{sec:2Pair2WayGaussRelay}. In the Gaussian two-pair bidirectional relay network, we present upper bounds, derive our achievability strategy and characterize the constant gap between the upper bounds and our proposed scheme. We finally conclude the paper in Section~\ref{sec:Conc}.

\section{Multi-pair bidirectional linear shift deterministic relay network}\label{sec:Main}
In the following subsections, we state the precise definition of the problem and present the main result for the deterministic case.

\subsection{System model \label{sec:sysModel}}
The system model for the $M$-pair bidirectional relay network is shown in Figure \ref{fig:sysModel}. In this system $M$ pairs  $(A_1,B_1),\ldots,(A_M,B_M)$ aim to use the relay to communicate with each other (i.e., $A_1$ and $B_1$ want to communicate with each other, and so on). The relay can operate on either full-duplex or half-duplex mode. In the full-duplex mode it is able to listen and transmit at the same time, while in the half-duplex mode it can only listen or transmit at a particular time. In the half-duplex scenario, we only consider the case that the listen-transmit scheduling is non-adaptive and the relay listens a fixed $\Delta$ fraction of the time and transmits the rest. Although $\Delta$ can not change adaptively as a function of the channel gains, one can optimize over $\Delta$ beforehand.

\begin{figure}
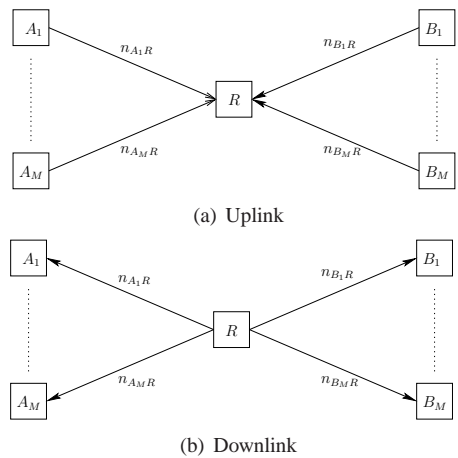

     \centering
     \subfigure[Uplink]{
  \scalebox{0.5}{   \input{fig1.pstex_t}}
}
\subfigure[Downlink]{
       \scalebox{0.5}{\input{fig2.pstex_t}}
}
     \caption{The system model for $M$ pair bidirectional linear shift deterministic relay network. \label{fig:sysModel}}
\end{figure}

We use the linear shift deterministic channel model to model the interaction between the transmitted signals. The linear shift deterministic channel model was introduced in~\cite{AvestimehrIT}. Here is a formal definition of this channel model.

\begin{defn} \textbf{(Definition of the linear shift deterministic model)}
Consider a wireless network as a set of nodes $V$, where $|V|=N$.
Communication from node $i$ to node $j$ has a non-negative integer
gain\footnote{Some channels may have zero gain.} $n_{(i,j)}$ associated
with it. This number models the channel gain in a corresponding Gaussian setting.
At each time $t$, node $i$ transmits a vector ${\mathbf{x}_i}[t] \in
\mathbb{F}_{2}^q$ and receives a vector ${\mathbf{y}_i}[t] \in \mathbb{F}_{2}^q$ where
$q=\max_{i,j}(n_{(i,j)})$. The received signal at each node is
a deterministic function of the transmitted signals at the other
nodes, with the following input-output relation: if the nodes in the
network transmit ${\mathbf{x}_1}[t], {\mathbf{x}_2}[t] , \ldots {\mathbf{x}_N}[t]$ then the received
signal at node j, $1 \leq j \leq N$ is:
\begin{equation}
\label{eq:channel_model}
{\mathbf{y}_{j}}[t]=\sum_{k=1}^N
 {\mathbf{S}^{q-n_{k,j}}}{\mathbf{x}_{k}}[t]
\end{equation}
for all $1 \leq k \leq N$,
where $\mathbf{S}$ is the $q \times q$ shift matrix and the summation and
multiplication is in $\mathbb{F}_{2}$.
\end{defn}

Now that we have defined the linear shift deterministic channel model we can apply it to the multi-pair bidirectional relay network. A pictorial representation of an example of such network with two pairs is shown in Figure \ref{fig:detExample}. In this figure each little circle represents a signal level and what is sent on it is a bit. The transmit and received signal levels are sorted from MSB to LSB from top to bottom. The channel gain between two nodes $i$ and $j$ indicates how many of the first MSB transmitted signal levels of node $i$ are received at destination node $j$. As described in the channel model (\ref{eq:channel_model}), at each received signal level, the receiver gets only the modulo two summation of the incoming bits.

\begin{figure}
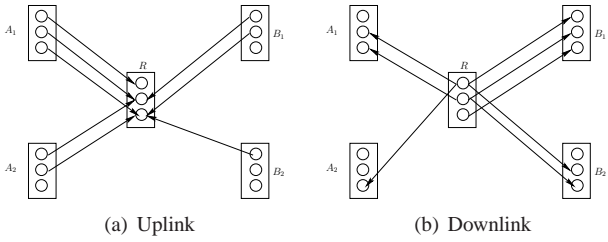

     \centering
     \subfigure[Uplink]{
  \scalebox{0.33}{   \input{fig3.pstex_t}}
}
\subfigure[Downlink]{
       \scalebox{0.33}{\input{fig4.pstex_t}}
}
     \caption{The pictorial representation of a two-pair bidirectional linear shift deterministic relay network with channel gains $n_{A_1R}=3$, $n_{B_1R}=2$, $n_{A_2R}=2$, $n_{B_2R}=1$, $n_{RA_1}=2$, $n_{RB_1}=3$, $n_{RA_2}=1$ and $n_{RB_2}=2$. \label{fig:detExample}}
\end{figure}

{\subsection{Cut-set upper-bound and a motivating example\label{sec:cutSet}}
The cut-set upper-bound \cite{CoverThomas} on the capacity region of the full-duplex $M$-pair bidirectional linear shift deterministic relay network (described in Section \ref{sec:sysModel}) is given by
\begin{align}
\label{eq:cutSetDet}
\sum_{i \in \mathcal{U}} [ \ell_i R_{A_i} + &   (1-\ell_i) R_{B_i} ]  \\
 \leq \min \Big( & \max_{i \in \mathcal{U} } ( \ell_i n_{A_iR} + (1-\ell_i) n_{B_iR} ), \nonumber \\
 & \max_{i \in \mathcal{U} } ( \ell_i n_{RB_i} + (1-\ell_i) n_{RA_i} ) \Big), \nonumber
\end{align}
for all $\mathcal{U}\subseteq \{1, \dots, M \}$ and $\ell_i \in \{0,1\}$, $i=1,\ldots,M$.
This bound is simply obtained by considering the pairs $(A_i,B_i)$, $i \in \mathcal{U}$, and creating a cut between them such that, if $\ell_i=1$, $A_i$ is on the left and $B_i$ is on the right side of the cut, and if $\ell_i=0$, $B_i$ is on the left and $A_i$ is on the right side of the cut. We then consider the sum-rate of communication from the nodes on the left side of the cut to the nodes on the right side of the cut. This is upper bounded by (\ref{eq:cutSetDet}), where the first term on the RHS of (\ref{eq:cutSetDet}) is the maximum number of bits that the relay can receive from the nodes on the left side of the cut, and  the second term on the RHS of (\ref{eq:cutSetDet}) is the maximum number of bits that the relay can broadcast to the nodes on the right side of the cut.
}

For example, in the case that we have only two pairs {($M=2$)} and the relay is operating on the full-duplex mode, the cut-set upper-bound on the capacity region is given by
\begin{align}
R_{A_1} &\leq \min \lp n_{A_1R}, n_{RB_1} \rp  \label{eq: cut-set bound 1} \\
R_{B_1} &\leq \min \lp n_{B_1R}, n_{RA_1} \rp  \label{eq: cut-set bound 2}\\
R_{A_2} &\leq \min \lp n_{A_2R}, n_{RB_2} \rp  \label{eq: cut-set bound 3}\\
R_{B_2} &\leq \min \lp n_{B_2R}, n_{RA_2} \rp  \label{eq: cut-set bound 4}\\
R_{A_1}+R_{A_2} & \leq  \min \lp \max \lp n_{A_1R},n_{A_2R} \rp, \max \lp n_{RB_1},n_{RB_2} \rp\rp  \label{eq: cut-set bound 5}\\
R_{B_1}+R_{B_2} & \leq  \min \lp \max \lp n_{B_1R},n_{B_2R} \rp, \max \lp n_{RA_1},n_{RA_2} \rp\rp  \label{eq: cut-set bound 6}\\
R_{A_1}+R_{B_2} & \leq  \min \lp \max \lp n_{A_1R},n_{B_2R} \rp, \max \lp n_{RB_1},n_{RA_2} \rp\rp  \label{eq: cut-set bound 7}\\
R_{B_1}+R_{A_2} & \leq  \min \lp \max \lp n_{B_1R},n_{A_2R} \rp, \max \lp n_{RA_1},n_{RB_2} \rp\rp  \label{eq: cut-set bound 8}.
\end{align}

As a motivating example, we now consider the network shown in Figure \ref{fig:detExample}. It is easy to check that the rate tuple
\begin{align}
(R_{A_1},R_{B_1},R_{A_2},R_{B_2})=(2,1,1,1)\nonumber
 \end{align}
 is inside its cut-set region. In Figure \ref{fig:detExampleAchi} we illustrate a simple scheme that achieves this rate point. With this strategy, the nodes in the uplink transmit
\begin{align*}
&x_{A_1}=\left[ a_{1,1},a_{1,2},0 \right]^t, \quad x_{B_1}=\left[ b_{1,1},0,0 \right]^t \\
&x_{A_2}=\left[ 0,a_{2,1},0 \right]^t ,\quad x_{B_1}=\left[ b_{2,1},0,0 \right]^t
\end{align*}
and the relay receives
\beq y_R=[  a_{1,1} ,~a_{1,2}\oplus b_{1,1},~a_{2,1} \oplus b_{2,1}]^t. \nonumber \eeq
Then the relay will re-order the received signal and transmit
\beq x_R=[a_{2,1} \oplus b_{2,1} ,~a_{1,2}\oplus b_{1,1},~a_{1,1} ]^t.\nonumber \eeq
Then node $A_1$ receives the {signals (i.e., XOR-combination)} $a_{1,2}\oplus b_{1,1}$ and since it knows $a_{1,2}$ can decode $b_{1,1}$. Similarly node $B_1$ can decode $a_{1,1}$ and $a_{1,2}$, node $A_2$ can decode $b_{2,1}$ and finally node $B_2$ can decode $a_{2,1}$. Therefore we achieve the rate point $(2,1,1,1)$.
\begin{figure}
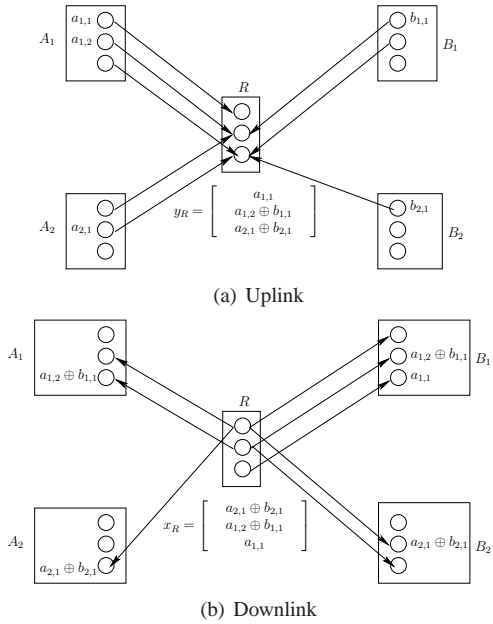

     \centering
     \subfigure[Uplink]{
  \scalebox{0.45}{   \input{fig5.pstex_t}}
}
\hspace{0.2in}
\subfigure[Downlink]{
       \scalebox{0.45}{\input{fig6.pstex_t}}
}
     \caption{The scheme that achieves rate point $(2,1,1,1)$. \label{fig:detExampleAchi}}
\end{figure}

There are some interesting points about this particular achievability strategy:
\begin{itemize}
\item There is no coding over time.
\item There is no interference between different pairs on the same received signal level at the relay.
\item The relay just re-orders the received XOR-combinations and forwards them.
\end{itemize}
{We call a strategy with these properties a {\emph{divide-and-conquer} relaying strategy}, which will be defined more formally in the next section. Quite interestingly, we next prove that any rate point in the cut-set bound region of the bidirectional linear shift deterministic relay network  can be achieved using such strategy. }

\subsection{Capacity region}

In this section we study the capacity region of the multi-pair bidirectional linear shift deterministic relay network. {We first give an overview of our achievability strategy. This strategy consists of three components, namely orthogonalization, reordering (or permutation) and forwarding. The first component (orthogonalization) divides the uplink signal levels at the relay between the pairs in such a way that no signal-level is assigned to more than one-pair. Hence, different pairs are orthogonalized in the uplink and do not interact with each other anymore. As a consequence, at each level, the relay either receives one bit from a single node or the XOR-combination of two bits coming from the pair of nodes that wish to communicate with each other. The relay then reorders its received signal by applying a permutation matrix $\mathbf{\Pi}$ (i.e., $\mathbf{x}_R=\Pi \mathbf{y}_R$) and forwards it in the downlink.} We name this strategy the {\emph{divide-and-conquer} relaying strategy}.

We now state our main result in this section.
\begin{theorem}
\label{thm:main}
The capacity region of the full-duplex multi-pair bidirectional linear shift deterministic relay network, described in Section \ref{sec:sysModel}, is equal to the cut-set upper-bound~(\ref{eq:cutSetDet}), and it is achieved by {the divide-and-conquer relaying strategy described above}.
\end{theorem}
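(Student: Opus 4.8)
The inclusion ``capacity region $\subseteq$ cut-set region'' is immediate: (\ref{eq:cutSetDet}) is the cut-set bound \cite{CoverThomas}, derived above, and holds for any scheme. So the whole content is the reverse inclusion, namely that \emph{every} rate tuple $(R_{A_i},R_{B_i})_{i=1}^{M}$ satisfying (\ref{eq:cutSetDet}) is achievable by a divide-and-conquer scheme (orthogonalization at the sources, a permutation $\mathbf{\Pi}$ at the relay, forwarding). By a standard block/time-sharing argument it suffices to do this for integer rate tuples in (\ref{eq:cutSetDet}), using if necessary a block of channel uses over which the relay still merely reorders its received bits; we assume this henceforth and argue with a single (super-)channel use.

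\textbf{Reduction to a level-scheduling problem.} A divide-and-conquer scheme is specified by a triple of nonnegative integers $(\alpha_i,\beta_i,\gamma_i)$ for each pair $i$, with $\alpha_i+\beta_i=R_{A_i}$ and $\alpha_i+\gamma_i=R_{B_i}$ (so $0\le\alpha_i\le\min(R_{A_i},R_{B_i})$): the relay will carry, for pair $i$, $\alpha_i$ \emph{network-coded} levels bearing $a\oplus b$ with $a$ a message bit of $A_i$ and $b$ one of $B_i$, $\beta_i$ levels bearing a raw bit of $A_i$, and $\gamma_i$ levels bearing a raw bit of $B_i$. For the orthogonalization to be valid, in the uplink a network-coded level of pair $i$ must lie among the top $\min(n_{A_iR},n_{B_iR})$ relay levels, a raw-$A$ level among the top $n_{A_iR}$, a raw-$B$ level among the top $n_{B_iR}$; in the downlink the corresponding permuted levels must lie among the top $\min(n_{RA_i},n_{RB_i})$, $n_{RB_i}$, $n_{RA_i}$ levels respectively, so that the node(s) that need them observe them and, knowing their own bit, decode. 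Since the relay only applies a permutation, the type of each logical bit is preserved; and — the key simplification — the uplink index and the downlink index of a logical bit are constrained \emph{independently}, because any bijection between the used uplink levels and the used downlink levels extends to an admissible $\mathbf{\Pi}$. Hence, for a fixed choice of the triples, the scheme works if and only if (i) the logical bits admit distinct uplink indices respecting their uplink bounds and, separately, (ii) they admit distinct downlink indices respecting their downlink bounds. Each of (i) and (ii) is a feasibility question for unit jobs with deadlines, solvable precisely when, for every threshold $t$, the number of logical bits whose bound is $\le t$ is at most $t$.

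\textbf{The combinatorial core (the hard part).} It remains to show that, whenever $(R_{A_i},R_{B_i})$ satisfies (\ref{eq:cutSetDet}), the triples $(\alpha_i,\beta_i,\gamma_i)$ can be chosen so that both threshold conditions hold simultaneously. I would prove this by induction on $M$: the base case $M=1$ is the bidirectional relay channel of \cite{AvestimehrSezgin}, where cut-set optimality of divide-and-conquer is already established; in the inductive step I would single out a pair whose demands do not meet the binding constraint at the highest still-active relay level, carve out an admissible level set for it at the top of both the uplink and the downlink level space, and apply the hypothesis to the remaining $M-1$ pairs below. An equivalent route is a Fourier--Motzkin / LP-duality computation showing that the projection onto $(R_{A_i},R_{B_i})$ of the polytope cut out by $\alpha_i+\beta_i=R_{A_i}$, $\alpha_i+\gamma_i=R_{B_i}$, $\alpha_i,\beta_i,\gamma_i\ge0$, and the two families of threshold inequalities is exactly (\ref{eq:cutSetDet}); there the dual multiplier of a threshold plays the role of $t$, and the binary $\ell_i\in\{0,1\}$ in (\ref{eq:cutSetDet}) records whether the binding bound for pair $i$ falls on its $A$-side or its $B$-side. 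The structural reason this succeeds — despite the network being multi-unicast — is that the right-hand sides of (\ref{eq:cutSetDet}) carry a \emph{maximum} over $i\in\mathcal{U}$ rather than a sum: reordering lets distinct pairs reuse the very same high-index levels, since their streams are permuted apart before the broadcast, so within any cut only the single largest gain is a true constraint, which is precisely what the threshold counting captures. I expect reconciling the many sign-pattern-indexed inequalities (\ref{eq:cutSetDet}) with the threshold feasibility conditions to be the main obstacle; once it is in place, choosing the triples accordingly and invoking the scheduling feasibility of the previous paragraph yields an explicit divide-and-conquer scheme attaining the tuple, which gives the reverse inclusion and hence equality.
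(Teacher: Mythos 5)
Your setup is sound and is genuinely different from the paper's: you reformulate achievability as a level-scheduling problem, parametrized by triples $(\alpha_i,\beta_i,\gamma_i)$ with $\alpha_i+\beta_i=R_{A_i}$, $\alpha_i+\gamma_i=R_{B_i}$, and you correctly observe that for a fixed choice of triples the scheme is feasible iff the uplink and downlink each satisfy a Hall-type threshold condition for unit jobs with deadlines, the two halves being decoupled because the relay permutation is free. This is a clean and attractive reformulation (the paper instead runs an induction on the sum-rate $\sum_i(R_{A_i}+R_{B_i})$, removing one uplink level and one downlink level per step and verifying in Appendices~A and~B that the reduced tuple stays inside the cut-set region of the reduced network). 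The preliminary reduction to integer tuples is also in the right spirit and matches what the paper does via the $Q$-fold blow-up.

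The problem is that the step you yourself call ``the hard part'' is not actually carried out. Showing that every integral $(R_{A_i},R_{B_i})$ satisfying (\ref{eq:cutSetDet}) admits a choice of triples meeting both threshold families simultaneously is the entire mathematical content of the theorem; you describe two candidate routes (an induction on $M$ singling out a pair whose demands miss the binding constraint, or a Fourier--Motzkin / LP-duality computation) but you neither define the inductive invariant precisely, nor exhibit the dual certificate, nor explain how the binary $\ell_i\in\{0,1\}$ sign patterns in (\ref{eq:cutSetDet}) emerge from the threshold counts. You then say ``I expect reconciling the many sign-pattern-indexed inequalities with the threshold feasibility conditions to be the main obstacle; once it is in place\ldots'' --- but it is not in place, so there is no proof. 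As a secondary point, your induction on $M$ as stated is shaky: the base case $M=1$ of \cite{AvestimehrSezgin} does not by itself give an inductive step, and it is not clear that ``carving out an admissible level set for one pair at the top of both level spaces'' preserves the cut-set membership of the remaining tuple in the reduced network --- this is exactly the delicate contradiction argument (equations (\ref{eq:lemmaDet0})--(\ref{eq:lemmaDet17})) that occupies the paper's appendices, and it is the part that cannot be waved away.
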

\begin{proof}

We first prove the result for integral\footnote{i.e., with integer components.} rate-tuples. We use induction on the sum-rate
 \begin{align}
 R_{\text{sum}}=\sum_{i=1}^M (R_{A_i}+R_{B_i}) \nonumber
  \end{align}
  to show that every integral $2M$-tuple $(R_{A_1},R_{B_1},\cdots, R_{A_M},R_{B_M})$ satisfying the cut set bound is achievable by allocating subsets of the signal levels exclusively to users of different sessions\footnote{A session means the communication of one pair}, and using functional-forwarding at the relay.

The proof is obvious for $R_{\text{sum}}=1$. Assume it is true for all channel gains and all integral rate-tuples with sum-rate $R_{\text{sum}}\leq k$. We now prove this for $R_{\text{sum}}=k+1$. Consider a $2M$-tuple $\textbf{R}=(R_{A_1},R_{B_1},\cdots, R_{A_M},R_{B_M})$ satisfying the cut set bound (\ref{eq:cutSetDet}) and $R_{\text{sum}}=k+1$. We consider two separate cases.

\emph{Case $1$}: There is a pair where both nodes have nonzero transmission rates. Without loss of generality we may assume that $R_{A_1}$ and $R_{B_1}$ are both nonzero. Our goal is to choose one up-link signal level and one down-link signal level at the relay, and assign them to the ($A_1$,$B_1$) session.  $A_1$ and $B_1$ will then transmit one bit at the specified uplink level to the relay, and the relay will transmit (broadcast) the received XOR-combination at the specified down-link level to both $A_1$ and $B_1$.  After doing so and removing the specified signal levels, the network will reduce  to a network with lower channel gains. We then show that the reduced rate-tuple $(R_{A_1}-1,R_{B_1}-1,R_{A_2},R_{B_2},\cdots, R_{A_M},R_{B_M})$ is in the cut-set region of the reduced network. Therefore, by induction, it will be achieved and the proof will be complete.

More specifically, for the up-link we choose the highest signal level connected to both $A_1$ and $B_1$ (denoted by $l_u=\min(n_{A_1R},n_{B_1R}$), and for the down-link, we chose the lowest signal level connected to both $A_1$ and $B_1$ (denoted by $l_d=\min(n_{RA_1},n_{RB_1}$). After removing signal levels $l_u$ and $l_d$ from the up-link and down-link of the relay, we obtain a linear shift deterministic network with channel gains
\begin{eqnarray}
\label{eq:reduced1}n'_{A_iR}&=& n_{A_iR}- \mathbf{1}(n_{A_iR} \geq l_u), \quad i=1,\ldots,M,\\
\label{eq:reduced2} n'_{B_iR}&=& n_{B_iR}- \mathbf{1}(n_{B_iR} \geq l_u), \quad i=1,\ldots,M,\\
\label{eq:reduced3} n'_{RA_i}&=& n_{RA_i}- \mathbf{1}(n_{RA_i} \geq l_d), \quad i=1,\ldots,M,\\
\label{eq:reduced4} n'_{RB_i}&=& n_{RB_i}- \mathbf{1}(n_{RB_i} \geq l_d), \quad i=1,\ldots,M,
\end{eqnarray}
where $\mathbf{1}(.)$ is the indicator function.

As we show in Appendix~\ref{app:proofDetCase1}, the reduced rate-tuple $\mathbf{R}'=(R_{A_1}-1,R_{B_1}-1,R_{A_2},R_{B_2},\cdots, R_{A_M},R_{B_M})$ is in the cut-set region of this network, i.e.,
\begin{align}
\label{eq:cutSetProof}
\sum_{i \in \mathcal{U}} [ \ell_i R'_{A_i} + &  (1-\ell_i) R'_{B_i} ] \\
\leq \min \Big( & \max_{i \in \mathcal{U} } ( \ell_i n'_{A_iR} + (1-\ell_i) n'_{B_iR} ), \nonumber\\
&  \max_{i \in \mathcal{U} } ( \ell_i n'_{RB_i} + (1-\ell_i) n'_{RA_i} ) \Big),\nonumber
\end{align}
for all $\mathcal{U}\subseteq \{1, \dots, M \}$ and $\ell_i \in \{0,1\}$, $i=1,\ldots,M$.

Moreover, $\mathbf{R}'$, has sum-rate $R'_{\text{sum}}=R_{\text{sum}}-2=k-1$. Hence, by our induction assumption, it can be achieved by using the remaining levels and the proof in this case is complete.

\emph{Case 2}: Every session has a node with zero rate. Without loss of generality, assume that $R_{B_1}=\cdots=R_{B_M}=0$ and $R_{A_1} \geq 1$. Again, we choose one up-link signal level and one down-link signal level at the relay, and assign them to the ($A_1$,$B_1$) session.  $A_1$ will then transmit one bit at the specified uplink level to the relay, and the relay will transmit (broadcast) the received bit at the specified down-link level to $B_1$.  After doing so and removing the specified signal levels, the network will reduce  to a network with lower channel gains. We then show that the reduced rate-tuple $(R_{A_1}-1,0,R_{A_2},0,\cdots, R_{A_M},0)$ is in the cut-set region of the reduced network. Therefore, by induction, it will be achieved and the proof  will be complete.

More specifically, we choose the highest signal level in the up-link that is connected $A_1$ (denoted by $l_u=n_{A_1R}$), and for the down-link, we chose the lowest signal level connected to $B_1$ (denoted by $l_d=n_{RB_1}$). After removing signal levels $l_u$ and $l_d$ from the up-link and down-link of the relay, we obtain a linear shift deterministic network with channel gains in (\ref{eq:reduced1})-(\ref{eq:reduced4}).

As we show in Appendix \ref{app:proofDetCase2}, the reduced rate-tuple $\mathbf{R}'=(R_{A_1}-1,0,R_{A_2},0,\cdots, R_{A_M},0)$ is in the cut-set region of the reduced network. Moreover, it has sum-rate $R'_{\text{sum}}=R_{\text{sum}}-1=k$. Hence, by our induction assumption, it can be achieved and the proof in this case is complete.

To complete the proof, we just need to show that all corner points of the cut-set bound region are achieved by the divide-and-conquer relaying strategy. Note that since all coefficients of the hyperplanes of the cut-set bound region are integers, then all corner points of the region must be fractional. If a corner point $\overrightarrow{R}$ is integral then we are done. Otherwise, we choose a large enough integer $Q$ such that $Q\overrightarrow{R}$ is integral. Now note that $Q$ instances of a {linear} shift deterministic network over time is the same as the original network with all channel gains are multiplied by $Q$.
To see this, let the transmit and received signal of node $k$ ($k \in V$) at $Q$ time instances to be $\mathbf{x}_k[i]=\left [x_k^{(1)}[i],x_k^{(2)}[i],\ldots,x_k^{(q)}[i] \right ]^T \in \mathbb{F}_2^q$ and $\mathbf{y}_k[i]=\left [y_k^{(1)}[i],y_k^{(2)}[i],\ldots,y_k^{(q)}[i] \right ]^T\in \mathbb{F}_2^q$, $i=0,\ldots,Q-1$, satisfying (\ref{eq:channel_model}). We now define $\tilde{\mathbf{x}}_k$ and $\tilde{\mathbf{y}}_k$ as given in~\eqref{eq:DefXYtilde} on the top of the next page.
\begin{table*}
\begin{align}
\tilde{\mathbf{x}}_k&=\left [x_k^{(1)}[0],\ldots,x_k^{(1)}[Q-1], x_k^{(2)}[0],\ldots,x_k^{(2)}[Q-1],\ldots,x_k^{(q)}[0],\ldots,x_k^{(q)}[Q-1] \right ]^T \in \mathbb{F}_2^{Qq} \nonumber\\
\tilde{\mathbf{y}}_k&=\left [y_k^{(1)}[0],\ldots,y_k^{(1)}[Q-1], y_k^{(2)}[0],\ldots,y_k^{(2)}[Q-1],\ldots,y_k^{(q)}[0],\ldots,y_k^{(q)}[Q-1] \right ]^T \in \mathbb{F}_2^{Qq}. \label{eq:DefXYtilde}
\end{align}
\hrule
\end{table*}
From (\ref{eq:channel_model}), it is easy to see that $\tilde{\mathbf{x}}_k$'s and $\tilde{\mathbf{y}}_k$'s satisfy
\beq \tilde{\mathbf{y}}_j=\sum_{k=1}^N
 {\mathbf{\tilde{S}}^{Qq-Qn_{k,j}}}{\mathbf{\tilde{x}}_{k}}, \nonumber \eeq
 where ${\mathbf{\tilde{S}}}$ is now the $Qq \times Qq$ shift matrix. Hence, we equivalently have a linear shift deterministic network with all channel gains multiplied by $Q$.
 Now since $Q\overrightarrow{R}$ is integral and is obviously inside the cut-set upper-bound of the enhanced network (where all channel gains are multiplied by $Q$), then it is achievable by the divide-and-conquer relaying strategy. This strategy can then be simply  translated to a divide-and-conquer relaying strategy on the original network over $Q$ time-steps. Therefore the corner point $\frac{QR}{Q}=R$ is achievable.
\end{proof}

For illustration, let's apply the inductive algorithm in the proof of Theorem \ref{thm:main} to achieve the rate-tuple ($3$,$1$,$2$,$2$) in the example network as shown in Fig.~\ref{fig:inducExm1} and Fig.~\ref{fig:inducExm1F}. We should first take the ($A_1$,$B_1$) pair and serve them through one signal level in UL and one level in DL, reducing the remaining rate-tuple to ($2$,$0$,$2$,$2$). This step is shown in Fig.~\ref{fig:inducExm1a}. Next, we take the ($A_2$,$B_2$) and similarly assign corresponding levels in UL and DL to them. This is done twice, reducing the remaining rate-tuple to ($2$,$0$,$0$,$0$). These two steps are shown in Figures~\ref{fig:inducExm1b}  and \ref{fig:inducExm1c}. For the sake of clarity, the removed signal levels are dotted in each step.  The remaining unserved rates are ($2$,$0$,$0$,$0$). We then apply the procedure in the case 2 of the inductive algorithm in Theorem~\ref{thm:main}. Fig.~\ref{fig:inducExm1d} shows how this idea is applied to our example network. The final configuration that achieves the rate-tuple for this example is shown in Fig.~\ref{fig:inducExm1F}.

{In the case that the relay is operating on the half-duplex mode (i.e., listening $\Delta$ fraction of the time and transmitting the rest), the cut-set upper-bound \cite{CoverThomas} on the capacity region of $M$-pair bidirectional linear shift deterministic relay network will be
\begin{align}
\nonumber \sum_{i \in \mathcal{U}} [ \ell_i R_{A_i} + & (1-\ell_i) R_{B_i} ]  \\ \label{eq:cutSetDetHD}
 \leq \min \Big( & \Delta \max_{i \in \mathcal{U} } ( \ell_i n_{A_iR} + (1-\ell_i) n_{B_iR} ), \\
& (1-\Delta) \max_{i \in \mathcal{U} } ( \ell_i n_{RB_i} + (1-\ell_i) n_{RA_i} ) \Big), \nonumber
\end{align}
for all $\mathcal{U}\subseteq \{1, \dots, M \}$ and $\ell_i \in \{0,1\}$, $i=1,\ldots,M$.}

{As a corollary of Theorem \ref{thm:main}, we can also show that in this case the cut-set upper-bound is achievable.
\begin{coro} Let $\Delta$ be the fraction of the time the relay listens and transmits the rest. The capacity region of the multi-pair bidirectional linear shift deterministic relay network with a half-duplex relay is equal to the cut-set upper-bound~(\ref{eq:cutSetDetHD}), and it is achieved by the divide-and-conquer relaying strategy.
\end{coro}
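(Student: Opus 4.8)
The plan is to derive the corollary from Theorem~\ref{thm:main} by a block time-expansion of the network, exactly the asymmetric analogue of the rational-rate argument at the end of the proof of Theorem~\ref{thm:main}. The converse part is nothing new: inequality~(\ref{eq:cutSetDetHD}) is the cut-set bound for the half-duplex network stated above, so it holds for every achievable rate tuple, and the content of the corollary is that it is also achievable.

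For achievability, first assume $0<\Delta<1$ is rational and fix a rational rate tuple $\mathbf{R}=(R_{A_1},R_{B_1},\ldots,R_{A_M},R_{B_M})$ satisfying~(\ref{eq:cutSetDetHD}); since all channel gains are integers and $\Delta$ is rational, the rational tuples are dense in the region~(\ref{eq:cutSetDetHD}), and as the set of achievable rate tuples is closed, it suffices to achieve such $\mathbf{R}$. Choose a positive integer $Q$ with $Q\Delta\in\mathbb{Z}$ (hence also $Q(1-\Delta)\in\mathbb{Z}$) and $QR_{A_i},QR_{B_i}\in\mathbb{Z}$ for all $i$. Operate the half-duplex network over blocks of $Q$ consecutive time-steps, having the relay listen during the first $Q\Delta$ steps of each block and transmit during the remaining $Q(1-\Delta)$ steps. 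By the stacking argument used in the proof of Theorem~\ref{thm:main} (cf.~(\ref{eq:DefXYtilde})), the $Q\Delta$ listening steps act as a single use of an uplink linear shift deterministic network with all uplink gains multiplied by $Q\Delta$, and the $Q(1-\Delta)$ transmitting steps act as a single use of a downlink network with all downlink gains multiplied by $Q(1-\Delta)$; because all listening precedes all transmitting within a block, the relay observes its entire received vector before it must transmit, so, up to the usual one-block pipelining delay (whose rate cost vanishes as the number of blocks grows), these two sub-networks glued at the relay form precisely a \emph{full-duplex} $M$-pair bidirectional linear shift deterministic relay network with uplink gains $Q\Delta\,n_{A_iR},\,Q\Delta\,n_{B_iR}$ and downlink gains $Q(1-\Delta)\,n_{RA_i},\,Q(1-\Delta)\,n_{RB_i}$.

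Now apply Theorem~\ref{thm:main} to this enhanced full-duplex network. Writing its cut-set region~(\ref{eq:cutSetDet}) in terms of the enhanced gains and pulling the positive constants $Q\Delta$ and $Q(1-\Delta)$ out of the maxima, one sees it coincides with the region~(\ref{eq:cutSetDetHD}) with every rate multiplied by $Q$; hence $Q\mathbf{R}$, which is integral by the choice of $Q$, lies in the cut-set region of the enhanced network and is therefore achieved there by the divide-and-conquer relaying strategy. Unrolling that strategy over the $Q$ time-steps of a block yields a divide-and-conquer scheme for the half-duplex network achieving $Q\mathbf{R}$ per block, i.e.\ rate $\mathbf{R}$ per use, as desired. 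For irrational $\Delta$ one instead lets the relay listen for $\lfloor\Delta T\rfloor$ out of every $T$ steps and transmit the rest; running the above construction on these blocks gives per-use rates that converge, as $T\to\infty$, to every point of the (closed) region~(\ref{eq:cutSetDetHD}).

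The only genuinely delicate point is the bookkeeping in the expansion step: one must check that the listening and the transmitting phases separately reproduce the $Q\Delta$- and $Q(1-\Delta)$-scaled uplink and downlink sub-networks (in particular that the value of $q$ for the enhanced network is consistent with both phases), and that the relay permutation prescribed by the divide-and-conquer strategy on the enhanced network, once unrolled, is implementable under the fixed listen-then-transmit schedule. With the block schedule chosen above this last issue is automatic, so the argument reduces entirely to Theorem~\ref{thm:main} together with the elementary scaling identity relating~(\ref{eq:cutSetDet}) and~(\ref{eq:cutSetDetHD}).
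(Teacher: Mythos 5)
Your proposal is correct and follows essentially the same route as the paper's own proof: pick an integer $Q$ making $Q\Delta$ integral, stack $Q$ time-steps (with all listening before all transmitting) to reduce to a full-duplex $M$-pair network with uplink gains scaled by $Q\Delta$ and downlink gains by $Q(1-\Delta)$, observe that its cut-set region is the half-duplex region~(\ref{eq:cutSetDetHD}) scaled by $Q$, and then invoke Theorem~\ref{thm:main}. The extra care you take over rational rate tuples, the pipelining delay, and the irrational-$\Delta$ approximation are valid refinements of details the paper handles more tersely (the paper treats irrational $\Delta$ by a limiting sequence of rationals), but they do not change the substance of the argument.
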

\begin{proof}
Without loss of generality assume $\Delta$ is a fractional number (otherwise consider the sequence of fractional numbers approaching it). Then choose a large enough integer $Q$ such that $Q\Delta$ is integer. Then consider $Q$ instances of the network over time, such that for $Q\Delta$ instances the relay is listening and in the other $Q(1-\Delta)$ instances it is transmitting. After concatenating these instances together, the resulting network can be thought of as a full-duplex multi-pair network where the uplink channel gains are multiplied by $Q\Delta$ and the downlink channel gains are multiplied by $(1-\Delta)Q$. It is easy to verify that the cut-set bound region of this network is just the cut-set bound region of the original half-duplex network expanded by $Q$. Now by {Theorem \ref{thm:main}} and the previous argument, we know that the capacity region of this full-duplex multi-pair bidirectional network is equal to its cut-set upper-bound and is achieved by the divide-and-conquer relaying strategy. Now note that any divide-and-conquer relaying strategy in this full-duplex network can be translated to a divide-and-conquer relaying strategy in $Q$ instances of the original half-duplex network; $Q\Delta$ instances the relay is in the listen mode to get the signals and $(1-\Delta)Q$ instances in the transmit mode to forward the signals. Therefore the cut-set upper-bound is achievable and the proof is complete.
\end{proof}}
\begin{figure*}
     \centering
     \subfigure[Step 1]{\label{fig:inducExm1a}
     \includegraphics[scale=0.85]{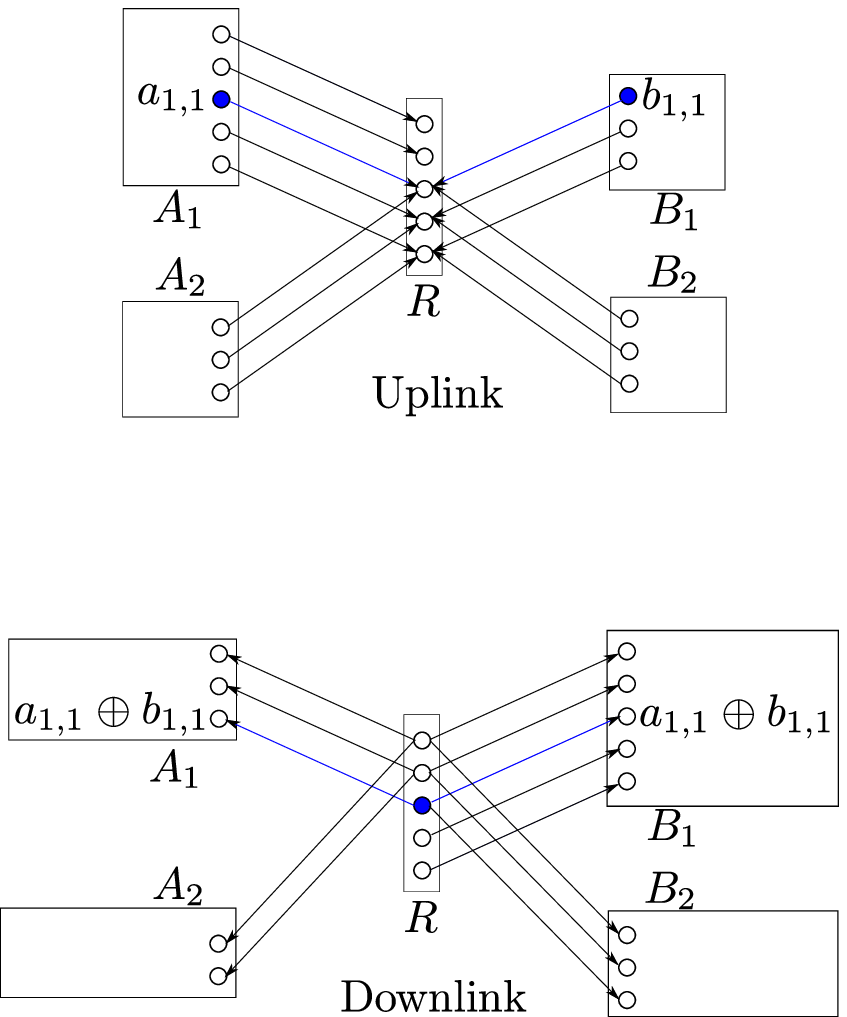}
     }
\subfigure[Step 2]{\label{fig:inducExm1b}
      \includegraphics[scale=0.85]{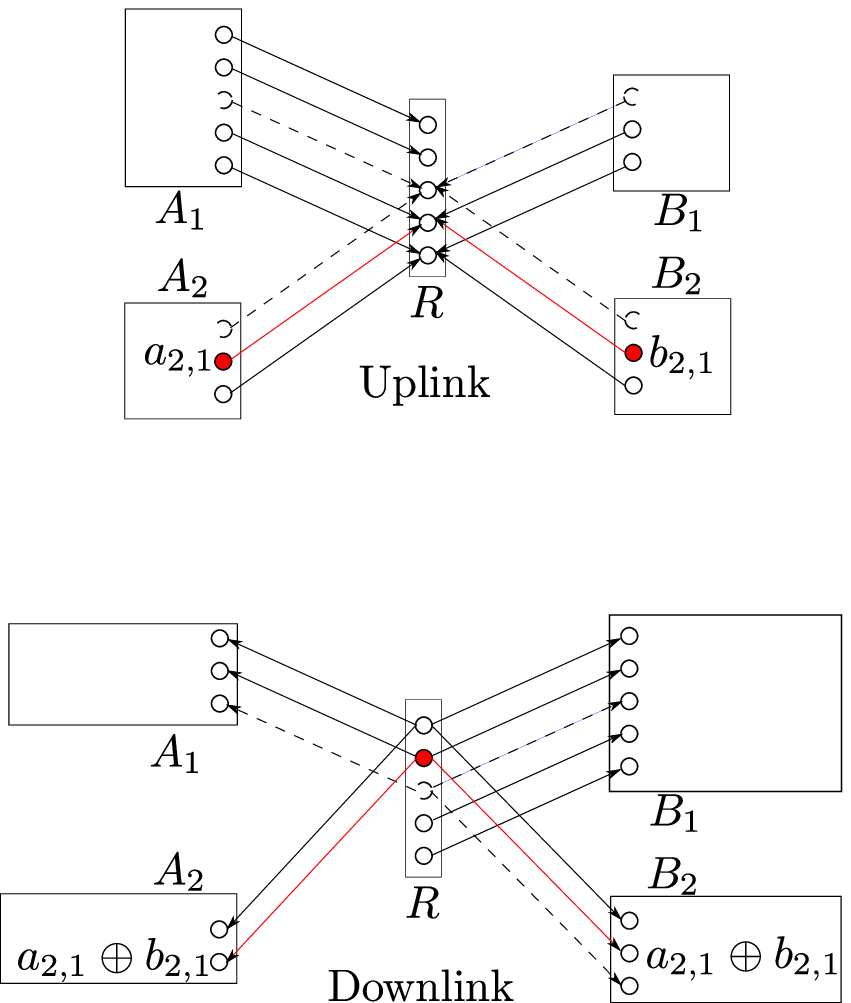}
}
\subfigure[Step 3]{\label{fig:inducExm1c}
       \includegraphics[scale=0.85]{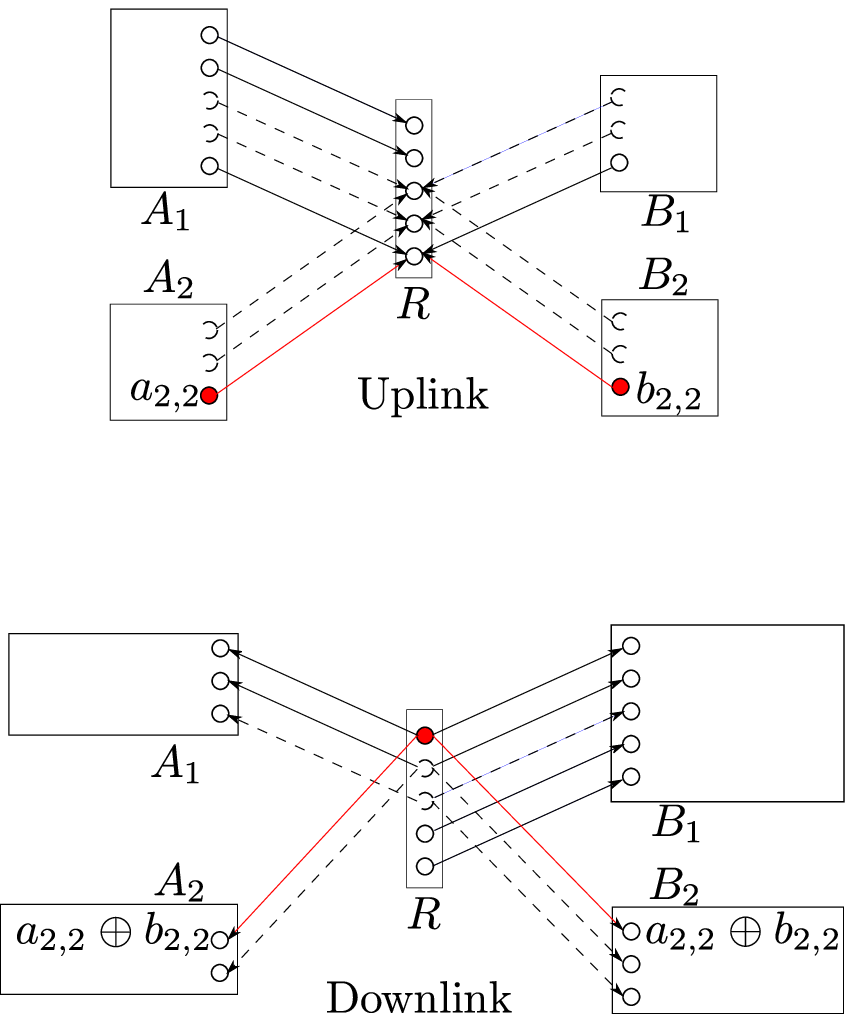}
}
\subfigure[Step 4]{\label{fig:inducExm1d}
       \includegraphics[scale=0.85]{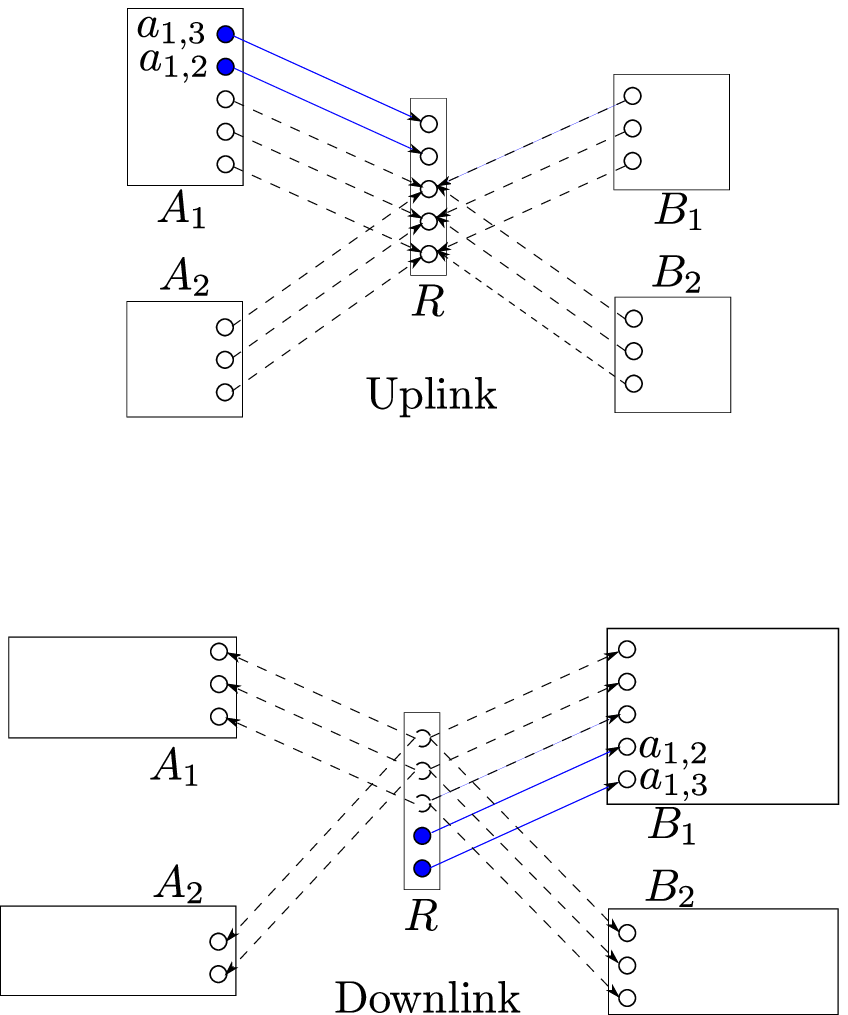}
}
     \caption{Illustration of the inductive algorithm introduced in Theorem \ref{thm:main}. \label{fig:inducExm1}}
\end{figure*}

\begin{figure}
     \centering
     \subfigure[Uplink]{
  \includegraphics[scale=0.3]{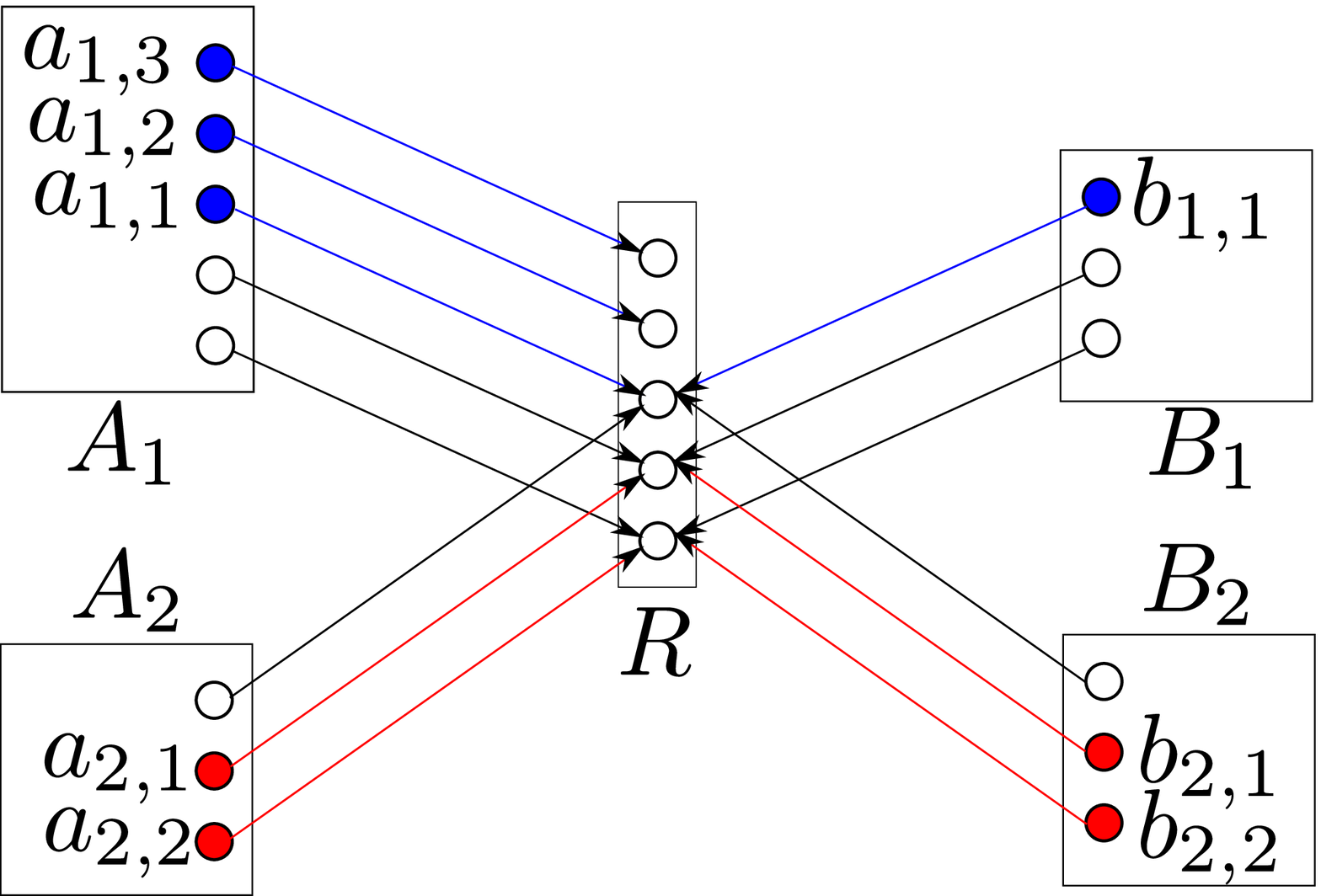}
}
\hfill
\subfigure[Downlink]{
       \includegraphics[scale=0.3]{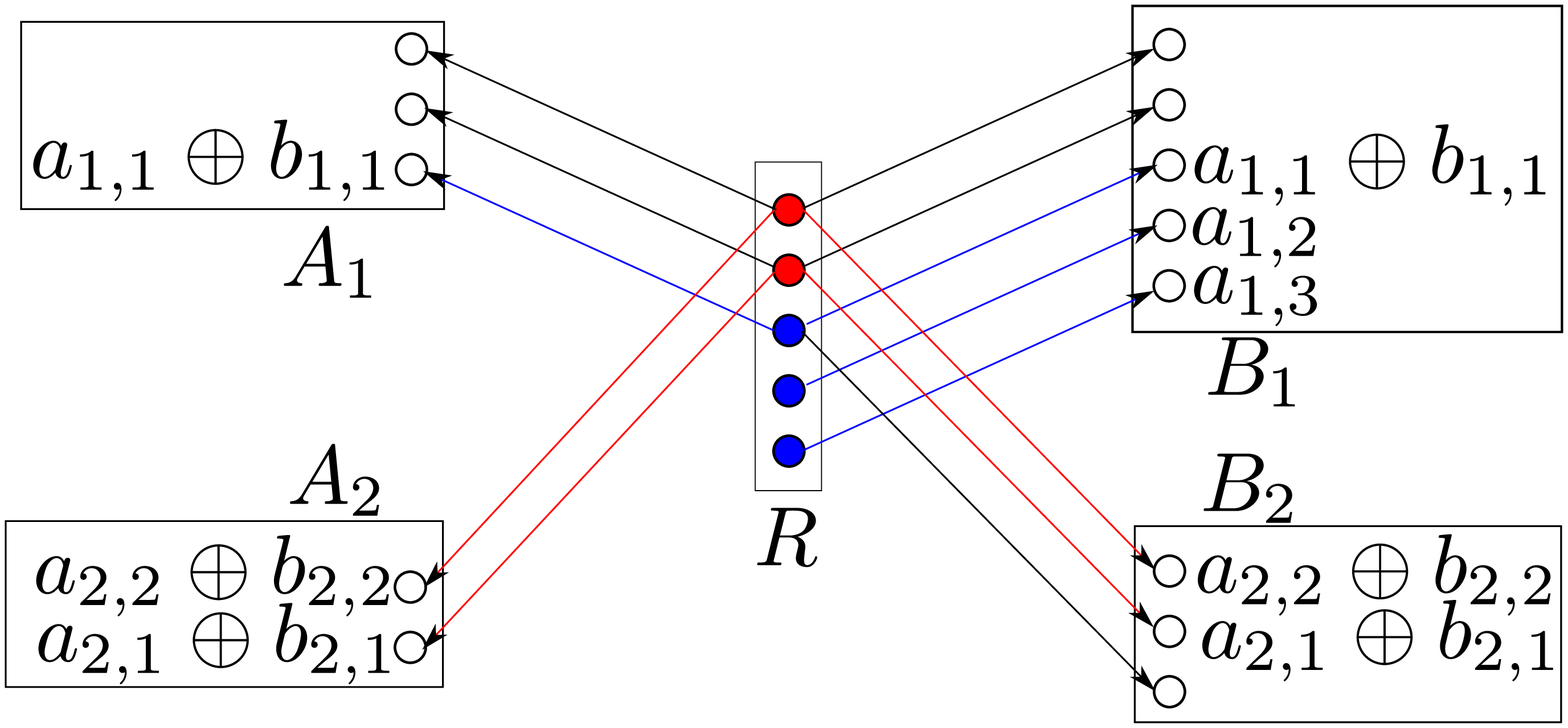}
}
     \caption{Illustration of the resulting divide-and-conquer strategy of the inductive algorithm.  \label{fig:inducExm1F}}
\end{figure}

\subsection{Remark}
\label{sec:Remark}
An interesting insight, which will prove to be useful in the transition to the Gaussian case is the following.
Although the scheme we provided  is an inductive way of level assignment and seems quite unstructured (in the sense that it assigns signal levels on a greedy  basis), one can actually say more about these assignments using certain observations. First of all, note that in this divide-and-conquer relaying strategy we have in general $2M$ types of signals that the relay might decode. Namely, $M$ types of signals that are made up of one bit from one user of a session, and $M$ types of signals that are the {XOR-combination} of bits from both users of the same pair. Each signal is received at the relay at some signal level, and is transmitted to one or  both of the end users at potentially another signal level in down-link.  Please refer to the example network of Figures \ref{fig:inducExm1} and \ref{fig:inducExm1F}, and observe that quite interestingly, in the final configuration of signal type-level assignments, all signals of the same type are concatenated together both in UL and DL. In other words they appear at concatenated signal levels. In general, one can serve all signals of the same type at once by choosing a pair with nonzero rates and serve them (one bit per user per signal level) until one of the rates is zero. For $M=2$ for example, assuming $R_{A_1}\geq R_{B_1}$ and $R_{A_2}\geq R_{B_2}$, instead of reducing ($R_{A_1}$,$R_{B_1}$,$R_{A_2}$,$R_{B_2}$) to ($R_{A_1}-1$,$R_{B_1}-1$,$R_{A_2}$,$R_{B_2}$) one can reduce it to ($R_{A_1}-R_{B_1}$,0,$R_{A_2}$,$R_{B_2}$) all at once and find a chunk of signal levels to afford them. Then the same thing can be done for the other pair. In the final configuration, all signals of the same type are in concatenation, which is also illustrated in Fig.~\ref{fig:OrthogSignLevelSpace}.
\begin{figure}
\begin{center}
  \includegraphics[scale=0.5]{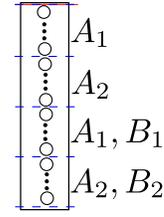}\\
  \caption{Allocating chunks of relay levels to signals of the same type with $R_{A_i}\geq R_{B_i}$.}\label{fig:OrthogSignLevelSpace}
  \end{center}
\end{figure}

In the following, we discuss more insights gained from the examination of the linear shift deterministic multi-pair bidirectional relay network that can be interpreted for the two-pair Gaussian relay network.

\section{Transition from the linear shift deterministic model to Gaussian model}~\label{sec:Insights}
The result of the deterministic network basically suggests that it is optimal to divide the signal-level space into subspaces and allocate these orthogonal subspaces to the different sessions, i.e., pairs. Furthermore, it suggests to split the message of the stronger user of each pair (the user with stronger uplink channel, cf. section~\ref{sec:Remark}) into two parts:
\begin{enumerate}
\item the first part has the same rate as the rate as the message from the weak user and it is transmitted such that at the relay it is received with the same power as that of the signal from the weak user,
\item the second part has the remaining rate and is transmitted at some higher signal levels.
\end{enumerate}

{Hence, for $M=2$, the relay receives four chunks of bits at different signal levels. Namely, the bits that are created from the XOR-combination of the signals of both users of each pair and, bits from the signals of the strong transmitter of each pair. The relay then forwards these signals at non-overlapping signal levels to the end users so that the XOR-combination of the signals is received by both users, whereas the other bits (from the strong transmitters) are received by the corresponding end users only. This way each user can easily decode its message having the received XOR-combinations, received bits and its own transmitted message.}

{To apply a similar strategy to Gaussian networks, one will face three immediate challenges. The first one is the effect of the additive noise which is inevitably  present in the Gaussian channels. The second issue is that the received signals at the relay can not be fully orthogonalized (i.e., we face interference between low power and high power signals). The third complication is in decoding the superposition of signals (and not the individual signals) which should take place at the relay.}

{We propose the following solutions to overcome these difficulties. The noise issue can be simply resolved by using an appropriate block symbol coding scheme. The orthogonalization problem is inevitable, however a compensation in the capacity region allows for interference tolerance. In other words, rather than showing the cut-set upper-bound is tight, we show that the cut-set upper-bound is achievable to within a constant. Finally, using an appropriate lattice code, the third challenge is resolvable, too. In a lattice structure, the superposition of every two codewords is also a lattice codeword and  therefore can be decoded at the relay~\cite{BaikChung,NaryananLatticeBiDi}. These will be addressed in the sections that follow.}

\section{Two-Pair Bidirectional Gaussian Relay Network}~\label{sec:2Pair2WayGaussRelay}
In this section we analyze the capacity region of the two-pair bidirectional Gaussian relay network shown in Figure~\ref{fig:System}.
In particular, we show that the transmission scheme which was developed in the previous section achieves within $3$ bits/sec/Hz per user of the cut-set upper-bound on the capacity region.

Thus, we consider two single-antenna transceiver pairs, $(A_1,B_1)$ and $(A_2,B_2)$,  communicating to each other by exploiting a relay $R$. The relay is operating in the full-duplex mode, i.e., it can listen and transmit at the same time. We use a complex AWGN channel model for all channels in this network.
\begin{figure}
\begin{center}
    \subfigure[Uplink]{\label{fig:FirstHop}
    \psfrag{A}{\hspace{-0.1cm}{\footnotesize{$A_1$}}}
\psfrag{R}{$\text{R}$}
\psfrag{h1}{\footnotesize{$h_{A_1R}$}}
\psfrag{h2}{\footnotesize{$h_{B_1R}$}}
\psfrag{h3}{\footnotesize{$h_{A_2R}$}}
\psfrag{h4}{\footnotesize{$h_{B_2R}$}}
\psfrag{B}{\hspace{-0.15cm}{\footnotesize{$B_1$}}}
\psfrag{C}{\hspace{-0.15cm}{\footnotesize{$A_2$}}}
\psfrag{D}{\hspace{-0.15cm}{\footnotesize{$B_2$}}}
    \includegraphics[width=4cm]{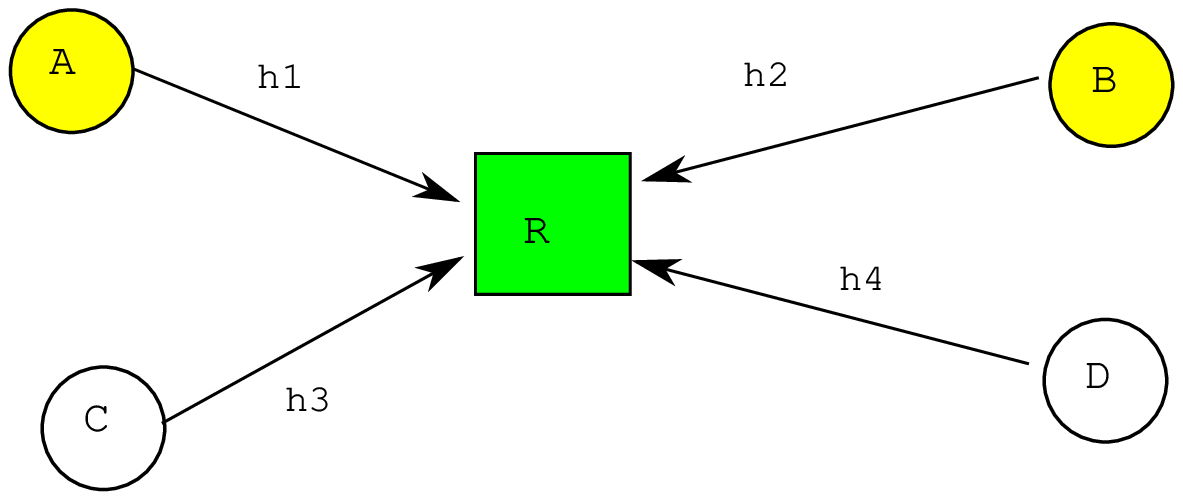}}
    \subfigure[Downlink]{\label{fig:SecHop}
    \psfrag{A}{\hspace{-0.1cm}{\footnotesize{$A_1$}}}
\psfrag{R}{$\text{R}$}
\psfrag{B}{\hspace{-0.15cm}{\footnotesize{$B_1$}}}
\psfrag{C}{\hspace{-0.15cm}{\footnotesize{$A_2$}}}
\psfrag{D}{\hspace{-0.15cm}{\footnotesize{$B_2$}}}
\psfrag{h4}{\footnotesize{$h_{RA_1}$}}
\psfrag{h5}{\footnotesize{$h_{RB_1}$}}
\psfrag{h6}{\footnotesize{$h_{RA_2}$}}
\psfrag{h7}{\footnotesize{$h_{RB_2}$}}
    \includegraphics[width=4cm]{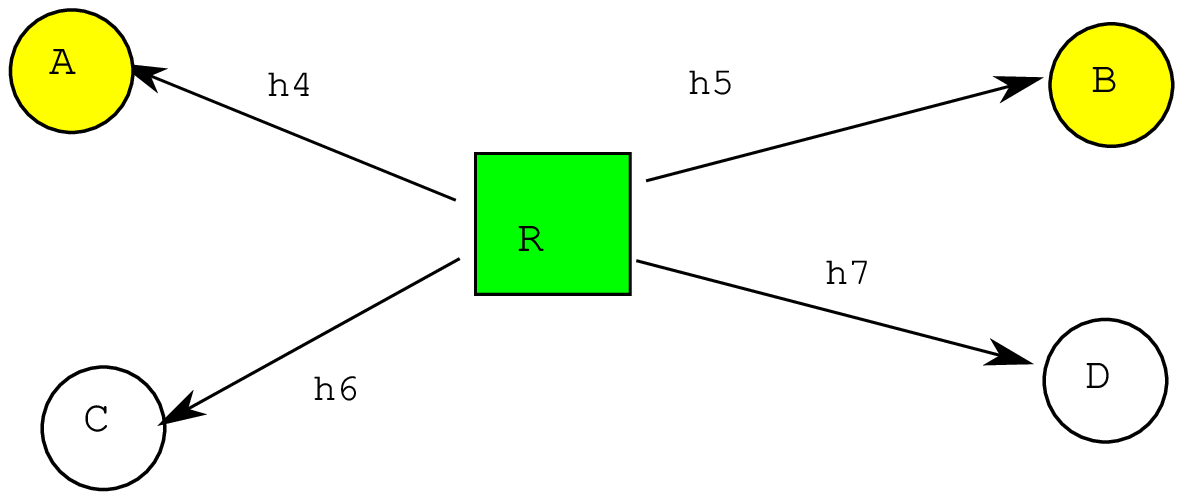}}
\end{center}
    \caption{Two-Pair bidirectional full-duplex relay network}
    \label{fig:System}
\end{figure}
Hence, the received signals at the nodes are given by
\begin{align}
y_R &=h_{A_1R}x_{A_1}+h_{B_1R}x_{B_1}+h_{A_2R}x_{A_2}+h_{B_2R}x_{B_2}+z_R,\nonumber\\
y_{A_i}& =h_{RA_i}x_R+z_{A_i}\;, \quad
y_{B_i} =h_{RB_i}x_R+z_{B_i}, \qquad i=1,2\nonumber
\end{align}
where $x_{A_1}$, $x_{B_1}$, $x_{A_2}$, $x_{B_2}$, and $x_R$ are the signals transmitted from nodes $A_1$, $B_1$, $A_2$, $B_2$, and $R$, respectively. The transmit power constraint is {$\mathbb{E}\left[|x_{A_i}|^2\right]=\mathbb{E}\left[|x_{B_i}|^2\right]=\mathbb{E}\left[|x_{R}|^2\right]\leq P$} and the noises $z_{A_1}$, $z_{B_1}$, $z_{A_2}$, $z_{B_2}$, and $z_R$ are all distributed as $\mathcal{CN}(0,1)$. Note that the uplink channels gains ($h_{A_iR}$ and $h_{B_iR}$) are not necessarily equal to the down-link channel gains ($h_{RA_i}$ and $h_{RB_i}$), i.e., channel reciprocity is not assumed. For each pair ($A_i$,$B_i$), $R_{A_i}$ is the rate at which $A_i$ transmits data to $B_i$ and $R_{B_i}$ is the transmission rate of $B_i$ to $A_i$.

We now begin by describing the cut-set upper-bound~\cite{CoverThomas},  denoted by $\bar{\mathcal{C}}_{\mathsf{cs}}$, on the capacity region of this network:
\begin{align}
\overline{\mathcal{C}}_{\mathsf{cs}}  = \Big\{ (R_{A_1},R_{B_1},& R_{A_2},R_{B_2}) \in \mathbb{R}_+^4 : \label{eq:CutSetRegion} \\
R_{A_i} \leq  \min & \lp  C\lp |h_{A_iR}|^2P\rp, C\lp |h_{RB_i}|^2P\rp \rp  \label{eq:GenGaussCut-set bound 1} \\
R_{B_i} \leq  \min & \lp  C\lp |h_{B_iR}|^2P\rp , C\lp|h_{RA_i}|^2P\rp \rp  \label{eq:GenGaussCut-set bound 2}\\
R_{A_1}+R_{A_2}  \leq  \min \Big( & C\lp \lp|h_{A_1R}|+|h_{A_2R}|\rp^2P\rp , \nonumber \\
&
C \lp \lp |h_{RB_1}|^2+|h_{RB_2}|^2\rp P \rp \Big)  \label{eq:GenGaussCut-set bound 5}  \\
R_{B_1}+R_{B_2}  \leq  \min \Big(& C\lp \lp|h_{B_1R}|+|h_{B_2R}|\rp^2P\rp , \nonumber \\
&
C \lp \lp |h_{RA_1}|^2+|h_{RA_2}|^2\rp P\rp \Big)  \label{eq:GenGaussCut-set bound 6} \\
R_{A_1}+R_{B_2}  \leq  \min \Big(& C\lp \lp|h_{A_1R}|+|h_{B_2R}|\rp^2P\rp , \nonumber \\
&
C \lp \lp |h_{RB_1}|^2+|h_{RA_2}|^2\rp P\rp   \Big)  \label{eq:GenGaussCut-set bound 7}
\end{align}
\begin{align}
R_{B_1}+R_{A_2}  \leq  \min \Big(& C\lp \lp|h_{B_1R}|+|h_{A_2R}|\rp^2P\rp , \nonumber\\
&
 C \lp \lp |h_{RA_1}|^2+|h_{RB_2}|^2\rp P\rp \Big)  \label{eq:GenGaussCut-set bound 8}\Big\},
\end{align}
where $C(x)=\log\left(1+x\right)$.
The terms in~\eqref{eq:GaussCut-set bound 1}-~\eqref{eq:GaussCut-set bound 8} correspond to the cuts labeled from $1$ to $8$ in Fig.~\ref{fig:Cuts}.

We also define a ``restricted cut-set bound'', denoted by $\bar{\mathcal{C}}$, to be:
\begin{align}
\overline{\mathcal{C}}  = \Big\{ (R_{A_1},R_{B_1}, & R_{A_2},R_{B_2}) \in \mathbb{R}_+^4 : \label{eq:RestCutSetRegion}  \\
R_{A_i} \leq  \min & \lp C\lp |h_{A_iR}|^2P\rp, C\lp |h_{RB_i}|^2P\rp \rp  \label{eq:GaussCut-set bound 1} \\
R_{B_i} \leq  \min & \lp C\lp |h_{B_iR}|^2P\rp , C\lp|h_{RA_i}|^2P\rp \rp  \label{eq:GaussCut-set bound 2}\\
R_{A_1}+R_{A_2}  \leq  \min \Big( & C\lp \lp|h_{A_1R}|^2+|h_{A_2R}|^2\rp P\rp , \nonumber\\
&
C \lp \max \lp |h_{RB_1}|^2,|h_{RB_2}|^2\rp P \rp \Big)  \label{eq:GaussCut-set bound 5}\\
R_{B_1}+R_{B_2}  \leq  \min \Big(& C\lp \lp|h_{B_1R}|^2+|h_{B_2R}|^2\rp P\rp , \nonumber \\
&
C \lp \max \lp |h_{RA_1}|^2,|h_{RA_2}|^2\rp P\rp \Big)  \label{eq:GaussCut-set bound 6}\\
R_{A_1}+R_{B_2}  \leq  \min \Big(& C\lp \lp|h_{A_1R}|^2+|h_{B_2R}|^2\rp P\rp , \nonumber\\
&
C \lp \max \lp |h_{RB_1}|^2,|h_{RA_2}|^2\rp P\rp   \Big)  \label{eq:GaussCut-set bound 7}\\
R_{B_1}+R_{A_2}  \leq  \min \Big( & C\lp \lp|h_{B_1R}|^2+|h_{A_2R}|^2\rp P\rp , \label{eq:GaussCut-set bound 8}\\
&
 C \lp \max \lp |h_{RA_1}|^2,|h_{RB_2}|^2\rp P\rp \Big)  \nonumber \Big\},
\end{align}

In the next lemma, we show that the gap between the cut-set bound and the restricted cut-set bound is at-most 1 bit/sec/Hz per user.
\begin{lemma}\label{lem:Extrabit}
The cut-set upper bound~ in~\eqref{eq:CutSetRegion} is within 1 bit/sec/Hz per user of the restricted cut-set upper bound in~\eqref{eq:RestCutSetRegion}.
\end{lemma}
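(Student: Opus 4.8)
The plan is to prove the inclusion in the explicit form: if $\mathbf{R}=(R_{A_1},R_{B_1},R_{A_2},R_{B_2})\in\overline{\mathcal{C}}_{\mathsf{cs}}$, then the tuple $\mathbf{R}'=\big((R_{A_1}-1)^+,(R_{B_1}-1)^+,(R_{A_2}-1)^+,(R_{B_2}-1)^+\big)$ lies in $\overline{\mathcal{C}}$. Since $\mathbf{R}'\le\mathbf{R}$ componentwise, this is precisely the assertion that $\overline{\mathcal{C}}_{\mathsf{cs}}$ is within one bit/sec/Hz per user of $\overline{\mathcal{C}}$, so it suffices to verify each defining inequality of $\overline{\mathcal{C}}$ for $\mathbf{R}'$.

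First I would isolate the two elementary facts that make the regions close: for all nonnegative reals $a,b,c,d$, with $C(x)=\log(1+x)$, (a)~$C\big((\sqrt a+\sqrt b)^2\big)\le C(a+b)+1$, which follows from $(\sqrt a+\sqrt b)^2=a+b+2\sqrt{ab}\le 2(a+b)$ and hence $1+(\sqrt a+\sqrt b)^2\le 2(1+a+b)$; and (b)~$C(c+d)\le C(\max(c,d))+1$, which follows from $c+d\le 2\max(c,d)$ and hence $1+c+d\le 2(1+\max(c,d))$. Fact (a) bounds the gap between the coherent uplink term of $\overline{\mathcal{C}}_{\mathsf{cs}}$ (e.g.\ $C((|h_{A_1R}|+|h_{A_2R}|)^2P)$ in \eqref{eq:GenGaussCut-set bound 5}) and the non-coherent one of $\overline{\mathcal{C}}$ (the corresponding term of \eqref{eq:GaussCut-set bound 5}); fact (b) bounds the gap between a sum of received downlink powers and their maximum.

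Next I would check the constraints family by family. The single-user bounds \eqref{eq:GaussCut-set bound 1}--\eqref{eq:GaussCut-set bound 2} are identical in the two regions and $\mathbf{R}'\le\mathbf{R}$, so they hold with nothing to prove. For a pairwise-sum constraint, say the bound on $R'_{A_1}+R'_{A_2}$, I would split into cases. If $R_{A_1},R_{A_2}\ge 1$, then $R'_{A_1}+R'_{A_2}=R_{A_1}+R_{A_2}-2$; applying \eqref{eq:GenGaussCut-set bound 5} and then (a) on the uplink term and (b) on the downlink term yields both terms of \eqref{eq:GaussCut-set bound 5}, in fact with a full bit to spare. If one of the two, say $R_{A_1}$, is below $1$, then $R'_{A_1}=0$ and $R'_{A_1}+R'_{A_2}=(R_{A_2}-1)^+\le R_{A_2}$, and the single-user bound \eqref{eq:GenGaussCut-set bound 1} for $i=2$ together with monotonicity of $C$ (using $|h_{A_2R}|^2\le|h_{A_1R}|^2+|h_{A_2R}|^2$ and $|h_{RB_2}|^2\le\max(|h_{RB_1}|^2,|h_{RB_2}|^2)$) already dominates both terms of \eqref{eq:GaussCut-set bound 5}. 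If both are below $1$, the sum is $0$. The remaining three sum constraints \eqref{eq:GaussCut-set bound 6}--\eqref{eq:GaussCut-set bound 8} follow verbatim under the obvious relabeling of the four nodes.

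I do not expect a real obstacle here: the entire lemma rests on inequalities (a), (b) and monotonicity of $\log$. The only point requiring care — and the reason the sum constraints need the short case split instead of simply subtracting $2$ from each pairwise sum — is the truncation $(\cdot)^+$ at zero, since $(x-1)^++(y-1)^+$ need not be bounded by $(x+y-2)^+$; routing the small-rate case through the single-user bound disposes of this cleanly.
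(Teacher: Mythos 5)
Your proof is correct and uses exactly the two elementary inequalities the paper uses, namely $(\sqrt a+\sqrt b)^2\le 2(a+b)$ for the uplink term and $c+d\le 2\max(c,d)$ for the downlink term, to show each pairwise-sum RHS in $\overline{\mathcal{C}}_{\mathsf{cs}}$ exceeds its counterpart in $\overline{\mathcal{C}}$ by at most one bit. The only addition is your explicit case split to handle the truncation $(\cdot)^+$ when a rate is below $1$, a point the paper's proof leaves implicit (and which is immaterial in the paper's use of the lemma, where rates are assumed at least $2$).
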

\begin{proof}
Consider the first expressions in~\eqref{eq:GenGaussCut-set bound 5} and ~\eqref{eq:GaussCut-set bound 5}.
It holds that
\begin{align}
 C & \lp \lp|h_{A_1R}|+|h_{A_2R}|\rp^2P\rp \nonumber \\
& \leq   C\lp \lp|h_{A_1R}|^2+2|h_{A_1R}||h_{A_2R}|+|h_{A_2R}|^2\rp P \rp \nonumber \\
&   \stackrel{(a)}{\leq}  C\lp \lp 2|h_{A_1R}|^2+2|h_{A_2R}|^2\rp P \rp \nonumber \\
 & \leq  C\lp \lp|h_{A_1R}|^2+|h_{A_2R}|^2\rp P \rp +1, \nonumber
\end{align}
where $(a)$ follows since $\lp |h_{A_1R}|-|h_{A_2R}| \rp^2 \geq 0$. Thus, the gap between the first expressions in~\eqref{eq:GenGaussCut-set bound 5} and ~\eqref{eq:GaussCut-set bound 5} is at most $1$ bit/sec/Hz.
Similarly, for the second expressions in~\eqref{eq:GenGaussCut-set bound 5} and ~\eqref{eq:GaussCut-set bound 5}, it holds that
\begin{align}
C & \lp \lp |h_{RB_1}|^2+|h_{RB_2}|^2\rp P \rp \nonumber \\
& \leq  C \lp 2 \max \lp |h_{RB_1}|^2,|h_{RB_2}|^2\rp P \rp  \nonumber \\
& \leq C \lp \max \lp |h_{RB_1}|^2,|h_{RB_2}|^2\rp P \rp +1 \nonumber
\end{align}
and thus the gap between he second expressions in~\eqref{eq:GenGaussCut-set bound 5} and~\eqref{eq:GaussCut-set bound 5} is at most $1$ bit/sec/Hz.
Following the same procedure for the remaining sum rate terms in~\eqref{eq:CutSetRegion} and~\eqref{eq:RestCutSetRegion} completes the proof.
\end{proof}
In the remainder of the paper, we only consider the restricted cut-set upper bound. This is motivated as follows. The structure of the expressions in  in~\eqref{eq:RestCutSetRegion} resemble the rate expressions of the achievable scheme which is described in the following. Thus, the gap analysis becomes very convenient and by Lemma~\ref{lem:Extrabit} we are assured that we loose at most one additional bit/sec/Hz in the gap analysis to go from the restricted cut-set bound to the actual cut-set bound.

Next, we define the up-link and down-link cut-set regions. The up-link cut-set region, $\mathcal{C}_u$, is the set of rates satisfying equations (\ref{eq:GaussCut-set bound 1})-(\ref{eq:GaussCut-set bound 8}) when the down-link channel gains are assumed infinity. This means that the only restricting factors in determining the capacity regions are assumed to be the up-link channel gains. Likewise, the down-link cut-set region, $\mathcal{C}_d$,  is the set of rates satisfying (\ref{eq:GaussCut-set bound 1})-(\ref{eq:GaussCut-set bound 8}) in which the up-link channel gains are set to infinity. Note that
$\bar{\mathcal{C}}=\mathcal{C}_d \cap \mathcal{C}_u$.

We say that a 4-tuple $(R_{A_1},R_{B_1},R_{A_2},R_{B_2})$ is achievable if simultaneously $A_i$ can communicate to $B_i$ at rate $R_{A_i}$ and $B_i$ can communicate to $A_i$ at rate $R_{B_i}$ with arbitrary small error probability. The union of all achievable rate tuples is defined as the capacity region. We are now ready to state our main result.

\begin{theorem}\label{theo:Main}
The capacity region of the two pair full-duplex bidirectional relay network is within 2 bits/sec/Hz per user of its restricted cut-set upper-bound described in~\eqref{eq:GaussCut-set bound 1}-\eqref{eq:GaussCut-set bound 8}. Or, more precisely, if
\beq (R_{A_1},R_{B_1},R_{A_2},R_{B_2}) \in \overline{\mathcal{C}} \nonumber \eeq
and $R_{A_i}, R_{B_i} \geq 2$ for $i=1,2$, then the rate tuple $(R_{A_1}-2,R_{B_1}-2,R_{A_2}-2,R_{B_2}-2)$ is achievable.
\end{theorem}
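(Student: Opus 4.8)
The plan is to turn the divide-and-conquer strategy of Theorem~\ref{thm:main} into a lattice/Gaussian coding scheme and to track the constants hop by hop. Since $\overline{\mathcal{C}}=\mathcal{C}_u\cap\mathcal{C}_d$, it suffices to prove two statements, each with a $2$-bit back-off: (i) if $\mathbf{R}\in\mathcal{C}_u$ and $R_{A_i},R_{B_i}\ge 2$, there is an uplink code after which the relay knows, for each pair $i$, a network-coded lattice codeword $u_i$ of rate $\min(R_{A_i},R_{B_i})-2$ together with a Gaussian codeword $g_i$ of rate $|R_{A_i}-R_{B_i}|$; and (ii) if $\mathbf{R}\in\mathcal{C}_d$ and $R_{A_i},R_{B_i}\ge 2$, there is a downlink code after which, from $(u_1,g_1,u_2,g_2)$, node $A_i$ recovers $u_i$ and node $B_i$ recovers $u_i$ and $g_i$. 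A block-Markov argument then chains the two hops over many blocks (the relay, being full-duplex, decodes block $b$ of the uplink while forwarding block $b-1$ of the downlink, with vanishing rate loss): applying (i) and (ii) to $\mathbf{R}\in\overline{\mathcal{C}}$ with all components $\ge 2$, node $A_i$ recovers $\lambda_{B_i}$ (rate $R_{B_i}-2$) and node $B_i$ recovers the common part of $A_i$'s message plus $g_i$ (total rate $R_{A_i}-2$), so $(R_{A_1}-2,R_{B_1}-2,R_{A_2}-2,R_{B_2}-2)$ is achievable.

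For the code, relabel inside each pair so that $A_i$ is the user with the larger message rate, $R_{A_i}\ge R_{B_i}$. User $B_i$ transmits one lattice codeword $\lambda_{B_i}$ of rate $R_{B_i}-2$; user $A_i$ transmits the sum of a lattice codeword $\lambda_{A_i}$ of the \emph{same} rate drawn from the \emph{same} nested-lattice codebook --- so that $u_i:=\lambda_{A_i}\oplus\lambda_{B_i}$ is again a codeword of that codebook --- and an independent Gaussian codeword $g_i$ of rate $R_{A_i}-R_{B_i}$. Using channel knowledge, $A_i$ and $B_i$ pre-rotate and power-scale the lattice parts so that they reach the relay co-phased at a common power $p_i\le\min\lp|h_{A_iR}|^2,|h_{B_iR}|^2\rp P$, chosen to leave residual power at $A_i$ for $g_i$; the feasibility of such a choice (in particular, that the residual power is enough to carry $R_{A_i}-R_{B_i}$ bits) follows from \eqref{eq:GaussCut-set bound 1}--\eqref{eq:GaussCut-set bound 2} together with the $2$-bit back-off. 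The relay then sees a virtual four-stream Gaussian multiple-access channel with streams $u_1,g_1,u_2,g_2$, which it decodes by successive interference cancellation; recovering a lattice-sum stream $u_i$ costs the usual compute-and-forward penalty of $O(1)$ bits~\cite{NaryananLatticeBiDi,NamChungLeeJournal}. In the downlink the relay re-encodes $u_1,g_1,u_2,g_2$ as fresh Gaussian codewords and broadcasts a superposition of them ordered by power; each destination peels off, by successive cancellation, the streams it needs ($A_i$ needs $u_i$; $B_i$ needs $u_i$ and $g_i$) and then removes its own message: $A_i$ subtracts $\lambda_{A_i}$ from $u_i$ to obtain $\lambda_{B_i}$, and $B_i$ subtracts $\lambda_{B_i}$ from $u_i$ to read off the common part of $A_i$'s message and decodes $g_i$ for the remainder.

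The crux, and the main obstacle, is the rate bookkeeping for (i) and (ii): one must show that \emph{some} cancellation order achieves the required rates within the $2$-bit margin. For the uplink, ordering the four virtual streams by received power and cancelling successively gives rates $(r_{u_1},r_{g_1},r_{u_2},r_{g_2})$ obeying the polymatroidal constraints $\sum_{k\in S}r_k\le C\!\lp\sum_{k\in S}p_k\rp-O(1)$ for every $S$, and one checks that $r_{u_i}=R_{B_i}-2$, $r_{g_i}=R_{A_i}-R_{B_i}$ meets all of them, because after the $2$-bit relaxation each such inequality is dominated by one of the uplink cut constraints \eqref{eq:GaussCut-set bound 1}--\eqref{eq:GaussCut-set bound 8}; this is the Gaussian counterpart of the greedy level assignment in Theorem~\ref{thm:main} and requires a short case split on the order of $|h_{A_1R}|,|h_{B_1R}|,|h_{A_2R}|,|h_{B_2R}|$. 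For the downlink, the second hop is a Gaussian broadcast channel in which $u_i$ acts as a message common to $\{A_i,B_i\}$; ordering the four receivers by channel strength and using superposition coding achieves its cut-set region --- including the ``$\max$''-shaped sum-rate bounds of \eqref{eq:GaussCut-set bound 5}--\eqref{eq:GaussCut-set bound 8}, which are precisely the degraded-broadcast signature --- to within $O(1)$, so the $2$-bit back-off again suffices. Collecting the $O(1)$ penalties of the two hops and tuning the power splits so each hop loses at most $2$ bits per user proves the theorem; composing with Lemma~\ref{lem:Extrabit} yields the $3$-bit gap to the true cut-set bound $\overline{\mathcal{C}}_{\mathsf{cs}}$ stated in the abstract.
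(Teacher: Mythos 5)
Your proposal follows essentially the same route as the paper: the same decomposition into an uplink lemma and a downlink lemma (the paper's Lemmas~\ref{lem:AchRegionUL} and~\ref{lem:AchRegionDL}), the same lattice-plus-Gaussian superposition at the sources with the lattice parts power-matched at the relay, successive cancellation at the relay and at the destinations with the relay re-encoding the four decoded streams as superposed Gaussian codewords, and the same reduction to a few channel-gain orderings via Lemma~\ref{lem:CaseReduction}. The only caveat is that the entire quantitative content of the theorem lives in the step you defer with ``one checks'' --- the explicit power assignments of Appendices~\ref{app:lemAchRegionUL} and~\ref{app:lemAchRegionDL} verifying that the SIC rate expressions, after the back-off, are implied by the restricted cut-set inequalities --- and your polymatroid shorthand for the uplink is not literally available, since the lattice-sum streams must be peeled off in a fixed order and are decoded at rate $\log(\mathrm{SNR})$ rather than $\log(1+\mathrm{SNR})$, although the $O(1)$ penalties you budget for do absorb this.
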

The rest of this section is devoted to proving this Theorem. First, we state the following lemma which helps us by limiting the number of rate configurations that we have to consider.

\begin{lemma}~\label{lem:CaseReduction}
Let $\mathbf{R} = (R_{A_1} , R_{B_1} , R_{A_2} , R_{B_2} )$ be a rate tuple in  the cut-set region $\overline{\mathcal{C}}$. Assume $R_{A_i} \geq R_{B_i}$, $i=1,2$. Then it is always possible to sufficiently reduce the transmit powers at the uplink and add extra noise to the received signals at the downlink, such that  new effective channel gains satisfy $|\tilde{h}_{A_iR}| \geq |\tilde{h}_{B_iR}|$ and $|\tilde{h}_{RB_i}| \geq |\tilde{h}_{RA_i}|$ for $i=1,2$, and $\mathbf{R}$ is still in the shrunk cut-set region.
\end{lemma}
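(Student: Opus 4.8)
The plan is to produce the degraded network by an explicit, pair-by-pair operation, and then verify that $\mathbf R$ still satisfies the restricted cut-set inequalities with the new gains by a short case analysis that leans on the hypothesis $R_{B_i}\le R_{A_i}$ together with two elementary $\min/\max$ identities. Concretely, for each $i\in\{1,2\}$: if $|h_{B_iR}|>|h_{A_iR}|$, I scale down the codebook of $B_i$ (equivalently, lower its transmit power) until its effective uplink gain equals $|h_{A_iR}|$, and otherwise leave it; symmetrically, if $|h_{RA_i}|>|h_{RB_i}|$, I inject independent $\mathcal{CN}(0,\sigma_i^2)$ noise at receiver $A_i$ with $\sigma_i^2$ chosen so that, after renormalizing the noise to unit variance, the effective downlink gain to $A_i$ equals $|h_{RB_i}|$, and otherwise leave it. This yields $|\tilde h_{A_iR}|=|h_{A_iR}|$, $|\tilde h_{B_iR}|=\min(|h_{A_iR}|,|h_{B_iR}|)$, $|\tilde h_{RB_i}|=|h_{RB_i}|$, $|\tilde h_{RA_i}|=\min(|h_{RA_i}|,|h_{RB_i}|)$, so $|\tilde h_{A_iR}|\ge|\tilde h_{B_iR}|$ and $|\tilde h_{RB_i}|\ge|\tilde h_{RA_i}|$ hold by construction.

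This is a legitimate ``shrinking'': any code for the degraded network also works in the original one, since the original transmitters can always use less power and the original receivers can add their own noise, so an achievability result proved under the extra ordering assumption transfers back with the same rate tuple. It remains only to check that $\mathbf R$ lies in the restricted cut-set region \eqref{eq:GaussCut-set bound 1}--\eqref{eq:GaussCut-set bound 8} computed with the $\tilde h$'s. The individual bounds \eqref{eq:GaussCut-set bound 1} involve only $h_{A_iR}$ and $h_{RB_i}$, which are unchanged, so they are automatic. For \eqref{eq:GaussCut-set bound 2}, in the only nontrivial case $|\tilde h_{B_iR}|=|h_{A_iR}|<|h_{B_iR}|$ one uses $R_{B_i}\le R_{A_i}\le C(|h_{A_iR}|^2P)$ from \eqref{eq:GaussCut-set bound 1}, and likewise for the downlink factor. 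The ``$A$-type'' sum bound \eqref{eq:GaussCut-set bound 5} involves only $h_{A_1R},h_{A_2R},h_{RB_1},h_{RB_2}$, all unchanged, so it too holds unchanged; crucially, it is this inequality that supplies the extra slack needed below.

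For the mixed bounds \eqref{eq:GaussCut-set bound 7}--\eqref{eq:GaussCut-set bound 8}, say for $R_{A_1}+R_{B_2}$, I combine the original \eqref{eq:GaussCut-set bound 7} with $R_{A_1}+R_{B_2}\le R_{A_1}+R_{A_2}\le(\text{RHS of }\eqref{eq:GaussCut-set bound 5})$ (componentwise in the $\min$), and use monotonicity of $C$ together with $a+\min(b,c)=\min(a+b,a+c)$ on the uplink factor and $\max(x,\min(y,z))=\min(\max(x,y),\max(x,z))$ on the downlink factor. The ``$B$-type'' bound \eqref{eq:GaussCut-set bound 6} is the most involved: I bound $R_{B_1}+R_{B_2}$ separately by each of the four original sum bounds \eqref{eq:GaussCut-set bound 5}--\eqref{eq:GaussCut-set bound 8}, replacing each occurrence of $R_{A_j}$ by the no-larger $R_{B_j}$, and then invoke the identities $\min_{x_1\in\{a_1,b_1\},\,x_2\in\{a_2,b_2\}}(x_1+x_2)=\min(a_1,b_1)+\min(a_2,b_2)$ for the uplink factors and $\min_{x_1\in\{a_1,b_1\},\,x_2\in\{a_2,b_2\}}\max(x_1,x_2)=\max(\min(a_1,b_1),\min(a_2,b_2))$ for the downlink factors; with $a_j=|h_{\cdot}|^2$ these evaluate exactly to $|\tilde h_{B_1R}|^2+|\tilde h_{B_2R}|^2$ and to $\max(|\tilde h_{RA_1}|^2,|\tilde h_{RA_2}|^2)$, which closes the argument.

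The main obstacle is this last step: the individual and $A$-type bounds survive trivially, but the $B$-type and mixed sum bounds genuinely tighten when the gains are reduced, so they cannot be read off directly and must be reconstructed from the full set of original cut-set inequalities plus $R_{B_i}\le R_{A_i}$, with the combinatorial content isolated in the two $\min/\max$ identities above. Everything else is routine bookkeeping.
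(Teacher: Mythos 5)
Your proposal is correct and rests on the same core mechanism as the paper's argument: weaken the channel gains to enforce the ordering (reduce $|h_{B_iR}|$ via power scaling, reduce $|h_{RA_i}|$ via added receiver noise), and then absorb the tightening of each cut-set inequality involving a weakened gain by invoking $R_{B_i}\le R_{A_i}$ together with the corresponding cut-set constraint that features $R_{A_i}$ in place of $R_{B_i}$ and the unweakened gain $|h_{A_iR}|$ (or $|h_{RB_i}|$) in place of the weakened one. The only real difference is procedural: the paper reduces one channel gain at a time (first $|h_{B_1R}|$, then by symmetry the others), so at each step at most one gain changes and the verification is immediate — each affected inequality is majorized componentwise by exactly one other inequality already in the system. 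You instead perform all four reductions at once, which forces you to distribute $\min$ inside sums and $\min$ inside $\max$ and then bound $R_{B_1}+R_{B_2}$ against all four original sum constraints; your $\min/\max$ identities are valid and the bookkeeping does close, but it is strictly more work than the sequential route the paper chose. Both proofs are correct; the paper's is marginally slicker, your presentation is more explicit about exactly which original cut-set inequalities are invoked for each new one.
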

\begin{proof}
See Appendix~\ref{app:lemCaseReduction}.
\end{proof}
 This lemma basically reduces the number of relevant channel gain orderings that we have to consider in order to prove Theorem~\ref{theo:Main}. Assume that the rate tuple that we want to show to be achievable (within 2 bits/sec/Hz per user) satisfies $R_{A_i} \geq R_{B_i}$ for $i=1,2$. By Lemma~\ref{lem:CaseReduction}, we can without loss of generality (wlog) assume that $|h_{A_iR}| \geq |h_{B_iR}|$ for $i=1,2$. We can also wlog assume that $|h_{A_1R}| \geq |h_{A_2R}|$ (otherwise we can re-label pair 1 and pair 2). Therefore, we only need to consider three different channel gain orderings for the uplink. Those three cases are shown in Fig.~\ref{fig:CaseI},~\ref{fig:CaseII} and~\ref{fig:CaseIII}. Similarly, we only need to consider three cases for the downlink.
 To prove Theorem \ref{theo:Main}, first we describe the encoding strategy at the transmission nodes. As mentioned earlier, the idea is that strong transmitters of each pair split their signals into a Gaussian codeword and a lattice codeword, while the weak user only transmits a lattice codeword. While stating this encoding strategy we leave the power allocation parameters unspecified. In other words, the power level at which the user breaks up its message into the superposition of Gaussian and a lattice codeword remains as parameters. In the next step we mention the decoding at the relay where the superposition of lattice points and the Gaussian codewords are decoded. Afterwards, the relay maps each of the four decoded codewords into a random Gaussian codeword, and broadcasts their weighted superposition to all users. The last step is the decoding at the nodes, where every receiver first decodes the undesired codewords that have larger weights than the desired codewords.
Thus, those codewords are decoded and successively canceled from the received signal one by one. Afterwards,  both the weak and the strong receivers of each pair decode the Gaussian codeword corresponding to the lattice codeword belonging to that pair. In addition to that, the strong receivers decode one more codeword. This codeword corresponds to the Gaussian codeword, which was received by the relay from their transmitting strong counterpart.  Eventually as a result of this scheme the rates that the users will successfully transmit will be a function of the power parameters that we set at the beginning. We will finally show that by choosing these parameters appropriately any rate tuple within $2$ bits/sec/Hz per user of the cut set is achievable.

\begin{figure}
\begin{center}
\subfigure[Cuts (I)]{
\label{fig:Cuts16}
\psfrag{A}{\hspace{-0.1cm}{\footnotesize{$A_1$}}}
\psfrag{R}{$\text{R}$}
\psfrag{B}{\hspace{-0.15cm}{\footnotesize{$B_1$}}}
\psfrag{C}{\hspace{-0.15cm}{\footnotesize{$A_2$}}}
\psfrag{D}{\hspace{-0.15cm}{\footnotesize{$B_2$}}}
\psfrag{1}{\hspace{-0.05cm}{\footnotesize{$1$}}}
\psfrag{2}{\hspace{-0.05cm}{\footnotesize{$2$}}}
\psfrag{3}{\hspace{-0.05cm}{\footnotesize{$5$}}}
\psfrag{4}{\hspace{-0.05cm}{\footnotesize{$6$}}}
\psfrag{5}{\hspace{-0.05cm}{\footnotesize{$3$}}}
\psfrag{6}{\hspace{-0.05cm}{\footnotesize{$4$}}}
    \includegraphics[scale=0.35]{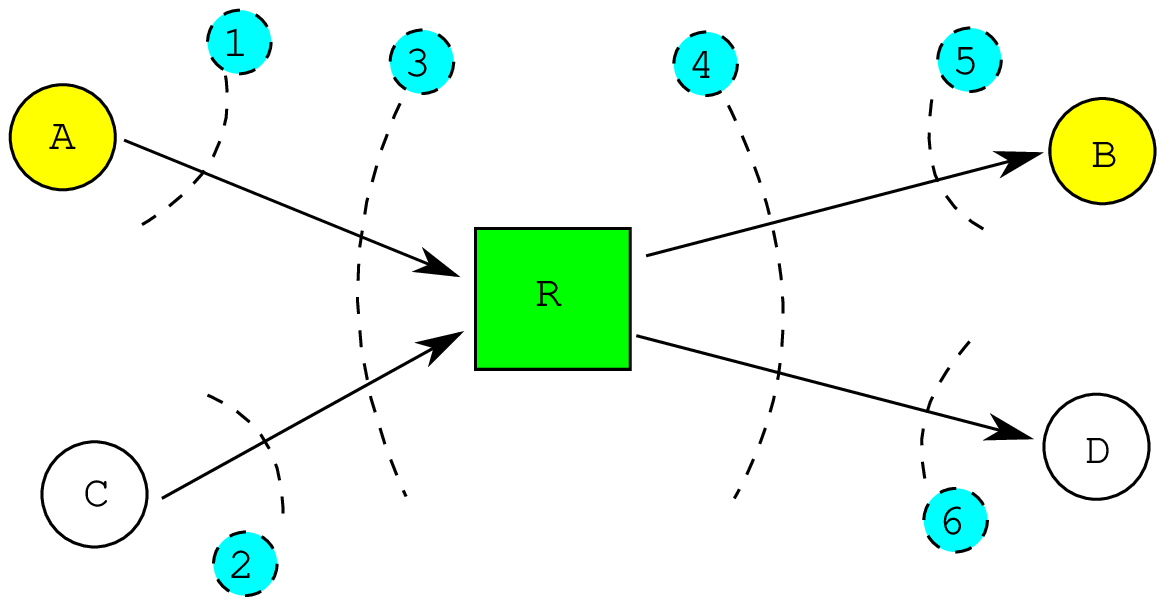}}
\subfigure[Cuts (II)]{
\psfrag{A}{\hspace{-0.1cm}{\footnotesize{$A_1$}}}
\psfrag{R}{$\text{R}$}
\psfrag{B}{\hspace{-0.15cm}{\footnotesize{$B_1$}}}
\psfrag{C}{\hspace{-0.15cm}{\footnotesize{$A_2$}}}
\psfrag{D}{\hspace{-0.15cm}{\footnotesize{$B_2$}}}
\psfrag{3}{\hspace{-0.05cm}{\footnotesize{$7$}}}
\psfrag{4}{\hspace{-0.05cm}{\footnotesize{$8$}}}
\label{fig:Cuts78}
    \includegraphics[scale=0.35]{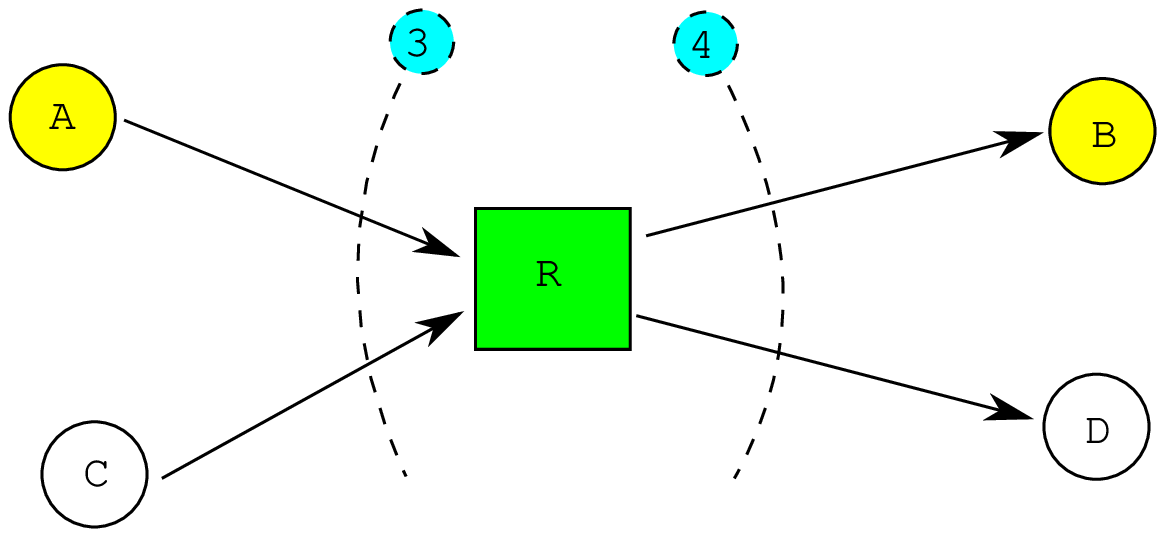}}
\end{center}
    \caption{Cuts for the upper-bound on the capacity region}
    \label{fig:Cuts}
\end{figure}
\begin{figure}
\begin{center}
\subfigure[Case I]{
\label{fig:CaseI}
\psfrag{R1}{{\tiny{$|h_{A_1R}|$}}}
\psfrag{R2}{{\tiny{$|h_{B_1R}|$}}}
\psfrag{R3}{{\tiny{$|h_{A_2R}|$}}}
\psfrag{R4}{{\tiny{$|h_{B_2R}|$}}}
\psfrag{Noise level}{{\tiny{$\text{Noise level}$}}}
\includegraphics[scale=0.75]{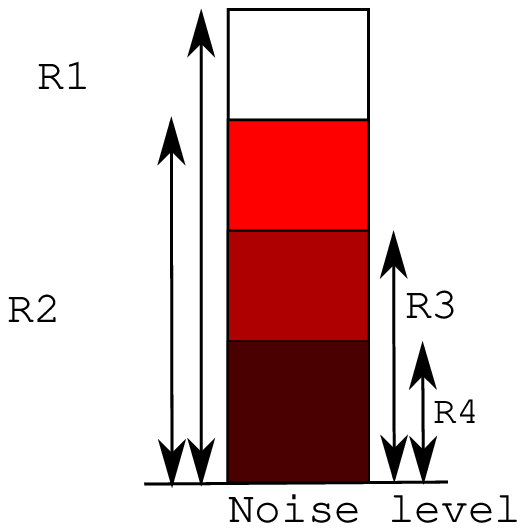}}
 \hspace*{0.2cm}   \subfigure[Case II]{
\label{fig:CaseII}
\psfrag{R1}{{\tiny{$|h_{A_1R}|$}}}
\psfrag{R2}{{\tiny{$|h_{B_1R}|$}}}
\psfrag{R3}{{\tiny{$|h_{A_2R}|$}}}
\psfrag{R4}{{\tiny{$|h_{B_2R}|$}}}
\psfrag{Noise level}{{\tiny{$\text{Noise level}$}}}
    \includegraphics[scale=0.75]{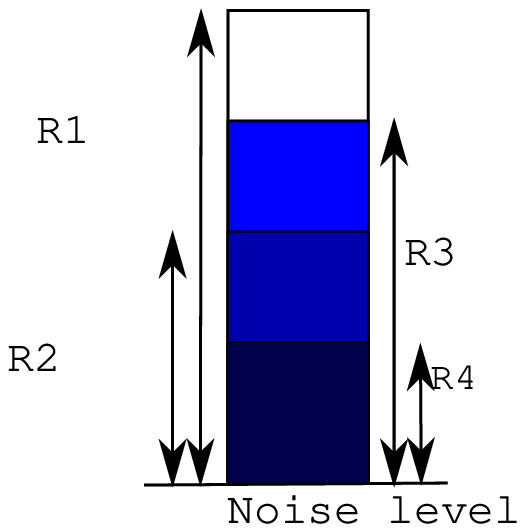}}
 \hspace*{0.2cm}    \subfigure[Case III ]{
\label{fig:CaseIII}
\psfrag{R1}{{\tiny{$|h_{A_1R}|$}}}
\psfrag{R2}{{\tiny{$|h_{B_1R}|$}}}
\psfrag{R3}{{\tiny{$|h_{A_2R}|$}}}
\psfrag{R4}{{\tiny{$|h_{B_2R}|$}}}
\psfrag{Noise level}{{\tiny{$\text{Noise level}$}}}
    \includegraphics[scale=0.75]{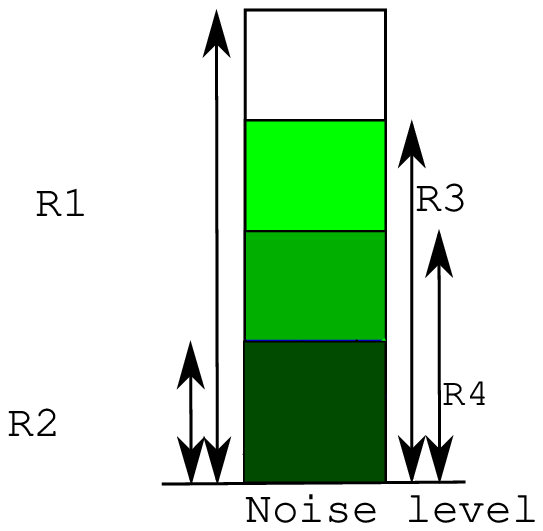}}
    \caption{Three relevant configurations for the uplink and their corresponding received signal at the relay. At the lowest level, all signals are superposed, while at the next level (medium shade), all but one signals are superposed. At the top level (white) only one signal remains.}
\label{fig:AllCases}
\end{center}
\end{figure}

\subsection{Lattice Coding}
In the following, some preliminaries and results on lattice coding are provided that we use in the remainder of the paper. We refer the interested reader to~\cite{Loelinger} for more details.

A lattice $\Lambda$ of dimension $n$ is described by
\begin{align}
    \mathbf{\Lambda}=\left\{ \mathbf{\lambda} =\mathbf{G}\mathbf{x}: \mathbf{x} \in \mathds{Z}^n\right\}, \nonumber
\end{align}
where $\mathbf{G}$ describes the lattice and is referred to as the generator matrix.
The fundamental Voronoi region of such a lattice $\mathbf{\Lambda}$ is denoted by $\Omega$.
Furthermore, the volume of $\Omega$, i.e., the reciprocical of the number of lattice point per unit volume, is denoted by $V$.
Now, let $p$ a positive integer  and $\mathds{Z}_p$ the set of integers modulo $p$.
Further, let $\bar{v}:\mathds{Z}^n \rightarrow \mathds{Z}_p^n$ be the componentwise modulo operation over integer vectors.
The lattices used in this paper are mod-$p$ lattices, i.e., of the form
\begin{align}
    \Lambda_c=\left\{v \in \mathds{Z}^n: \bar{v} \in C\right\},\nonumber
\end{align}
where $C$ be a linear $(n,k)$ code  over $\mathds{Z}_p$ and $p$ is prime\cite[Construction A]{Loelinger}.
Now, let $\mathcal{B}$ be a balanced set~\cite{Loelinger,SriramJafarShamai} of linear $(n,k)$ codes over $\mathds{Z}_p$ and let $\mathcal{L}_\mathcal{B}$ be the set of lattices denoted by
\begin{align}
    \mathcal{L}_\mathcal{B}=\left\{\Lambda_c: C \in \mathcal{B}\right\}.\nonumber
\end{align}
With this in mind, lets consider the following system model
\begin{align}
    y=x+z,\nonumber
\end{align}
where $y$ is the receive signal, $x$ is the transmit signal and $z$ is additive noise with zero mean and a variance $\sigma^2$.
It was shown in~\cite[Theorem 4]{Loelinger} that if the transmitted codeword is a lattice point, then there exists a lattice for that channel and the average probability of error with lattice decoding can be made arbitrarily small as the dimension of the lattice increases.
Similarly, it was shown in~\cite{Loelinger} that by using a codebook $(\mathbf{\Lambda} + \mathbf{s}) \cap \mathbf{S}$, where $\mathbf{s}$ is a shift and $\mathbf{S}$ describes the shaping gain, a rate $R$ with arbitrarily small probability of error can be achieved if
\begin{align}
    R\leq \log\left(\frac{P}{\sigma^2}\right).\nonumber
\end{align}
We will use this result in the remainder of the paper for the characterization of the rate region achievable with our proposed scheme.

\subsection{Encoding at the nodes}\label{subsec:encUL}
Wlog assume that $R_{A_iR}\geq R_{B_iR}$. By Lemma~\ref{lem:CaseReduction} this means that we can assume $|h_{A_iR}|\geq |h_{B_i R}|$ and $|h_{RB_i}|\geq |h_{RA_i}|$. Then, the transmit signals at the nodes are given by
\begin{align}
{x}_{A_i} &=\sqrt{\alpha_{A_i}^{(1)}}{x}_{A_i}^{(1)}+\sqrt{\alpha_{A_i}^{(2)}}{x}_{A_i}^{(2)}\;, \;
{x}_{B_i} =\sqrt{\alpha_{B_i}^{(2)}}{x}_{B_i}^{(2)} \quad i=1,2\nonumber\\
{x}_R &=\sum_{j=1}^{4}\sqrt{\alpha_R^{(j)}}{x}_R^{(j)} \label{eq:RelTransmitTwoPair}\qquad \text{with } \sum_{j}^{4} \alpha_R^{(j)}=1,
\end{align}
where ${x}_{A_i}^{(1)}$ and ${x}_R^{(j)}$ are codewords chosen from a random Gaussian codebook of size $2^{nR_{A_i}^{(1)}}$, $i=1,2$, and $2^{nR_R^{(j)}}$, for $j=1,\dots,4$,  respectively. $\mathbf{x}_{A_i}^{(2)}$ and  $\mathbf{x}_{B_i}^{(2)}$, $i=1,2$, are lattice coded~\cite{Loelinger} using lattice ensembles $\{\Lambda_{A_1^{(2)}},\Lambda_{A_1^{(2)}},\Lambda_{B_1^{(2)}},\Lambda_{B_2^{(2)}}\}$ giving a codebook of size $2^{nR_{A_i}^{(2)}}$ and $2^{nR_{B_i}^{(2)}}$ with $i=1,2$, respectively. We assume that the second moment per dimension of the fundamental Voronoi region~\cite{Loelinger} of each lattice is $\nicefrac{1}{2}$ which ensures satisfying the power constraint.
At nodes $A_i$ we have two messages ${m}_{A_i}^{(1)}$ and ${m}_{A_i}^{(2)}$ from dictionaries of size $2^{nR_{A_i}^{(1)}}$ and $2^{nR_{A_i}^{(2)}}$ that are mapped to $x_{A_i}^{(1)}$ and $x_{A_i}^{(2)}$, respectively.
In other words, the strong transmitter of each pair transmits a superposition of a lattice code and a random Gaussian code, while the weaker user only transmits a lattice code. Thus, the transmit signals of nodes $B_1$ and $B_2$ reduce to
\begin{align}
x_{B_1}& =\sqrt{\alpha_{B_1}^{(2)}}x_{B_1}^{(2)} \nonumber \\
x_{B_2} &= \sqrt{\alpha_{B_2}^{(2)}}x_{B_2}^{(2)} \nonumber.
\end{align}
For the nodes $A_1$ and $A_2$, we have a superposition code (cf.~\eqref{eq:RelTransmitTwoPair}). Note that
\begin{align}
t=x_{A_1}^{(2)}+x_{B_1}^{(2)}\; \text{ and }
f=x_{A_2}^{(2)}+x_{B_2}^{(2)} \nonumber,
\end{align}
where $t$ and $f$ are also lattice points due to the group structure of the lattice~\cite{NaryananLatticeBiDi}.

The power parameters (i.e., $\alpha_{A_i}$ and $\alpha_{B_i}$) are assigned such that the lattice codes of each pair arrive at the same power level, so that the relay can decode the sum codeword correctly. Thus we set,
\begin{align}\label{eq:HowtoChooseAlpha}
\alpha_{A_i}^{(2)}=\frac{|h_{B_iR}|^2}{|h_{A_iR}|^2}\alpha_{B_i}^{(2)}.
\end{align}
Furthermore, we should have $\alpha_{A_i}^{(1)}+\alpha_{A_i}^{(2)}\leq 1$ and $\alpha_{B_i}^{(2)}\leq 1$.

\subsection{Uplink: Decoding at the relay}
Recall that as discussed in Section~\ref{sec:Insights} and illustrated in Figure \ref{fig:AllCases} we have to analyze three cases only. Here, the analysis for the first case (cf. Fig.~\ref{fig:CaseI}) is given in detail. For the other cases, only the results are presented, since the other cases are similar and therefore omitted. However, along the presentation of the results, we also mention the differences should there be any.

\subsubsection{Case $|h_{A_1R}|\geq  |h_{B_1R}|\geq |h_{A_2R}|\geq |h_{B_2R}|$}~\label{sec:FirstCaseUplink}

The decoding order at the relay is as follows. First the relay decodes the Gaussian $x_{A_1}^{(1)}$, then the lattice point  $t$ from $A_1$ and $B_1$, followed by $x_{A_2}^{(1)}$ and finally the lattice point $f$ from $A_2$ and $B_2$.
We can show that for any choice of  $\alpha_{A_i}^{(j)}$ and $\alpha_{B_i}^{(2)}$, this can be done successfully as long as,

\begin{align}\label{eq:AchRatesUplink1}
& R_{A_1}^{(1)}\leq  \\ & C\left(\frac{|h_{A_1R}|^2\alpha_{A_1}^{(1)}P}{2\alpha_{B_1}^{(2)}|h_{B_1R}|^2P+\alpha_{A_2}^{(1)}|h_{A_2R}|^2P+2\alpha_{B_2}^{(2)}|h_{B_2R}|^2P+1}\right) \nonumber
\end{align}
\begin{align}
R_{A_1}^{(2)}, R_{B_1} &\leq \log\left(\frac{|h_{B_1R}|^2\alpha_{B_1}^{(2)}P}{\alpha_{A_2}^{(1)}|h_{A_2R}|^2P+2\alpha_{B_2}^{(2)} |h_{B_2R}|^2P+1}\right)^+\label{eq:AchRatesUplink2}\\
 R_{A_2}^{(2)}, R_{B_2} &\leq  \left(\log \left(\alpha_{B_2}^{(2)}|h_{B_2R}|^2P\right)\right)^+, \label{eq:AchRatesUplink3}\\
  R_{A_2}^{(1)}& \leq  C \left(\frac{|h_{A_2R}|^2\alpha_{A_2}^{(1)}P}
                  {2|h_{B_2R}|^2\alpha_{B_2}^{(2)}P+1}\right) \nonumber.
\end{align}
Details of the derivations are given in Appendix~\ref{sec:AppUplink}.

\subsubsection{Case $|h_{A_1R}|\geq |h_{A_2R}|\geq  |h_{B_1R}|\geq |h_{B_2R}|$}~\label{sec:SecCaseUplink}
The decoding order at the relay is as follows. First the relay decodes the Gaussian $x_{A_1}^{(1)}$ and $x_{A_2}^{(1)}$ simultaneously by treating the remaining signals as noise. Afterwards, the lattice point  $t$ from $A_1$ and $B_1$ is decoded, followed by the lattice point $f$ from $A_2$ and $B_2$.
We can show that for any choice of  $\alpha_{A_i}^{(j)}$ and $\alpha_{B_i}^{(2)}$, this can be done successfully as long as,

\begin{align}\label{eq:AchRatesUplink1Case2MAC1}
R_{A_1}^{(1)}\leq C\left(\frac{|h_{A_1R}|^2\alpha_{A_1}^{(1)}P}{2\alpha_{B_1}^{(2)}|h_{B_1R}|^2P+2\alpha_{B_2}^{(2)}|h_{B_2R}|^2P+1}\right)
\end{align}
\begin{align}\label{eq:AchRatesUplink1Case2MAC2}
R_{A_2}^{(1)}\leq C\left(\frac{\alpha_{A_2}^{(1)}|h_{A_2R}|^2P}{2\alpha_{B_1}^{(2)}|h_{B_1R}|^2P+2\alpha_{B_2}^{(2)}|h_{B_2R}|^2P+1}\right)
\end{align}
\begin{align}\label{eq:AchRatesUplink1Case2MACSum}
R_{A_1}^{(1)}+R_{A_2}^{(1)}\leq C\left(\frac{|h_{A_1R}|^2\alpha_{A_1}^{(1)}P+\alpha_{A_2}^{(1)}|h_{A_2R}|^2P}{2\alpha_{B_1}^{(2)}|h_{B_1R}|^2P+2\alpha_{B_2}^{(2)}|h_{B_2R}|^2P+1}\right)
\end{align}
\begin{align}
R_{A_1}^{(2)}, R_{B_1} &\leq \log\left(\frac{|h_{B_1R}|^2\alpha_{B_1}^{(2)}P}{2\alpha_{B_2}^{(2)} |h_{B_2R}|^2P+1}\right)^+\label{eq:AchRatesUplink2Case2}\\
 R_{A_2}^{(2)}, R_{B_2} &\leq  \left(\log \left(\alpha_{B_2}^{(2)}|h_{B_2R}|^2P\right)\right)^+\label{eq:AchRatesUplink3Case2}.
\end{align}

\subsubsection{Case $|h_{A_1R}|\geq |h_{A_2R}|\geq |h_{B_2R}|\geq  |h_{B_1R}|$}~\label{sec:ThirdCaseUplink}

The decoding is similar to the above case, except that the lattice point $f$ from $A_2$ and $B_2$ is decoded before decoding the lattice point  $t$ from $A_1$ and $B_1$ .
Again, we can show that for any choice of  $\alpha_{A_i}^{(j)}$ and $\alpha_{B_i}^{(2)}$, this can be done successfully as long as,
\begin{align}\label{eq:AchRatesUplink1Case3MAC1}
R_{A_1}^{(1)}\leq C\left(\frac{|h_{A_1R}|^2\alpha_{A_1}^{(1)}P}{2\alpha_{B_1}^{(2)}|h_{B_1R}|^2P+2\alpha_{B_2}^{(2)}|h_{B_2R}|^2P+1}\right)
\end{align}
\begin{align}\label{eq:AchRatesUplink1Case3MAC2}
R_{A_2}^{(1)}\leq C\left(\frac{\alpha_{A_2}^{(1)}|h_{A_2R}|^2P}{2\alpha_{B_1}^{(2)}|h_{B_1R}|^2P+2\alpha_{B_2}^{(2)}|h_{B_2R}|^2P+1}\right)
\end{align}
\begin{align}\label{eq:AchRatesUplink1Case3MACSum}
R_{A_1}^{(1)}+R_{A_2}^{(1)}\leq C\left(\frac{|h_{A_1R}|^2\alpha_{A_1}^{(1)}P+\alpha_{A_2}^{(1)}|h_{A_2R}|^2P}{2\alpha_{B_1}^{(2)}|h_{B_1R}|^2P+2\alpha_{B_2}^{(2)}|h_{B_2R}|^2P+1}\right)
\end{align}
\begin{align}
R_{A_2}^{(2)}, R_{B_2} &\leq \log\left(\frac{|h_{B_2R}|^2\alpha_{B_2}^{(2)}P}{2\alpha_{B_1}^{(2)} |h_{B_1R}|^2P+1}\right)^+\label{eq:AchRatesUplink2Case3}\\
 R_{A_1}^{(2)}, R_{B_1} &\leq  \left(\log \left(\alpha_{B_1}^{(2)}|h_{B_1R}|^2P\right)\right)^+\label{eq:AchRatesUplink3Case3}.
\end{align}

Now we state the following lemma whose proof is given in Appendix~\ref{app:lemAchRegionUL}.
\begin{lemma}\label{lem:AchRegionUL}
Suppose that the nodes are using the transmit strategy described in Section \ref{subsec:encUL}. Then for any 4-tuple $(r_{A_1},r_{B_1},r_{A_2},r_{B_2})$ satisfying
\begin{align}
&r_{A_1}\leq C\left(|h_{A_1R}|^2P\right)-2\;, r_{B_1}\leq C\left(|h_{B_1R}|^2P\right)-1 \label{eq:RatesUplink1}\\
&  r_{A_2}\leq C\left(|h_{A_2R}|^2P\right)-2\;, r_{B_2}\leq C\left(|h_{B_2R}|^2P\right)-1\label{eq:RatesUplink2}\\
 & r_{A_1}+r_{A_2}\leq C\left(|h_{A_1R}|^2P+|h_{A_2R}|^2P\right)-4 \label{eq:RatesUplink3}\\
 &  r_{A_1}+r_{B_2}\leq  C\left(|h_{A_1R}|^2P +|h_{B_2R}|^2P\right) -4\label{eq:RatesUplink4}\\
 & r_{B_1}+r_{B_2}\leq  C\left(|h_{B_1R}|^2P +|h_{B_2R}|^2P\right)-4 \label{eq:RatesUplink5}\\
 &  r_{B_1}+r_{A_2}\leq C\left(|h_{B_1R}|^2P+|h_{A_2R}|^2P\right)-4\label{eq:RatesUplink6},
\end{align}
there exists a choice of power assignments ($\alpha_{A_i}^{(j)}$ and $\alpha_{B_i}^{(2)}$) such that the relay can use the decoding strategy described earlier to decode the Gaussian $x_{A_i}^{(1)}$ of rate $R_{A_i}^{(1)}=r_{A_i}-r_{B_i}$, the lattice point $t$  of rate $R_{A_1}^{(2)}=R_{B_1}=r_{B_1}$, and the lattice point $f$ of rate $R_{A_2}^{(2)}=R_{B_2}=r_{B_2}$, with arbitrary small error probability.
\end{lemma}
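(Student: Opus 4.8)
The plan is to argue case by case over the three channel-gain orderings of Figure~\ref{fig:AllCases} to which the problem has already been reduced, doing Case~I ($|h_{A_1R}|\ge|h_{B_1R}|\ge|h_{A_2R}|\ge|h_{B_2R}|$) in detail. For Case~I I would assign the power fractions greedily from the bottom of the received signal stack at the relay upward, following the level-allocation logic of the deterministic scheme: first choose $\alpha_{B_2}^{(2)}$ so that $B_2$'s lattice codeword reaches the relay with received power $2^{r_{B_2}}$, which via the power-alignment rule \eqref{eq:HowtoChooseAlpha} also fixes $\alpha_{A_2}^{(2)}$; then choose $\alpha_{A_2}^{(1)}$ so that $x_{A_2}^{(1)}$ is decodable at rate $r_{A_2}-r_{B_2}$ above the interference floor $2\alpha_{B_2}^{(2)}|h_{B_2R}|^2P+1$; then choose $\alpha_{B_1}^{(2)}$ (hence $\alpha_{A_1}^{(2)}$) so that the sum lattice point $t$ is decodable at rate $r_{B_1}$ above the updated floor; and finally choose $\alpha_{A_1}^{(1)}$ so that $x_{A_1}^{(1)}$ is decodable at rate $r_{A_1}-r_{B_1}$. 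By construction the four achievability conditions \eqref{eq:AchRatesUplink1}--\eqref{eq:AchRatesUplink3} then hold with exactly the target rates, and the relay's successive-cancellation / lattice decoder succeeds with vanishing error probability by the lattice coding result quoted above. If some $r_{B_i}=0$, pair $i$ degenerates to a one-way link, and I would set $\alpha_{B_i}^{(2)}=\alpha_{A_i}^{(2)}=0$ and route the whole rate $r_{A_i}$ through the Gaussian codeword.

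What remains is a feasibility check: the chosen fractions must satisfy $\alpha_{B_i}^{(2)}\le1$ and $\alpha_{A_i}^{(1)}+\alpha_{A_i}^{(2)}\le1$. Writing $\ell_i:=\alpha_{B_i}^{(2)}|h_{B_iR}|^2P=\alpha_{A_i}^{(2)}|h_{A_iR}|^2P$ and $\gamma_i:=\alpha_{A_i}^{(1)}|h_{A_iR}|^2P$ for the received powers, these become $\ell_i\le|h_{B_iR}|^2P$ and $\gamma_i+\ell_i\le|h_{A_iR}|^2P$. I would verify them by telescoping the recursive definition of the interference floor: stacking a Gaussian codeword at rate $\rho$ multiplies the floor by $2^{\rho}$, while a not-yet-peeled lattice pair at rate $\rho$ multiplies it by $1+2^{\rho+1}$. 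Unwinding this shows that the total received power of $A_1$'s components is at most a small absolute constant times $2^{r_{A_1}+r_{A_2}}$, that of $B_1$ at most a constant times $2^{r_{A_2}+r_{B_1}}$, and likewise for pair~$2$. Plugging in the channel ordering of the case to bound the parallel sums, e.g. $|h_{A_1R}|^2+|h_{A_2R}|^2\le2|h_{A_1R}|^2$, and then the constant-bit gaps in the hypotheses --- $2^{r_{A_1}+r_{A_2}}\le\frac{1+|h_{A_1R}|^2P+|h_{A_2R}|^2P}{16}$ from \eqref{eq:RatesUplink3}, $2^{r_{A_i}}\le\frac{1+|h_{A_iR}|^2P}{4}$ from \eqref{eq:RatesUplink1}--\eqref{eq:RatesUplink2}, and their analogues --- closes each inequality; the $-1$ on the $r_{B_i}$, the $-2$ on the $r_{A_i}$, and the $-4$ on the sum-rates are exactly what gets consumed here.

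The delicate part, which needs genuine care rather than a one-line estimate, is this constant accounting: with the crudest bounds the factor on $A_1$'s received power comes out around $9\cdot2^{r_{A_1}+r_{A_2}}$, which is (just) too large against the roughly $8\cdot2^{r_{A_1}+r_{A_2}}$ of budget that the $-4$ gap buys. One must keep the sharper form, of the shape $2^{r_{A_1}+r_{A_2}}\bigl(4+2^{1-r_{B_1}}+2^{1-r_{B_2}}+2^{-r_{B_1}-r_{B_2}}\bigr)$, observe that whenever $r_{B_i}\ge1$ each correction term is at most $1$ (and whenever $r_{B_i}=0$ the lattice for pair $i$ is absent altogether), and use $|h_{A_iR}|^2P\ge3$ and $|h_{B_iR}|^2P\ge1$ --- which follow from the hypotheses unless the stated set of tuples is empty --- so that the tightest constraint $\gamma_1+\ell_1\le|h_{A_1R}|^2P$ just barely holds. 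I would then dispatch Cases~II and~III, where the two Gaussian codewords $x_{A_1}^{(1)},x_{A_2}^{(1)}$ are decoded jointly as a two-user multiple-access channel at the relay: the individual constraints \eqref{eq:AchRatesUplink1Case2MAC1}--\eqref{eq:AchRatesUplink1Case2MAC2}, \eqref{eq:AchRatesUplink1Case3MAC1}--\eqref{eq:AchRatesUplink1Case3MAC2} and the lattice constraints are treated exactly as in Case~I, and the MAC sum-rate constraints \eqref{eq:AchRatesUplink1Case2MACSum}, \eqref{eq:AchRatesUplink1Case3MACSum} are matched against the sum-rate hypotheses \eqref{eq:RatesUplink3}--\eqref{eq:RatesUplink6} by the same telescoping bound.
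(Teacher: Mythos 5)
Your proposal follows essentially the same route as the paper's proof: the same greedy bottom-up power assignment (fix $\alpha_{B_2}^{(2)}$ so the lattice pair $f$ lands at the relay with power $2^{r_{B_2}}$, then choose $\alpha_{A_2}^{(1)}$, $\alpha_{B_1}^{(2)}$, $\alpha_{A_1}^{(1)}$ by equating each target rate to the corresponding achievability condition), followed by the same feasibility check of $\alpha_{B_i}^{(2)}\le 1$ and $\alpha_{A_i}^{(1)}+\alpha_{A_i}^{(2)}\le 1$ against the constant-gap hypotheses \eqref{eq:RatesUplink1}--\eqref{eq:RatesUplink6}, case by case over the three orderings. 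The telescoped quantity you flag as the delicate step, $2^{r_{A_1}+r_{A_2}}\bigl(4+2^{1-r_{B_1}}+2^{1-r_{B_2}}+2^{-r_{B_1}-r_{B_2}}\bigr)$, is exactly what the paper bounds (it writes the Case~I total as $5\cdot 2^{r_{A_1}+r_{A_2}}+2^{r_{A_1}+r_{B_2}}-3$ over $|h_{A_1R}|^2P$), so your accounting matches the paper's.
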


\subsection{Encoding at the relay}\label{subsec:encDL}
The relay maps the decoded $x_{A_1}^{(1)}$, $t$, $x_{A_2}^{(1)}$, and $f$ to a Gaussian codeword $x_R^{(1)}$ from a codebook of size $2^{nR_{A_1}^{(1)}}$, $x_R^{(2)}$ from a codebook of size $2^{nR_{B_1}}$, $x_R^{(3)}$ from a codebook of size $2^{nR_{A_2}^{(1)}}$, and $x_R^{(4)}$ from a codebook of size $2^{nR_{B_2}}$, respectively.

\subsection{Downlink: Decoding at the nodes}
 As in the uplink, we have to consider three cases only, from which we provide the detailed analysis for $|h_{RB_1}|\geq  |h_{RA_1}|\geq |h_{RB_2}|\geq |h_{RA_2}|$. The other cases follow similar lines of arguments and thus only the results are presented.

The relay uses a superposition of four messages. One message is decoded by all users. Another message is decoded by both users of the first pair and the strong receiver of the second pair. Yet another message is decoded by only the strong receiver of the first pair,  and finally the remaining message is decoded by both users of the first pair.

\subsubsection{Case $|h_{RB_1}|\geq  |h_{RA_1}|\geq |h_{RB_2}|\geq |h_{RA_2}|$}~\label{sec:FirstCaseDownlink}
We can show that for any choice of  $\alpha_{A_i}^{(j)}$ and $\alpha_{B_i}^{(2)}$, this can be done successfully as long as,

\begin{align}
R_{A_1}^{(2)}, R_{B_1}\leq \min \Big( &  C\left(\frac{|h_{RB_1}|^2\alpha_R^{(2)}P}{1+|h_{RB_1}|^2\alpha_R^{(1)}P}\right), \nonumber \\
& C\left(|h_{RA_1}|^2\alpha_R^{(2)}P\right)\Big),\label{eq:AchRatesDownlink1}
\end{align}
\begin{align}
&R_{A_2}^{(2)},R_{B_2}\leq \min \Big(  C\left(\frac{|h_{RB_2}|^2\alpha_R^{(4)}P}{1+P|h_{RB_2}|^2\sum_{j=1}^3\alpha_R^{(j)}}\right),\nonumber\\
&\quad\quad\quad\quad\quad\quad\quad\quad\quad C\left(\frac{|h_{RA_2}|^2\alpha_R^{(4)}P}{1+P|h_{RA_2}|^2\left(\alpha_R^{(1)}+\alpha_R^{(2)}\right)}\right)\Big),\label{eq:AchRatesDownlink2}
\end{align}
\begin{align} &R_{A_1}^{(1)}\leq C\left(|h_{RB_1}|^2\alpha_R^{(1)}P\right), \label{eq:AchRatesDownlink3}\\
 & R_{A_2}^{(1)}\leq C\left(\frac{|h_{RB_2}|^2\alpha_R^{(3)}P}{1+P|h_{RB_2}|^2\left(\alpha_R^{(1)}+\alpha_R^{(2)}\right)}\right).\nonumber
\end{align}

Details of the derivation are given in Appendix~\ref{sec:AppDownlink}.

\subsubsection{Case $|h_{RB_1}|\geq   |h_{RB_2}|\geq |h_{RA_1}|\geq|h_{RA_2}|$ }~\label{sec:SecCaseDownlink}
We can show that for any choice of  $\alpha_{A_i}^{(j)}$ and $\alpha_{B_i}^{(2)}$, this can be done successfully as long as,

\begin{align}
R_{A_1}^{(2)}, R_{B_1}\leq \min \Big(  & C\left(\frac{|h_{RA_1}|^2\alpha_R^{(2)}P}{1+|h_{RA_1}|^2\alpha_R^{(3)}P}\right), \nonumber \\ & C\left(\frac{|h_{RB_2}|^2\alpha_R^{(2)}P}{1+P|h_{RB_2}|^2\left(\alpha_R^{(1)}+\alpha_R^{(3)}\right)}\right)\Big),\label{eq:AchRatesDownlink1Case2}
\end{align}
\begin{align}
R_{A_2}^{(2)},R_{B_2}\leq \min \Big( &  C\left(\frac{|h_{RB_2}|^2\alpha_R^{(4)}P}{1+P|h_{RB_2}|^2\sum_{j=1}^3\alpha_R^{(j)}}\right),\nonumber\\
&C\left(\frac{|h_{RA_1}|^2\alpha_R^{(4)}P}{1+P|h_{RA_1}|^2\left(\alpha_R^{(2)}+\alpha_R^{(3)}\right)}\right), \nonumber \\ & C\left(\frac{|h_{RA_2}|^2\alpha_R^{(4)}P}{1+P|h_{RA_2}|^2\left(\alpha_R^{(1)}+\alpha_R^{(2)}\right)}\right)\Big),\label{eq:AchRatesDownlink2Case2}
\end{align}
\begin{align}
&R_{A_1}^{(1)}\leq C\left(|h_{RB_1}|^2\alpha_R^{(1)}P\right),\label{eq:AchRatesDownlink3Case2} \\
 & R_{A_2}^{(1)}\leq C\left(\frac{|h_{RB_2}|^2\alpha_R^{(3)}P}{1+P|h_{RB_2}|^2\alpha_R^{(1)}}\right).\nonumber
\end{align}

\subsubsection{Case $|h_{RB_1}|\geq   |h_{RB_2}|\geq|h_{RA_2}|\geq |h_{RA_1}|$ }~\label{sec:ThirdCaseDownlink}
We can show that for any choice of  $\alpha_{A_i}^{(j)}$ and $\alpha_{B_i}^{(2)}$, this can be done successfully as long as,

\begin{align}
R_{A_1}^{(2)},R_{B_1}\leq \min \Big( & C\left(\frac{|h_{RB_2}|^2P\alpha_R^{(2)}}{1+|h_{RB_2}|^2P\sum_{j=1,j\neq 2}^4\alpha_R^{(j)}}\right),\nonumber\\
&C\left(\frac{|h_{RA_1}|^2P\alpha_R^{(2)}}{1+|h_{RA_1}|^2P\left(\alpha_R^{(4)}+\alpha_R^{(3)}\right)}\right), \nonumber \\ & C\left(\frac{|h_{RA_2}|^2P\alpha_R^{(2)}}{1+|h_{RA_2}|^2P\left(\alpha_R^{(1)}+\alpha_R^{(4)}\right)}\right)\Big),\label{eq:AchRatesDownlink2Case3}
\end{align}
\begin{align}
R_{A_2}^{(2)}, R_{B_2}\leq \min \Big( &  C\left(\frac{|h_{RA_2}|^2P\alpha_R^{(4)}}{1+|h_{RA_2}|^2P\alpha_R^{(1)}}\right), \nonumber \\
& C\left(\frac{|h_{RB_2}|^2P\alpha_R^{(4)}}{1+|h_{RB_2}|^2P\left(\alpha_R^{(1)}+\alpha_R^{(3)}\right)}\right)\Big),\label{eq:AchRatesDownlink1Case3}
\end{align}
\begin{align}
&R_{A_1}^{(1)}\leq C\left(|h_{RB_1}|^2P\alpha_R^{(1)}\right),\quad R_{A_2}^{(1)}\leq C\left(\frac{|h_{RB_2}|^2P\alpha_R^{(3)}}{1+|h_{RB_2}|^2P\alpha_R^{(1)}}\right).\label{eq:AchRatesDownlink3Case3}
\end{align}

Now we state the following lemma whose proof is given in Appendix~\ref{app:lemAchRegionDL}.

\begin{lemma}\label{lem:AchRegionDL}
Suppose that the relay is using the transmit strategy described above. Then for any 4-tuple $(r_{A_1},r_{B_1},r_{A_2},r_{B_2})$ satisfying
\begin{align}
\label{eq:RatesDownlink1}&r_{A_1}\leq C\left(|h_{RB_1}|^2P\right)-2\;, r_{B_1}\leq C\left(|h_{RA_1}|^2P\right)-2\\
\label{eq:RatesDownlink2}&  r_{A_2}\leq C\left(|h_{RB_2}|^2P\right)-2\;, r_{B_2}\leq C\left(|h_{RA_2}|^2P\right)-2
\end{align}
\begin{align}
\label{eq:RatesDownlink3} & r_{A_1}+r_{A_2}\leq C\left(\max\left(|h_{RB_1}|^2P,|h_{RB_2}|^2P\right)\right)-3\\
\label{eq:RatesDownlink4} &  r_{A_1}+r_{B_2}\leq  C\left(\max\left(|h_{RB_1}|^2P ,|h_{RA_2}|^2P\right)\right) -3\\
\label{eq:RatesDownlink5} & r_{B_1}+r_{B_2}\leq  C\left(\max\left(|h_{RA_1}|^2P ,|h_{RA_2}|^2P\right)\right)-3 \\
\label{eq:RatesDownlink6} &  r_{B_1}+r_{A_2}\leq C\left(\max\left(|h_{RA_1}|^2P,|h_{RB_2}|^2P\right)\right)-3
\end{align}
there exists a choice of power assignments ($\alpha_{R}^{(j)}$'s) such that $B_1$ can decode the Gaussian codewords $x_{R}^{(1)}$ of rate $R_{A_1}^{(1)}=r_{A_1}-r_{B_1}$, $A_1$ and $B_1$ can both decode the Gaussian codeword $x_{R}^{(2)}$  of rate $R_{A_1}^{(2)}=R_{B_1}=r_{B_1}$,  $B_2$ can decode the Gaussian codeword $x_{R}^{(3)}$ of rate $R_{R}^{(3)}=r_{A_2}-r_{B_2}$, and $A_2$ and $B_2$ can both decode the Gaussian codeword $x_{R}^{(4)}$ of rate $R_{A_2}^{(2)}=R_{B_2}=r_{B_2}$, with arbitrary small error probability.
\end{lemma}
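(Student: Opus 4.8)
The plan is to mirror the structure of the uplink lemma (Lemma~\ref{lem:AchRegionUL}): we are given a rate tuple $(r_{A_1},r_{B_1},r_{A_2},r_{B_2})$ in the reduced downlink region~\eqref{eq:RatesDownlink1}--\eqref{eq:RatesDownlink6}, and we must exhibit a power split $(\alpha_R^{(1)},\dots,\alpha_R^{(4)})$ with $\sum_j \alpha_R^{(j)}=1$ so that the per-case achievable-rate constraints~\eqref{eq:AchRatesDownlink1}--\eqref{eq:AchRatesDownlink3Case3} are simultaneously satisfied with $R_{A_1}^{(1)}=r_{A_1}-r_{B_1}$, $R_{A_1}^{(2)}=R_{B_1}=r_{B_1}$, $R_{A_2}^{(1)}=r_{A_2}-r_{B_2}$, and $R_{A_2}^{(2)}=R_{B_2}=r_{B_2}$. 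First I would invoke Lemma~\ref{lem:CaseReduction} to reduce to the three downlink channel orderings already singled out in Section~\ref{sec:ThirdCaseDownlink}, and carry out the argument in detail only for the ordering $|h_{RB_1}|\geq|h_{RA_1}|\geq|h_{RB_2}|\geq|h_{RA_2}|$, noting the other two are analogous.

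The key device is a suitable \emph{level-based} choice of the $\alpha_R^{(j)}$'s: each of the four downlink codewords is to be "placed" at a power level matching where the corresponding bits sit in the deterministic picture of Fig.~\ref{fig:OrthogSignLevelSpace}. Concretely I would set $\alpha_R^{(4)}$ (the $f=x_{A_2}^{(2)}+x_{B_2}^{(2)}$ codeword, decoded by both users of pair~2) so that $|h_{RA_2}|^2\alpha_R^{(4)}P \approx 2^{r_{B_2}}$, then $\alpha_R^{(3)}$ so that the extra codeword for $A_2$'s side sits just above it, then $\alpha_R^{(2)}$ (the pair-1 lattice sum, decoded by both $A_1$ and $B_1$) matched to $2^{r_{B_1}}$ relative to the weaker link $|h_{RA_1}|$, and finally $\alpha_R^{(1)}$ taking up the top levels for the residual Gaussian codeword of rate $r_{A_1}-r_{B_1}$. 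One checks $\sum_j\alpha_R^{(j)}\le 1$ using the single-user bounds~\eqref{eq:RatesDownlink1}--\eqref{eq:RatesDownlink2} and the sum bounds~\eqref{eq:RatesDownlink3}--\eqref{eq:RatesDownlink6}: each sum constraint of the form $r_x + r_y \le C(\max(\cdot,\cdot)P) - 3$ is exactly what guarantees that the two corresponding chunks of power, when stacked, do not exceed $P$, the $-3$ (and the $-2$'s) absorbing the $+1$ penalties from the $C(x)$ versus $\log x$ slack, the factor-of-$2$ lattice loss, and the successive-cancellation interference-as-noise terms in~\eqref{eq:AchRatesDownlink1}--\eqref{eq:AchRatesDownlink2}. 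Then one substitutes this choice into each of the right-hand sides of~\eqref{eq:AchRatesDownlink1}--\eqref{eq:AchRatesDownlink3} and verifies, line by line, that each achievable rate is at least the claimed $r$-value; the decoding order (strong receiver $B_1$ peels $x_R^{(1)}$ first, etc.) is what makes the interference terms in the denominators collapse correctly.

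The main obstacle I anticipate is bookkeeping the constant loss so that it comes out to exactly the stated $-2$ per user (and $-3$ per sum): the denominators in~\eqref{eq:AchRatesDownlink1}--\eqref{eq:AchRatesDownlink3Case3} contain partial sums $\alpha_R^{(1)}+\alpha_R^{(2)}$ of the already-placed power, so a naive placement loses more than a constant when, say, $|h_{RB_2}|$ is much larger than $|h_{RA_2}|$ — one has to choose the levels relative to the \emph{weaker} of the two links that must decode a given codeword, which is precisely why the reduced region uses $\max(|h_{RA_1}|^2,|h_{RA_2}|^2)$-type terms rather than sums in~\eqref{eq:RatesDownlink3}--\eqref{eq:RatesDownlink6}. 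Verifying that this weaker-link placement still leaves enough headroom in all three channel orderings simultaneously, and that the bound on the extra codewords $x_R^{(1)},x_R^{(3)}$ (which only a single receiver needs) does not become the binding constraint, is the part that requires care; everything else is the routine $C(x)\le \log x + 1$ and $\log(2x)=\log x+1$ arithmetic that I would defer to Appendix~\ref{app:lemAchRegionDL}.
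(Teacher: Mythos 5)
Your plan coincides with the paper's proof of Lemma~\ref{lem:AchRegionDL}: for each of the three downlink orderings one solves the per-codeword rate equations for the $\alpha_R^{(j)}$ one at a time (placing each codeword's received power at the level dictated by its rate at the weakest receiver that must decode it) and then verifies the cumulative power constraints $\sum_j \alpha_R^{(j)} \leq 1$ from \eqref{eq:RatesDownlink1}--\eqref{eq:RatesDownlink6}, which is exactly what Appendix~\ref{app:lemAchRegionDL} does. Two small corrections to your sketch: the interference denominators are triangular in the order $\alpha_R^{(1)},\alpha_R^{(2)},\alpha_R^{(3)},\alpha_R^{(4)}$ (the constraint on $x_R^{(j)}$ involves only $\alpha_R^{(1)},\dots,\alpha_R^{(j-1)}$), so the paper solves sequentially in that order rather than starting from $\alpha_R^{(4)}$; and $B_1$ decodes $x_R^{(4)}$ first and $x_R^{(1)}$ last (interference-free), not the reverse.
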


Now note that if
\begin{align}
 (R_{A_1},R_{B_1},R_{A_2},R_{B_2}) \in \overline{\mathcal{C}} \nonumber
 \end{align}
and  $R_{A_i}, R_{B_i} \geq 2$ for $i=1,2$, then the rate tuple
$$(r_{A_1},r_{B_1},r_{A_2},r_{B_2})=(R_{A_1}-2,R_{B_1}-2,R_{A_2}-2,R_{B_2}-2)$$ satisfies the conditions of both Lemma~\ref{lem:AchRegionUL} and~\ref{lem:AchRegionDL}. Therefore by the proposed strategy the rate tuple $(R_{A_1}-2,R_{B_1}-2,R_{A_2}-2,R_{B_2}-2)$ is achievable, and this completes the proof of Theorem \ref{theo:Main}.

\section{Conclusion}\label{sec:Conc}
In this paper we studied the multi-pair bidirectional relay network which is a generalization of the bidirectional relay channel. We examined this problem in the context of the linear shift deterministic channel model introduced in \cite{AvestimehrIT} and characterized its capacity region completely in both full-duplex and half-duplex cases. We also showed that the capacity can be achieved by a divide-and-conquer {relaying} strategy.
Based on insights gained from the linear shift deterministic channel model, we proposed a transmission strategy for the Gaussian two-pair bidirectional full-duplex relay network and found an approximate characterization of the capacity region. In fact, we proposed a specific superposition coding scheme that achieves to within $3$ bits/sec/Hz per user of the cut-set upper-bound on the capacity of the two-pair bidirectional relay network. Possible directions for future work is the extension to the half-duplex mode. Extension of the proposed transmission strategy to the case that there are more than two pairs is possible, however, analyzing the gap between the achievable rate of the corresponding scheme and the cut-set upper-bound is expected to be quite cumbersome.

\appendices

\section{\label{app:proofDetCase1}}

In this Appendix we prove that the reduced rate-tuple $\mathbf{R}'=(R_{A_1}-1,R_{B_1}-1,R_{A_2},R_{B_2},\cdots, R_{A_M},R_{B_M})$, with {$R_{A_1}\geq 1$ and $R_{B_1}\geq 1$}, created in case 1 of the proof of Theorem \ref{thm:main} is in the cut-set region of the reduced network (defined in (\ref{eq:reduced1})-(\ref{eq:reduced4})), i.e.,
\begin{align*}
\sum_{i \in \mathcal{U}} [ \ell_i R'_{A_i} + & (1-\ell_i) R'_{B_i} ] \\
\leq \min \Big( & \max_{i \in \mathcal{U} } ( \ell_i n'_{A_iR} + (1-\ell_i) n'_{B_iR} ),\\
&  \max_{i \in \mathcal{U} } ( \ell_i n'_{RB_i} + (1-\ell_i) n'_{RA_i} ) \Big),
\end{align*}
for all $\mathcal{U}\subseteq \{1, \dots, M \}$ and $\ell_i \in \{0,1\}$, $i=1,\ldots,M$.

If $1 \in \mathcal{U}$, we have
\begin{align*}
  \sum_{i \in \mathcal{U}} [ \ell_i R'_{A_i} + & (1-\ell_i) R'_{B_i} ] \\
 \stackrel{(1 \in \mathcal{U})}{=}\sum_{i \in \mathcal{U}} & [ \ell_i R_{A_i} +  (1-\ell_i) R_{B_i} ] -1 \\
 \stackrel{\eqref{eq:cutSetDet}}{\leq}   \min \Big( & \max_{i \in \mathcal{U} } ( \ell_i n_{A_iR} + (1-\ell_i) n_{B_iR} )\\
  & \max_{i \in \mathcal{U} } ( \ell_i n_{RB_i} + (1-\ell_i) n_{RA_i} ) \Big) -1 \\
  =  \min \Big( & \max_{i \in \mathcal{U} } ( \ell_i (n_{A_iR}-1) + (1-\ell_i) (n_{B_iR}-1) ), \\
  & \max_{i \in \mathcal{U} } ( \ell_i (n_{RB_i}-1) + (1-\ell_i) (n_{RA_i}-1) ) \Big) \\
 \stackrel{\eqref{eq:reduced1}-\eqref{eq:reduced4}}{\leq}   \min \Big( & \max_{i \in \mathcal{U} } ( \ell_i n'_{A_iR} + (1-\ell_i) n'_{B_iR} ), \\
 & \max_{i \in \mathcal{U} } ( \ell_i n'_{RB_i} + (1-\ell_i) n'_{RA_i} ) \Big).
\end{align*}

If $1 \notin \mathcal{U}$, we have
\begin{align*}  \sum_{i \in \mathcal{U}} [ \ell_i R'_{A_i} & + (1-\ell_i) R'_{B_i} ] \stackrel{(1 \notin \mathcal{U})}{=} \sum_{i \in \mathcal{U}} [ \ell_i R_{A_i} + (1-\ell_i) R_{B_i} ]  \\
\stackrel{\eqref{eq:cutSetDet}}{\leq}   \min \Big( &  \max_{i \in \mathcal{U} } ( \ell_i n_{A_iR} + (1-\ell_i) n_{B_iR} ), \\
&  \max_{i \in \mathcal{U} } ( \ell_i n_{RB_i} + (1-\ell_i) n_{RA_i} ) \Big) \\
  =  \min \Big( & \max_{i \in \mathcal{U} } ( \ell_i (n_{A_iR}-1) + (1-\ell_i) (n_{B_iR}-1) ), \\
  & \max_{i \in \mathcal{U} } ( \ell_i (n_{RB_i}-1) + (1-\ell_i) (n_{RA_i}-1) ) \Big) +1  \\
\stackrel{\eqref{eq:reduced1}-\eqref{eq:reduced4}}{\leq}   \min \Big( & \max_{i \in \mathcal{U} } ( \ell_i n'_{A_iR} + (1-\ell_i) n'_{B_iR} ), \\
& \max_{i \in \mathcal{U} } ( \ell_i n'_{RB_i} + (1-\ell_i) n'_{RA_i} ) \Big) +1.
\end{align*}
Therefore, if $1 \notin \mathcal{U}$, the only way to violate the cut-set bound is to have all above inequalities as equality, i.e.,
\begin{align} \label{eq:lemmaDet0}  \sum_{i \in \mathcal{U}} [ \ell_i R_{A_i} & + (1-\ell_i) R_{B_i} ] \\
 =   \min \Big( & \max_{i \in \mathcal{U} } ( \ell_i n_{A_iR} + (1-\ell_i) n_{B_iR} ), \nonumber \\
 & \max_{i \in \mathcal{U} } ( \ell_i n_{RB_i} + (1-\ell_i) n_{RA_i} ) \Big)
 \end{align}
 \begin{align}
\label{eq:lemmaDet1}  = \min \Big ( &  \max_{i \in \mathcal{U} } ( \ell_i n'_{A_iR} + (1-\ell_i) n'_{B_iR} ),  \\
& \max_{i \in \mathcal{U} } ( \ell_i n'_{RB_i} + (1-\ell_i) n'_{RA_i} ) \Big) +1  \nonumber.
\end{align}
However, we show that this is in contradiction to our assumption of $R_{A_1}\neq 0$ and $R_{B_1}\neq 0$. To see this, note that by \eqref{eq:reduced1}-\eqref{eq:reduced4}, the equality in \eqref{eq:lemmaDet1} happens only if we have one the following four cases.
\begin{enumerate}
\item $\exists j \in \{2,\ldots,M \}$ such that,  $j \in  \mathcal{U}$, $ \ell_j=1$, and
\begin{align}
\label{eq:lemmaDet2} n'_{A_jR} &=  \max_{i \in \mathcal{U} } ( \ell_i n'_{A_iR} + (1-\ell_i) n'_{B_iR} ), \\
\label{eq:lemmaDet3} n'_{A_jR} &=  \min \Big( \max_{i \in \mathcal{U} } ( \ell_i n'_{A_iR} + (1-\ell_i) n'_{B_iR} ), \\
& \qquad \qquad \max_{i \in \mathcal{U} } ( \ell_i n'_{RB_i} + (1-\ell_i) n'_{RA_i} ) \Big) \nonumber \\
\label{eq:lemmaDet4}  n'_{A_jR} &=  n_{A_jR}-1.
\end{align}
First, note that from (\ref{eq:lemmaDet2}), (\ref{eq:lemmaDet3}), $\ell_j=1$, and the relationship between the channel gains in the original network and the channel gains in the  reduced network (\ref{eq:reduced1})-(\ref{eq:reduced4}), we have
\begin{align}
\label{eq:lemmaDet7} n_{A_jR} &=  \max_{i \in \mathcal{U} } ( \ell_i n_{A_iR} + (1-\ell_i) n_{B_iR} ), \\
\label{eq:lemmaDet8} n_{A_jR} &\leq  \max_{i \in \mathcal{U} } ( \ell_i n_{RB_i} + (1-\ell_i) n_{RA_i} ), \\
\label{eq:lemmaDet9} n_{A_jR} &=  \min \Big( \max_{i \in \mathcal{U} } ( \ell_i n_{A_iR} + (1-\ell_i) n_{B_iR} ), \\
& \qquad \qquad \max_{i \in \mathcal{U} } ( \ell_i n_{RB_i} + (1-\ell_i) n_{RA_i} ) \Big) \nonumber
\end{align}
Since $n'_{A_jR}=n_{A_jR}-1$, we have $n_{A_jR} \geq l_u=\min(n_{A_1R},n_{B_1R})$. If $n_{A_jR} \geq n_{A_1R}$, we can write
\begin{align}
& R_{A_1} +  \sum_{i \in \mathcal{U}} [ \ell_i R_{A_i} + (1-\ell_i) R_{B_i} ]  \nonumber \\
& \leq \min \Big( \max \lp n_{A_1R}, \max_{i \in \mathcal{U} } ( \ell_i n_{A_iR} + (1-\ell_i) n_{B_iR} ) \rp , \nonumber \\
&   \qquad \qquad  \max \lp n_{RB_1}, \max_{i \in \mathcal{U} } ( \ell_i n_{RB_i} + (1-\ell_i) n_{RA_i} ) \rp \Big) \nonumber \\
& \stackrel{\eqref{eq:lemmaDet7}}{=} \min \Big( \max \lp n_{A_1R}, n_{A_jR} \rp , \nonumber \\
& \qquad \qquad \max \lp n_{RB_1}, \max_{i \in \mathcal{U} } ( \ell_i n_{RB_i} + (1-\ell_i) n_{RA_i} ) \rp \Big) \nonumber \\
& \stackrel{(n_{A_jR} \geq n_{A_1R})}{=} \min \Big(  n_{A_jR}  , \nonumber \\
& \qquad \qquad \qquad \qquad \max \Big( n_{RB_1}, \nonumber \\
& \qquad \qquad \qquad \qquad \quad \max_{i \in \mathcal{U} } ( \ell_i n_{RB_i} + (1-\ell_i) n_{RA_i} ) \Big) \Big) \nonumber \\
& \stackrel{\eqref{eq:lemmaDet8}}{=}  n_{A_jR},\nonumber  \\
\label{eq:lemmaDet10} & \stackrel{\eqref{eq:lemmaDet9}}{=} \min \Big(  \max_{i \in \mathcal{U} } ( \ell_i n_{A_iR} + (1-\ell_i) n_{B_iR} ), \\
 & \qquad \qquad \max_{i \in \mathcal{U} } ( \ell_i n_{RB_i} + (1-\ell_i) n_{RA_i} ) \Big)
\end{align}
where the first step is true since $\mathbf{R}$ satisfies the cut-set bound (\ref{eq:cutSetDet}) with $\tilde{\mathcal{U}}=\mathcal{U}\cup \{1\}$ and $\ell_1=1$. Combining (\ref{eq:lemmaDet0}) and (\ref{eq:lemmaDet10}), we get $R_{A_1} \leq 0$, which is a contradiction to our assumption of  $R_{A_1} \geq 1$.

Similarly, if $n_{A_jR} \geq n_{B_1R}$, we can write
\begin{align}
& R_{B_1} +  \sum_{i \in \mathcal{U}} [ \ell_i R_{A_i} + (1-\ell_i) R_{B_i} ] \nonumber \\
& \leq \min \lp \max \lp n_{B_1R}, \max_{i \in \mathcal{U} } ( \ell_i n_{A_iR} + (1-\ell_i) n_{B_iR} ) \rp \right. , \nonumber \\
&  \left. \qquad  \qquad  \max \lp n_{RA_1}, \max_{i \in \mathcal{U} } ( \ell_i n_{RB_i} + (1-\ell_i) n_{RA_i} ) \rp \rp \nonumber \\
& \stackrel{\eqref{eq:lemmaDet7}}{=} \min \Big( \max \lp n_{B_1R}, n_{A_jR} \rp , \nonumber \\
& \qquad \qquad \max \lp n_{RA_1}, \max_{i \in \mathcal{U} } ( \ell_i n_{RB_i} + (1-\ell_i) n_{RA_i} ) \rp \Big) \nonumber \\
& \stackrel{(n_{A_jR} \geq n_{B_1R})}{=} \min \Big(  n_{A_jR}  , \nonumber \\
& \qquad \qquad \max \lp n_{RA_1}, \max_{i \in \mathcal{U} } ( \ell_i n_{RB_i} + (1-\ell_i) n_{RA_i} ) \rp \Big) \nonumber \\
& \stackrel{\eqref{eq:lemmaDet8}}{=}  n_{A_jR},\nonumber \\
\label{eq:lemmaDet10b} & \stackrel{\eqref{eq:lemmaDet9}}{=} \min \lp \max_{i \in \mathcal{U} } ( \ell_i n_{A_iR} + (1-\ell_i) n_{B_iR} ), \right. \\
& \qquad \qquad \left. \max_{i \in \mathcal{U} } ( \ell_i n_{RB_i} + (1-\ell_i) n_{RA_i} ) \rp \nonumber
\end{align}
where the first step is true since $\mathbf{R}$ satisfies the cut-set bound (\ref{eq:cutSetDet}) with $\tilde{\mathcal{U}}=\mathcal{U}\cup \{1\}$ and $\ell_1=0$. Combining (\ref{eq:lemmaDet0}) and (\ref{eq:lemmaDet10b}), we get $R_{B_1} \leq 0$, which is a contradiction to our assumption of  $R_{B_1} \geq 1$. Therefore, this case can not happen.

\item $\exists j \in \{2,\ldots,M \}$ such that,  $j \in  \mathcal{U}$, $ \ell_j=0$, and
\begin{align*}
 n'_{B_jR} &=  \max_{i \in \mathcal{U} } ( \ell_i n'_{A_iR} + (1-\ell_i) n'_{B_iR} ), \\
 n'_{B_jR} &=  \min \lp \max_{i \in \mathcal{U} } ( \ell_i n'_{A_iR} + (1-\ell_i) n'_{B_iR} ), \right. \\
 & \qquad \qquad \left. \max_{i \in \mathcal{U} } ( \ell_i n'_{RB_i} + (1-\ell_i) n'_{RA_i} ) \rp\\
 n'_{B_jR} &=  n_{B_jR}-1.
\end{align*}
The proof that this case can not also happen is very similar to the previous case, hence we omit repetition.

\item $\exists j \in \{2,\ldots,M \}$ such that,  $j \in  \mathcal{U}$, $ \ell_j=0$, and
\begin{align}
\label{eq:lemmaDet11} n'_{RB_j} &=  \max_{i \in \mathcal{U} } ( \ell_i n'_{RB_i} + (1-\ell_i) n'_{RA_i} ), \\
\label{eq:lemmaDet12} n'_{RB_j} &=  \min \lp \max_{i \in \mathcal{U} } ( \ell_i n'_{A_iR} + (1-\ell_i) n'_{B_iR} ), \right. \\
& \qquad \qquad \left. \max_{i \in \mathcal{U} } ( \ell_i n'_{RB_i} + (1-\ell_i) n'_{RA_i} ) \rp \nonumber \\
\label{eq:lemmaDet13}  n'_{RB_j} &=  n_{RB_j}-1.
\end{align}
From (\ref{eq:lemmaDet11}), (\ref{eq:lemmaDet12}), $\ell_j=0$, and the relationship between the channel gains in the original network and the channel gains in the  reduced network  (\ref{eq:reduced1})-(\ref{eq:reduced4}), we have
\begin{align}
\label{eq:lemmaDet13a}n_{RB_j} &=  \max_{i \in \mathcal{U} } ( \ell_i n_{RB_i} + (1-\ell_i) n_{RA_i} ), \\
\label{eq:lemmaDet14} n_{RB_j} &\leq  \max_{i \in \mathcal{U} } ( \ell_i n_{A_iR} + (1-\ell_i) n_{B_iR} ), \\
\label{eq:lemmaDet15} n_{RB_j} &=  \min \lp \max_{i \in \mathcal{U} } ( \ell_i n_{A_iR} + (1-\ell_i) n_{B_iR} ), \right. \\
& \qquad \qquad \left. \max_{i \in \mathcal{U} } ( \ell_i n_{RB_i} + (1-\ell_i) n_{RA_i} ) \rp.  \nonumber
\end{align}
Since $n'_{RB_j}=n_{RB_j}-1$, we have $n_{RB_j} \geq l_u=\min(n_{RA_1},n_{RB_1})$. If $n_{RB_j} \geq n_{RB_1}$, we can write
\begin{align}
& R_{A_1} +  \sum_{i \in \mathcal{U}} [ \ell_i R_{A_i} + (1-\ell_i) R_{B_i} ]  \nonumber \\
& \leq \min \lp \max \lp n_{A_1R}, \max_{i \in \mathcal{U} } ( \ell_i n_{A_iR} + (1-\ell_i) n_{B_iR} ) \rp , \right. \nonumber\\&  \left. \quad \quad \quad \quad  \max \lp n_{RB_1}, \max_{i \in \mathcal{U} } ( \ell_i n_{RB_i} + (1-\ell_i) n_{RA_i} ) \rp \rp \nonumber\\
& \stackrel{\eqref{eq:lemmaDet13a}}{=} \min \lp  \max \lp n_{A_1R}, \max_{i \in \mathcal{U} } ( \ell_i n_{A_iR} + (1-\ell_i) n_{B_iR} ) \rp , \right. \nonumber \\
& \qquad \qquad \left. \max \lp n_{RB_1}, n_{RB_j} \rp \rp \nonumber\\
& \stackrel{(n_{RB_j} \geq n_{RB_1})}{=} \min \Big(  \max \Big( n_{A_1R},  \nonumber \\
& \qquad \qquad  \qquad \max_{i \in \mathcal{U} } ( \ell_i n_{A_iR} + (1-\ell_i) n_{B_iR} ) \Big) , n_{RB_j}  \Big) \nonumber\\
& \stackrel{\eqref{eq:lemmaDet14}}{=}  n_{RB_j},\nonumber\\
\label{eq:lemmaDet16} & \stackrel{\eqref{eq:lemmaDet15}}{=} \min \lp \max_{i \in \mathcal{U} } ( \ell_i n_{A_iR} + (1-\ell_i) n_{B_iR} ), \right. \\
 & \qquad \qquad \left. \max_{i \in \mathcal{U} } ( \ell_i n_{RB_i} + (1-\ell_i) n_{RA_i} ) \rp \nonumber
 \end{align}
where the first step is true since $\mathbf{R}$ satisfies the cut-set bound (\ref{eq:cutSetDet}) with $\tilde{\mathcal{U}}=\mathcal{U}\cup \{1\}$ and $\ell_1=1$. Combining (\ref{eq:lemmaDet0}) and (\ref{eq:lemmaDet16}), we get $R_{A_1} \leq 0$, which is a contradiction to our assumption of  $R_{A_1} \geq 1$.

Similarly, if $n_{RB_j} \geq n_{RA_1}$, we can write
\begin{align}
& R_{B_1} +  \sum_{i \in \mathcal{U}} [ \ell_i R_{A_i} + (1-\ell_i) R_{B_i} ] \nonumber \\
 & \leq \min \lp \max \lp n_{B_1R}, \max_{i \in \mathcal{U} } ( \ell_i n_{A_iR} + (1-\ell_i) n_{B_iR} ) \rp , \right. \nonumber\\&  \left. \qquad \qquad \max \lp n_{RA_1}, \max_{i \in \mathcal{U} } ( \ell_i n_{RB_i} + (1-\ell_i) n_{RA_i} ) \rp \rp \nonumber\\
& \stackrel{\eqref{eq:lemmaDet13}}{=} \min \lp  \max \lp n_{B_1R}, \max_{i \in \mathcal{U} } ( \ell_i n_{A_iR} + (1-\ell_i) n_{B_iR} ) \rp , \right. \nonumber \\
 & \qquad \qquad \left. \max \lp n_{RA_1}, n_{RB_j} \rp \rp \nonumber\\
& \stackrel{(n_{RB_j} \geq n_{RA_1})}{=} \min \Big(  \max \Big( n_{B_1R}, \nonumber \\
 & \qquad \qquad \max_{i \in \mathcal{U} } ( \ell_i n_{A_iR} + (1-\ell_i) n_{B_iR} ) \Big) , n_{RB_j}  \Big) \nonumber\\
& \stackrel{\eqref{eq:lemmaDet14}}{=}  n_{RB_j},\nonumber\\
\label{eq:lemmaDet17} & \stackrel{\eqref{eq:lemmaDet15}}{=} \min \lp \max_{i \in \mathcal{U} } ( \ell_i n_{A_iR} + (1-\ell_i) n_{B_iR} ), \right. \\
& \qquad \qquad  \left. \max_{i \in \mathcal{U} } ( \ell_i n_{RB_i} + (1-\ell_i) n_{RA_i} ) \rp \nonumber
\end{align}
where the first step is true since $\mathbf{R}$ satisfies the cut-set bound (\ref{eq:cutSetDet}) with $\tilde{\mathcal{U}}=\mathcal{U}\cup \{1\}$ and $\ell_1=0$. Combining (\ref{eq:lemmaDet0}) and (\ref{eq:lemmaDet17}), we get $R_{B_1} \leq 0$, which is a contradiction to our assumption of  $R_{A_1} \geq 1$.

\item $\exists j \in \{2,\ldots,M \}$ such that,  $j \in  \mathcal{U}$, $ \ell_j=0$, and
\begin{align*}
 n'_{RA_j} &=  \max_{i \in \mathcal{U} } ( \ell_i n'_{RB_i} + (1-\ell_i) n'_{RA_i} ), \\
 n'_{RA_j} &=  \min \lp \max_{i \in \mathcal{U} } ( \ell_i n'_{A_iR} + (1-\ell_i) n'_{B_iR} ), \right. \\
 & \qquad \qquad \left. \max_{i \in \mathcal{U} } ( \ell_i n'_{RB_i} + (1-\ell_i) n'_{RA_i} ) \rp\\
  n'_{RA_j} &=  n_{RA_j}-1.
\end{align*}
The proof that this case can not also happen is very similar to the previous case, hence we omit repetition.
\end{enumerate}

\section{\label{app:proofDetCase2}}

In this Appendix we prove that the reduced rate-tuple $\mathbf{R}'=(R_{A_1}-1,0,R_{A_2},0,\cdots, R_{A_M},0)$, with {$R_{A_1}\geq 1$}, created in case 2 of the proof of Theorem \ref{thm:main} is in the cut-set region of the reduced network (defined in (\ref{eq:reduced1})-(\ref{eq:reduced4})). Since $R_{B_1}=\cdots=R_{B_M}=0$, we just need to show that
\begin{align*}
\sum_{i \in \mathcal{U}} R'_{A_i}  \leq & \min \lp \max_{i \in \mathcal{U} }n'_{A_iR}, \max_{i \in \mathcal{U} }n'_{RB_i}\rp, \\
 & \qquad \forall \mathcal{U}\subseteq \{1, \dots, M \}.
\end{align*}

If $1 \in \mathcal{U}$, we have
\begin{align*}
 & \sum_{i \in \mathcal{U}}  R'_{A_i}  \stackrel{(1 \in \mathcal{U})}{=}\sum_{i \in \mathcal{U}} R_{A_i} -1 \\
& \quad \stackrel{\eqref{eq:cutSetDet}}{\leq}   \min \lp \max_{i \in \mathcal{U} } n_{A_iR} , \max_{i \in \mathcal{U} } n_{RB_i} \rp -1 \\
&\quad  =  \min \lp \max_{i \in \mathcal{U} } (n_{A_iR}-1), \max_{i \in \mathcal{U} } ( n_{RB_i}-1) \rp \\
& \quad \stackrel{\eqref{eq:reduced1}, \eqref{eq:reduced4}}{\leq}   \min \lp \max_{i \in \mathcal{U} }  n'_{A_iR} , \max_{i \in \mathcal{U} } n'_{RB_i}  \rp.
\end{align*}

If $1 \notin \mathcal{U}$, we have
\begin{align*} & \sum_{i \in \mathcal{U}} R'_{A_i} \stackrel{(1 \notin \mathcal{U})}{=} \sum_{i \in \mathcal{U}} R_{A_i}  \\
& \quad \stackrel{\eqref{eq:cutSetDet}}{\leq}   \min \lp \max_{i \in \mathcal{U} } n_{A_iR} , \max_{i \in \mathcal{U} } n_{RB_i} \rp \\
&\quad  =  \min \lp \max_{i \in \mathcal{U} } (n_{A_iR}-1) , \max_{i \in \mathcal{U} } (n_{RB_i}-1)  \rp +1  \\
& \quad \stackrel{\eqref{eq:reduced1},\eqref{eq:reduced4}}{\leq}   \min \lp \max_{i \in \mathcal{U} } n'_{A_iR}, \max_{i \in \mathcal{U} } n'_{RB_i}  \rp +1.
\end{align*}
Therefore, if $1 \notin \mathcal{U}$, the only way to violate the cut-set bound is to have all above inequalities as equality, i.e.,
\begin{align} \label{eq:lemmaDet0b}  \sum_{i \in \mathcal{U}} R_{A_i}   &=   \min \lp \max_{i \in \mathcal{U} } n_{A_iR} , \max_{i \in \mathcal{U} }  n_{RB_i} \rp \\
\label{eq:lemmaDet1b} & = \min \lp \max_{i \in \mathcal{U} } n'_{A_iR}, \max_{i \in \mathcal{U} } n'_{RB_i}  \rp +1 .
\end{align}
However, we show that this is in contradiction to our assumption of $R_{A_1}\geq 1$. To see this, note that by (\ref{eq:reduced1}-\ref{eq:reduced4}), the equality in (\ref{eq:lemmaDet0b}) and (\ref{eq:lemmaDet1b}) happens only if we have one the following two cases.
\begin{enumerate}
\item $\exists j \in \{2,\ldots,M \}$ such that,  $j \in  \mathcal{U}$ and
\begin{align}
\label{eq:lemmaDet2b} n'_{A_jR} &=  \max_{i \in \mathcal{U} } n'_{A_iR} , \\
\label{eq:lemmaDet3b} n'_{A_jR} &=  \min \lp \max_{i \in \mathcal{U} } n'_{A_iR} , \max_{i \in \mathcal{U} }  n'_{RB_i}  \rp,\\
\label{eq:lemmaDet4b} n_{A_jR} &=  \max_{i \in \mathcal{U} } n_{A_iR} , \\
\label{eq:lemmaDet5b} n_{A_jR} &=  \min \lp \max_{i \in \mathcal{U} } n_{A_iR} , \max_{i \in \mathcal{U} }  n_{RB_i}  \rp,\\
\label{eq:lemmaDet6b}  n'_{A_jR} &=  n_{A_jR}-1.
\end{align}

Since $n'_{A_jR}=n_{A_jR}-1$, we have $n_{A_jR} \geq l_u=n_{A_1R}$. Hence, we can write
\begin{align}
& R_{A_1} +  \sum_{i \in \mathcal{U}} R_{A_i}  \leq \min \lp \max \lp n_{A_1R}, \max_{i \in \mathcal{U} } n_{A_iR} \rp, \right. \nonumber \\
& \qquad \qquad \qquad  \qquad \qquad \left. \max \lp n_{RB_1}, \max_{i \in \mathcal{U} } n_{RB_i} \rp \rp \nonumber\\
& \stackrel{\eqref{eq:lemmaDet4b}}{=} \min \lp \max \lp n_{A_1R}, n_{A_jR} \rp , \right. \nonumber \\
 & \qquad \qquad \left. \max \lp n_{RB_1}, \max_{i \in \mathcal{U} } n_{RB_i} \rp \rp \nonumber\\
& \stackrel{(n_{A_jR} \geq n_{A_1R})}{=} \min \lp  n_{A_jR}  , \max \lp n_{RB_1}, \max_{i \in \mathcal{U} } n_{RB_i}  \rp \rp \nonumber\\
\label{eq:lemmaDet7b} & \stackrel{\eqref{eq:lemmaDet5b}}{=}  n_{A_jR}= \min \lp \max_{i \in \mathcal{U} } n_{A_iR} , \max_{i \in \mathcal{U} }  n_{RB_i}  \rp,
\end{align}
where the first step is true since $\mathbf{R}$ satisfies the cut-set bound (\ref{eq:cutSetDet}) with $\tilde{\mathcal{U}}=\mathcal{U}\cup \{1\}$. Combining (\ref{eq:lemmaDet0b}) and (\ref{eq:lemmaDet7b}), we get $R_{A_1} \leq 0$, which is a contradiction to our assumption of  $R_{A_1} \geq 1$. Therefore, this case can not happen.

\item $\exists j \in \{2,\ldots,M \}$ such that,  $j \in  \mathcal{U}$ and
\begin{align*}
 n'_{RB_j} &=  \max_{i \in \mathcal{U} } n'_{RB_i} , \\
n'_{RB_j} &=  \min \lp \max_{i \in \mathcal{U} } n'_{A_iR} , \max_{i \in \mathcal{U} }  n'_{RB_i}  \rp,\\
 n_{RB_j} &=  \max_{i \in \mathcal{U} } n_{RB_i} , \\
 n_{A_jR} &=  \min \lp \max_{i \in \mathcal{U} } n_{A_iR} , \max_{i \in \mathcal{U} }  n_{RB_i}  \rp,\\
 n'_{RB_j} &=  n_{RB_j}-1.
\end{align*}
The proof that this case can not also happen is very similar to the previous case, hence we omit repetition.
\end{enumerate}

\section{Proof of Lemma \ref{lem:CaseReduction}}~\label{app:lemCaseReduction}
Since the proof for both pairs are similar, we only bring the proof for pair $i=1$.
We claim that if $|h_{B_1R}| > |h_{A_1R}|$ and $\mathbf{R}\in \mathcal{C}_u$, then  $\mathbf{R}\in \mathcal{\tilde{C}}_u$, where $ \mathcal{\tilde{C}}_u$ is the up-link  cut-set region of the network resulted by weakening  $|h_{B_1R}|$ and setting it equal to $|h_{A_1R}|$. We call the new (undermined) uplink channel gains $(\tilde{h}_{A_1R},\tilde{h}_{B_1R},\tilde{h}_{A_2R},\tilde{h}_{B_2R})$. The claim is justified by check marking equations (\ref{eq:GaussCut-set bound 1}) to (\ref{eq:GaussCut-set bound 8}) for new capacities (with infinite down-link channel gains). The only non-obvious inequalities are the ones in which $\tilde{h}_{B_1R}$ appears. By symmetry we only have to verify that (\ref{eq:GaussCut-set bound 2}) and (\ref{eq:GaussCut-set bound 8}) hold.  Start with the original equations for $(h_{A_1R},h_{B_1R},h_{A_2R},h_{B_2R})$  and note that the LHS of equations (\ref{eq:GaussCut-set bound 2}) and (\ref{eq:GaussCut-set bound 8}) are less than or equal to the LHS of (\ref{eq:GaussCut-set bound 1}) and (\ref{eq:GaussCut-set bound 5}) respectively and thus less than their RHS. Now replace $h_{A_1R}$ with $\tilde{h}_{B_1R}$ and $h_{A_2R}$ with $\tilde{h}_{A_2R}$ to get the desired inequalities. A similar argument on the down-link cut-set region shows that we can make the down-link channel gains of each pair consistent (in ordering) with the transmission rate and this completes the proof.

\section{Decoding at the relay}~\label{sec:AppUplink}
We receive the following signal at the relay
\begin{align}
y_R & = h_{A_1R}\sqrt{\alpha_{A_1}^{(1)}}x_{A_1}^{(1)}+h_{A_1R}\sqrt{\alpha_{A_1}^{(2)}}x_{A_1}^{(2)}+h_{B_1R}x_{B_1} \nonumber\\
& +h_{A_2R}\sqrt{\alpha_{A_2}^{(1)}}x_{A_2}^{(1)}+h_{A_2R}\sqrt{\alpha_{A_2}^{(2)}}x_{A_2}^{(2)}+h_{B_2R}x_{B_2}+z_R \nonumber.
\end{align}

For the case considered here ($|h_{A_1R}|\geq  |h_{B_1R}|\geq |h_{A_2R}|\geq |h_{B_2R}|$), we have the following decoding order at the relay: $x_{A_1}^{(1)}\rightarrow t \rightarrow x_{A_2}^{(1)}\rightarrow f$.
It follows the decoding of the signals from pair $(A_1,B_1)$.

Decoding of $x_{A_1}^{(1)}$ can be done with low error probability as long as
\begin{align}
& R_{A_1}^{(1)}\leq \nonumber \\ & C\left(\frac{|h_{A_1R}|^2P\alpha_{A_1}^{(1)}}{2\alpha_{B_1}^{(2)}|h_{B_1R}|^2P+\alpha_{A_2}^{(1)}|h_{A_2R}|^2P+2\alpha_{B_2}^{(2)}|h_{B_2R}|^2P+1}\right)\nonumber
\end{align}
Once $x_{A_1}^{(1)}$ is decoded, it can be subtracted successfully from the received signal.
Thus, we have
\begin{align}
\tilde{y}_R  = & h_{B_1R}\sqrt{\alpha_{B_1}^{(2)}}\underbrace{\left(x_{A_1}^{(2)}+x_{B_1}^{(2)}\right)}_{t} +h_{A_2R}\sqrt{\alpha_{A_2}^{(1)}}x_{A_2}^{(1)} \nonumber \\ & +h_{B_2R}\sqrt{\alpha_{B_2}^{(2)}}\underbrace{\left(x_{A_2}^{(2)}+ x_{B_2}^{(2)}\right)}_{f}+z_R\nonumber
\end{align}
Next, the sum codeword $t$ of the lattice codes from $x_{A_1}^{(2)}$ and $x_{B_1}$ is decoded.
The decoding of $t$ can be done with low error probability as long as
\begin{align}
R_{A_1}^{(2)}, R_{B_1}&\leq \log\left(\frac{|h_{B_1R}|^2P\alpha_{B_1}^{(2)}}{\alpha_{A_2}^{(1)}|h_{A_2R}|^2P+2\alpha_{B_2}^{(2)} |h_{B_2R}|^2P+1}\right)^+\nonumber.
\end{align}
Once $t$ is decoded, it can be subtracted successfully from the received signal. Thus, we have
\begin{align}
\hat{y}_R= &   h_{A_2R}\sqrt{\alpha_{A_2}^{(1)}}x_{A_2}^{(1)} + h_{B_2R}\sqrt{\alpha_{B_2}^{(2)}}f+  z \nonumber.
\end{align}
It follows the decoding of the signals from pair $(A_2,B_2)$.
beginning with the decoding of the Gaussian $x_{A_2}^{(1)}$.
This can be done with low probability as long as
\begin{align}
R_{A_2}^{(1)}\leq & C \left(\frac{|h_{A_2R}|^2P\alpha_{A_2}^{(1)}}
                  {2|h_{B_2R}|^2P\alpha_{B_2}^{(2)}+1}\right).\nonumber
\end{align}
Once $x_{A_2}^{(1)}$ is decoded, it can be subtracted successfully from the received signal.
Thus, we have
\begin{align}
\hat{y}_R= & \sqrt{\alpha_{B_2}^{(2)}}h_{B_2R}f+  z\nonumber
\end{align}
As a final step, we want to decode the lattice point $f$.
This can be done with low probability as long as
\begin{align}
R_{B_2}\leq & \left(\log \left(\alpha_{B_2}^{(2)}|h_{B_2R}|^2P\right)\right)^+.\nonumber
\end{align}

\section{Proof of Lemma \ref{lem:AchRegionUL}}~\label{app:lemAchRegionUL}
The three cases we have to consider are given in sections~\ref{sec:FirstCaseUplink}~to~\ref{sec:ThirdCaseUplink}. In the following we provide the proof for each case separately.
\subsection{Case $|h_{A_1R}|\geq  |h_{B_1R}|\geq |h_{A_2R}|\geq |h_{B_2R}|$}
Consider a 4-tuple $(r_{A_1},r_{B_1},r_{A_2},r_{B_2})$ satisfying~\eqref{eq:RatesUplink1}-\eqref{eq:RatesUplink6}. Starting with~\eqref{eq:AchRatesUplink3}, we equate
\begin{align}\label{eq:PowConstraintAlphaB22}
\left(\log \left(\alpha_{B_2}^{(2)}|h_{B_2R}|^2P\right)\right)^+=r_{B_2} \Rightarrow \alpha_{B_2}^{(2)}= \frac{2^{r_{B_2}}}{|h_{B_2R}|^2P}.
\end{align}
Now from~\eqref{eq:RatesUplink2} we know that
\begin{align}
\alpha_{B_2}^{(2)}\leq  \frac{1+|h_{B_2R}|^2P}{2|h_{B_2R}|^2P}& \stackrel{|h_{B_2R}|^2P\geq 1}{\leq} 1,\nonumber
\end{align}
which shows that this is a valid choice of $\alpha_{B_2}^{(2)}$.
Next we equate  $r_{A_2}-r_{B_2}=\mbox{RHS of}$~\eqref{eq:AchRatesUplink3} and use ~\eqref{eq:PowConstraintAlphaB22}. We get
\begin{align}\label{eq:PowConstraintAlphaA21a}
\alpha_{A_2}^{(1)}=\frac{\left(2^{r_{A_2}-r_{B_2}}-1\right)\left(2^{r_{B_2}}+1\right)}{|h_{A_2R}|^2P}.
\end{align}
Using~\eqref{eq:HowtoChooseAlpha} and adding this to~\eqref{eq:PowConstraintAlphaA21a} we get
\begin{align}
\alpha_{A_2}^{(1)} +\alpha_{A_2}^{(2)}=& \frac{2\cdot 2^{r_{A_2}}+2^{r_{A_2}-r_{B_2}}-2^{r_{B_2}}-1}{|h_{A_2R}|^2P} \nonumber \\
& \leq \frac{3\cdot 2^{r_{A_2}}-2}{|h_{A_2R}|^2P} \stackrel{\eqref{eq:RatesUplink2}}{\leq} 1,\nonumber
\end{align}
verifying that this is a valid choice of $\alpha_{A_2}^{(1)}$, $\alpha_{A_2}^{(2)}$.
Then we equate $r_{B_1}=\mbox{RHS of}$~\eqref{eq:AchRatesUplink2}, by setting
\begin{align}\label{eq:PowConstraintAB1a}
\alpha_{B_1}^{(2)}= & \frac{2^{r_{B_1}}2^{r_{A_2}-r_{B_2}}\left(2\cdot 2^{r_{B_2}}+1\right)}{|h_{B_1R}|^2P} \\
& \leq \frac{ 3\cdot 2^{r_{B_1}+r_{A_2}}}{|h_{B_1R}|^2P}\stackrel{\eqref{eq:RatesUplink6}, |h_{B_1R}|^2P\geq \frac{3}{2}}{\leq} 1, \nonumber
\end{align}
verifying that this is a valid choice of $\alpha_{B_1}^{(2)}$.
Finally we equate  $r_{A_1}-r_{B_1}=\mbox{RHS of}$~\eqref{eq:AchRatesUplink1}, by setting
\begin{align}\label{eq:PowConstraintAA11a}
& \alpha_{A_1}^{(2)}= \left(2^{r_{A_1}-r_{B_1}}-1\right) \times \\ & \frac{\left(2^{r_{A_2}+r_{B_1}-r_{B_2}}\left(1+22^{r_{B_2}}\right)+2^{r_{A_2}-r_{B_2}}\left(1+2^{r_{B_2}}\right)+2^{r_{B_2}}\right)}{|h_{A_1R}|^2P}. \nonumber
\end{align}
Using~\eqref{eq:HowtoChooseAlpha} and~\eqref{eq:PowConstraintAB1a} and adding this to~\eqref{eq:PowConstraintAA11a} we  get
\begin{align}
 \alpha_{A_1}^{(1)} +\alpha_{A_1}^{(2)} &\leq \frac{5\cdot 2^{r_{A_1}+r_{A_2}}+2^{r_{A_1}+r_{B_2}}-3}{|h_{A_1R}|^2P}\stackrel{\eqref{eq:RatesUplink3}}{\leq} 1.\nonumber
\end{align}
which shows that this is a valid choice of $\alpha_{A_1}^{(1)}$, $\alpha_{A_1}^{(2)}$.

\subsection{Case $|h_{A_1R}|\geq |h_{A_2R}|\geq  |h_{B_1R}|\geq |h_{B_2R}|$}
Consider a 4-tuple $(r_{A_1},r_{B_1},r_{A_2},r_{B_2})$ satisfying~\eqref{eq:RatesUplink1}-\eqref{eq:RatesUplink6}. Starting with~\eqref{eq:AchRatesUplink3Case2}, we equate
\begin{align}\label{eq:PowConstraintAlphaB22UpCase2}
\left(\log \left(\alpha_{B_2}^{(2)}|h_{B_2R}|^2P\right)\right)^+=r_{B_2} \Rightarrow \alpha_{B_2}^{(2)}= \frac{2^{r_{B_2}}}{|h_{B_2R}|^2P}.
\end{align}
Now from~\eqref{eq:RatesUplink2} we know that
\begin{align}
\alpha_{B_2}^{(2)}\leq  \frac{1+|h_{B_2R}|^2P}{2|h_{B_2R}|^2P}& \stackrel{|h_{B_2R}|^2P\geq 1}{\leq} 1,\nonumber
\end{align}
which shows that this is a valid choice of $\alpha_{B_2}^{(2)}$.
Next we equate $r_{B_1}=\mbox{RHS of}$~\eqref{eq:AchRatesUplink2Case2}, by setting
\begin{equation}\label{eq:PowConstraintAB1aUpCase2}
\alpha_{B_1}^{(2)}=\frac{2^{r_{B_1}}\left(2\cdot 2^{r_{B_2}}+1\right)}{|h_{B_1R}|^2P} \stackrel{\eqref{eq:RatesUplink1},\eqref{eq:RatesUplink5}, |h_{B_1R}|^2P\geq 2}{\leq} 1,
\end{equation}
verifying that this is a valid choice of $\alpha_{B_1}^{(2)}$.
Then we equate  $r_{A_2}-r_{B_2}=\mbox{RHS of}$~\eqref{eq:AchRatesUplink1Case2MAC2}, by setting
\begin{align}\label{eq:PowConstraintAlphaA21aUpCase2}
\alpha_{A_2}^{(1)}=\frac{\left(2^{r_{A_2}-r_{B_2}}-1\right)\left(4\cdot 2^{r_{B_1}+r_{B_2}}+2\left(2^{r_{B_1}+1}+2^{r_{B_2}}\right)+1\right)}{|h_{A_2R}|^2P}.
\end{align}
Using~\eqref{eq:HowtoChooseAlpha}, $2^x+2^y\leq 2^{x+y}$ with $x,y\geq 1$, and~\eqref{eq:PowConstraintAlphaB22UpCase2} and adding this to~\eqref{eq:PowConstraintAlphaA21aUpCase2} we get
\begin{align}
\alpha_{A_2}^{(1)} +\alpha_{A_2}^{(2)}& \leq \frac{6\cdot 2^{r_{A_2}+r_{B_1}+r_{A_2}-8}}{|h_{A_2R}|^2P} \stackrel{\eqref{eq:RatesUplink3},\eqref{eq:RatesUplink6}}{\leq} 1,\nonumber
\end{align}
verifying that this is a valid choice of $\alpha_{A_2}^{(1)}$, $\alpha_{A_2}^{(2)}$.
Now we equate  $r_{A_1}-r_{B_1}=\mbox{RHS of}$~\eqref{eq:AchRatesUplink1Case2MAC1}, by setting
\begin{align}\label{eq:PowConstraintAA11aUpCase2}
\alpha_{A_1}^{(1)}= \frac{\left(2^{r_{A_1}-r_{B_1}}-1\right)\left(4\cdot 2^{r_{B_1}+r_{B_2}}+2\left(2^{r_{B_1}+1}+2^{r_{B_2}}\right)+1\right)}{|h_{A_1R}|^2P}.
\end{align}
Using~\eqref{eq:HowtoChooseAlpha}, $2^x+2^y\leq 2^{x+y}$ with $x,y\geq 1$, and~\eqref{eq:PowConstraintAB1aUpCase2} and adding this to~\eqref{eq:PowConstraintAA11aUpCase2} we  get
\begin{align}
\alpha_{A_1}^{(1)} +\alpha_{A_1}^{(2)}  = & \frac{2^{r_{B_1}}\left(2\cdot 2^{r_{B_2}}+1\right)}{|h_{A_1R}|^2P} + \left(2^{r_{A_1}-r_{B_1}}-1\right) \times \nonumber \\
& \frac{\left(4\cdot 2^{r_{B_1}+r_{B_2}}+2\left(2^{r_{B_1}+1}+2^{r_{B_2}}\right)+1\right)}{|h_{A_1R}|^2P} \nonumber\\
 \leq & \frac{6\cdot 2^{r_{A_1}+r_{B_2}}+2^{r_{A_1}}-6}{|h_{A_1R}|^2P}\stackrel{\eqref{eq:RatesUplink1},\eqref{eq:RatesUplink4}}{\leq} 1.\nonumber
\end{align}
which shows that this is a valid choice of $\alpha_{A_1}^{(1)}$, $\alpha_{A_1}^{(2)}$.

Finally we equate  $r_{A_1}-r_{B_1}+r_{A_2}-r_{B_2}=\mbox{RHS of}$~\eqref{eq:AchRatesUplink1Case2MACSum}, by setting
\begin{align}\label{eq:PowConstraintAA11aUpCase2Sum}
\alpha_{A_1}^{(1)}= & \left(2^{r_{A_1}-r_{B_1}+r_{A_2}-r_{B_2}}-2^{r_{A_2}-r_{B_2}}\right) \times \\
& \frac{\left(4\cdot 2^{r_{B_2}+r_{B_1}}+2\left(2^{r_{B_2}}+2^{r_{B_1}}\right)+1\right)}{|h_{A_1R}|^2P}. \nonumber
\end{align}
Using~\eqref{eq:HowtoChooseAlpha}, $2^x+2^y\leq 2^{x+y}$ with $x,y\geq 1$, and~\eqref{eq:PowConstraintAB1aUpCase2} and adding this to~\eqref{eq:PowConstraintAA11aUpCase2Sum} we  get
\begin{align}
 \alpha_{A_1}^{(1)} +\alpha_{A_1}^{(2)} = & \left(2^{r_{A_1}-r_{B_1}+r_{A_2}-r_{B_2}}-2^{r_{A_2}-r_{B_2}}\right) \times \nonumber \\
 & \frac{4\cdot 2^{r_{B_2}+r_{B_1}}+2\left(2^{r_{B_2}}+2^{r_{B_1}}\right)+1}{|h_{A_1R}|^2P}\nonumber\\
& +\frac{2^{r_{B_1}}\left(2\cdot 2^{r_{B_2}}+1\right)}{|h_{A_1R}|^2P}  \nonumber \\
 \stackrel{r_{A_2}\geq r_{B_2}}{\leq} & \frac{7\cdot 2^{r_{A_1}+r_{A_2}}-2}{|h_{A_1R}|^2P}\stackrel{\eqref{eq:RatesUplink3}}{\leq} 1.\nonumber
\end{align}
which shows that this is a valid choice of $\alpha_{A_1}^{(1)}$, $\alpha_{A_1}^{(2)}$.

\subsection{Case $|h_{A_1R}|\geq  |h_{A_2R}|\geq |h_{B_2R}|\geq |h_{B_1R}|$}
Consider a 4-tuple $(r_{A_1},r_{B_1},r_{A_2},r_{B_2})$ satisfying~\eqref{eq:RatesUplink1}-\eqref{eq:RatesUplink6}. Starting with~\eqref{eq:AchRatesUplink3Case3}, we equate
\begin{align}\label{eq:PowConstraintAlphaB22UpCase3}
\left(\log \left(\alpha_{B_1}^{(2)}|h_{B_1R}|^2P\right)\right)^+=r_{B_1} \Rightarrow \alpha_{B_1}^{(2)}= \frac{2^{r_{B_1}}}{|h_{B_1R}|^2P}.
\end{align}
Now from~\eqref{eq:RatesUplink1} we know that
\begin{align}
\alpha_{B_1}^{(2)}\leq  \frac{1+|h_{B_1R}|^2P}{2|h_{B_1R}|^2P}& \stackrel{|h_{B_1R}|^2P\geq 1}{\leq} 1,\nonumber
\end{align}
which shows that this is a valid choice of $\alpha_{B_1}^{(2)}$.
Next we equate $r_{B_2}=\mbox{RHS of}$~\eqref{eq:AchRatesUplink2Case3}, by setting
\begin{equation}\label{eq:PowConstraintAB1aUpCase3}
\alpha_{B_2}^{(2)}=\frac{2^{r_{B_2}}\left(2^{r_{B_1}}+1\right)}{|h_{B_2R}|^2P} \stackrel{\eqref{eq:RatesUplink2},\eqref{eq:RatesUplink5}, |h_{B_2R}|^2P\geq\frac{5}{2}}{\leq} 1,
\end{equation}
verifying that this is a valid choice of $\alpha_{B_2}^{(2)}$.
Then we equate  $r_{A_2}-r_{B_2}=\mbox{RHS of}$~\eqref{eq:AchRatesUplink1Case3MAC2}, by setting
\begin{align}\label{eq:PowConstraintAlphaA21aUpCase3}
\alpha_{A_2}^{(1)}=\frac{\left(2^{r_{A_2}-r_{B_2}}-1\right)\left(4\cdot 2^{r_{B_1}+r_{B_2}}+2\left(2^{r_{B_2}}+2^{r_{B_1}}\right)+1\right)}{|h_{A_2R}|^2P}.
\end{align}
Using~\eqref{eq:HowtoChooseAlpha}, $2^x+2^y\leq 2^{x+y}$ with $x,y\geq 1$, and~\eqref{eq:PowConstraintAB1aUpCase3} and adding this to~\eqref{eq:PowConstraintAlphaA21aUpCase3} we get
\begin{align}
\alpha_{A_2}^{(1)} +\alpha_{A_2}^{(2)}  = & \left(2^{r_{A_2}-r_{B_2}}-1\right) \times \nonumber \\
& \frac{\left(4\cdot 2^{r_{B_1}+r_{B_2}}+2\left(2^{r_{B_2}}+2^{r_{B_1}}\right)+1\right)}{|h_{A_2R}|^2P}+\nonumber\\
& \frac{\left(2\cdot 2^{r_{B_2}+r_{B_1}}+2^{r_{B_2}}\right)}{|h_{A_2R}|^2P}  \nonumber \\
\leq  & \frac{6\cdot 2^{r_{A_2}+r_{B_1}}+2^{r_{A_2}}-6}{|h_{A_2R}|^2P} \stackrel{\eqref{eq:RatesUplink2},\eqref{eq:RatesUplink6}}{\leq} 1,\nonumber
\end{align}
verifying that this is a valid choice of $\alpha_{A_2}^{(1)}$, $\alpha_{A_2}^{(2)}$.
Now we equate  $r_{A_1}-r_{B_1}=\mbox{RHS of}$~\eqref{eq:AchRatesUplink1Case3MAC1}, by setting
\begin{align}\label{eq:PowConstraintAA11aUpCase3}
\alpha_{A_1}^{(1)}= \frac{\left(2^{r_{A_1}-r_{B_1}}-1\right)\left(4\cdot 2^{r_{B_2}+r_{B_1}}+2\left(2^{r_{B_1}}+2^{r_{B_2}}\right)+1\right)}{|h_{A_1R}|^2P}.
\end{align}
Using~\eqref{eq:HowtoChooseAlpha}, $2^x+2^y\leq 2^{x+y}$ with $x,y\geq 1$, and~\eqref{eq:PowConstraintAlphaB22UpCase3} and adding this to~\eqref{eq:PowConstraintAA11aUpCase3} we  get
\begin{align}
 \alpha_{A_1}^{(1)} +\alpha_{A_1}^{(2)}   \leq & \left(2^{r_{A_1}-r_{B_1}}-1\right) \times \nonumber \\
 & \frac{\left(4\cdot2^{r_{B_2}+r_{B_1}}+2\left(2^{r_{B_1}}+2^{r_{B_2}}\right)+1\right)}{|h_{A_1R}|^2P} \nonumber \\
 & +\frac{2^{r_{B_1}}}{|h_{A_1R}|^2P} \nonumber\\
 \leq &  \frac{6\cdot2^{r_{A_1}+r_{B_2}}+2^{r_{A_1}}-8}{|h_{A_1R}|^2P}\stackrel{\eqref{eq:RatesUplink1},\eqref{eq:RatesUplink4}}{\leq} 1.\nonumber
\end{align}
which shows that this is a valid choice of $\alpha_{A_1}^{(1)}$, $\alpha_{A_1}^{(2)}$.

Finally we equate  $r_{A_1}-r_{B_1}+r_{A_2}-r_{B_2}=\mbox{RHS of}$~\eqref{eq:AchRatesUplink1Case3MACSum}, by setting
\begin{align}\label{eq:PowConstraintAA11aUpCase3Sum}
\alpha_{A_1}^{(1)}= & \left(2^{r_{A_1}-r_{B_1}+r_{A_2}-r_{B_2}}-2^{r_{A_2}-r_{B_2}}\right) \times \\
& \frac{\left(4\cdot2^{r_{B_2}+r_{B_1}}+2\left(2^{r_{B_1}}+2^{r_{B_2}}\right)+1\right)}{|h_{A_1R}|^2P}. \nonumber
\end{align}
Using~\eqref{eq:HowtoChooseAlpha}, $2^x+2^y\leq 2^{x+y}$ with $x,y\geq 1$, and~\eqref{eq:PowConstraintAB1aUpCase3} and adding this to~\eqref{eq:PowConstraintAA11aUpCase3Sum} we  get
\begin{align}
 \alpha_{A_1}^{(1)} +\alpha_{A_1}^{(2)} =& \left(2^{r_{A_1}-r_{B_1}+r_{A_2}-r_{B_2}}-2^{r_{A_2}-r_{B_2}}\right) \nonumber \\
 &
 \frac{\left(4\cdot 2^{r_{B_2}+r_{B_1}}+2\left(2^{r_{B_1}}+2^{r_{B_2}}\right)+1\right)}{|h_{A_1R}|^2P} \nonumber \\
 & + \frac{2^{r_{B_1}}}{|h_{A_1R}|^2P} \nonumber\\
  \leq & \frac{7\cdot 2^{r_{A_1}+r_{A_2}}-2}{|h_{A_1R}|^2P}\stackrel{\eqref{eq:RatesUplink3}}{\leq} 1.\nonumber
\end{align}
which shows that this is a valid choice of $\alpha_{A_1}^{(1)}$, $\alpha_{A_1}^{(2)}$.

\section{Decoding at the nodes}~\label{sec:AppDownlink}
With $R_R^{(1)}=R_{A_1}^{(1)}$, $R_R^{(2)}=R_{A_1}^{(2)}=R_{B_1}$, $R_R^{(3)}=R_{A_2}^{(1)}$, $R_R^{(4)}=R_{A_2}^{(2)}=R_{B_2}$, we describe the decoding strategies at the nodes and the achievable rates for the case $|h_{RB_1}|\geq  |h_{RA_1}|\geq |h_{RB_2}|\geq |h_{RA_2}|$.
\subsection{Decoding at node $B_1$}
The node $B_1$ first decodes $x_R^{(4)}$ (corresponds to $f$ from the uplink) by treating $x_R^{(1)}$ to  $x_R^{(3)}$ as noise.
This can be done with low probability of error as long as
\begin{align}
R_R^{(4)}\leq C\left(\frac{|h_{RB_1}|^2P\alpha_R^{(4)}}{1+|h_{B_1R}|^2P\left(\sum_{j=1}^3\alpha_R^{(j)}\right)}\right)\nonumber
\end{align}
Once decoded, the signal $x_R^{(4)}$ is canceled from the received signal and $x_R^{(3)}$ (corresponds to $x_{A_2}^{(1)}$ from the uplink) is decoded by treating $x_R^{(1)}$ and $x_R^{(2)}$ as noise.
This can be done successfully with low probability of error as long as
\begin{align}
R_R^{(3)}\leq C\left(\frac{|h_{RB_1}|^2P\alpha_R^{(3)}}{1+|h_{B_1R}|^2P\left(\alpha_R^{(1)}+\alpha_R^{(2)}\right)}\right)\nonumber
\end{align}
Once decoded, the signal $x_R^{(3)}$ is canceled from the received signal and $x_R^{(2)}$ (corresponds to $t$ from the uplink) is decoded by treating $x_R^{(1)}$ as noise.
This can be done successfully with low probability of error as long as
\begin{align}\label{eq:DecXR2atB1}
R_R^{(2)}\leq C\left(\frac{|h_{RB_1}|^2P\alpha_R^{(2)}}{1+|h_{RB_1}|^2P\alpha_R^{(1)}}\right)
\end{align}
Once decoded, $x_R^{(2)}$ is canceled from the received signal.
Finally, $x_R^{(1)}$ (corresponds to $x_{A_1}^{(1)}$ from the uplink) is decoded free of interference. This can be done with low probability of error as long as
\begin{align}
R_R^{(1)}\leq C\left(|h_{RB_1}|^2P\alpha_R^{(1)}\right).\nonumber
\end{align}
\subsection{Decoding at node $A_1$}
The node $A_1$ proceeds similarly with the exception that $x_R^{(1)}$ is known already and can be canceled from the received signal. After having decoded $x_R^{(3)}$ and $x_R^{(4)}$, $x_R^{(2)}$ is decoded free of interference. This can be done with low probability of error as long as
\begin{align}\label{eq:DecXR2atA1}
R_R^{(2)}\leq C\left(|h_{RA_1}|^2P\alpha_R^{(2)}\right).
\end{align}

\subsection{Decoding at node $B_2$}
The receivers of the second pair have the same order of detection.
Thus, the node $B_2$ can decode $R_R^{(4)}$ with low probability of error as long as
\begin{align}\label{eq:DecXR4atB2}
R_R^{(4)}\leq C\left(\frac{|h_{RB_2}|^2P\alpha_R^{(4)}}{1+|h_{RB_2}|^2P\left(\sum_{j=1}^3\alpha_R^{(j)}\right)}\right).
\end{align}
Once decoded, the signal $x_R^{(4)}$ is canceled from the received signal and $x_R^{(3)}$ is decoded by treating $x_R^{(1)}$ and $x_R^{(2)}$ as noise.
This can be done successfully with low probability of error as long as
\begin{align}
R_R^{(3)}\leq C\left(\frac{|h_{RB_2}|^2P\alpha_R^{(3)}}{1+|h_{RB_2}|^2P\left(\alpha_R^{(1)}+\alpha_R^{(2)}\right)}\right)\nonumber
\end{align}

\subsection{Decoding at node $A_2$}
Assuming that the node $A_2$ knows the strategy of the relay and the codebook it has used, it can reconstruct $x_R^{(3)}$ perfectly, since it contains only its own message. Thus, it cancels the effect of $x_R^{(3)}$ from the received signal. As a next and final step, it decodes $x_R^{(4)}$.
This can be done with low probability of error as long as
\begin{align}\label{eq:DecXR4atA2}
R_R^{(4)}\leq C\left(\frac{|h_{RA_2}|^2P\alpha_R^{(4)}}{1+|h_{RA_2}|^2P\left(\alpha_R^{(1)}+\alpha_R^{(2)}\right)}\right)
\end{align}
Thus, in summary we have
\begin{align}
R_R^{(4)}\leq \min\left(\text{RHS  of~\eqref{eq:DecXR4atB2}},\text{RHS  of~\eqref{eq:DecXR4atA2}}\right)\nonumber
\end{align}
and
\begin{align}
R_R^{(2)}\leq \min\left(\text{RHS  of~\eqref{eq:DecXR2atA1}},\text{RHS  of~\eqref{eq:DecXR2atB1}}\right)\nonumber
\end{align}

\section{Proof of Lemma~\ref{lem:AchRegionDL}}~\label{app:lemAchRegionDL}
The three cases we have to consider are given in sections~\ref{sec:FirstCaseDownlink}-\ref{sec:ThirdCaseDownlink}. In the following we provide the proof for each case separately.
\subsection{Case $|h_{RB_1}|\geq  |h_{RA_1}|\geq |h_{RB_2}|\geq |h_{RA_2}|$}
Consider a 4-tuple $(r_{A_1},r_{B_1},r_{A_2},r_{B_2})$ satisfying~\eqref{eq:RatesDownlink1}-\eqref{eq:RatesDownlink6}. Starting with the first equation in~\eqref{eq:AchRatesDownlink3}, we equate
\begin{align}\label{eq:PowConstraintAlphaR1}
& \log \left(1+\alpha_{R}^{(1)}|h_{RB_1}|^2P\right)=r_{A_1}-r_{B_1}  \\
& \Rightarrow \alpha_{R}^{(1)}= \frac{2^{r_{A_1}-r_{B_1}}-1}{|h_{RB_1}|^2P}. \nonumber
\end{align}
Now from~\eqref{eq:RatesDownlink1} we know that
\begin{align}
\alpha_{R}^{(1)}\leq  \frac{\frac{1+|h_{RB_1}|^2P}{4}-1}{|h_{RB_1}|^2P}& \leq 1,\nonumber
\end{align}
which shows that this is a valid choice of $\alpha_{R}^{(1)}$.

From~\eqref{eq:AchRatesDownlink1}, we have
\begin{align}~\label{eq:AchRatesDownlink1UpBound}
R_{A_1}^{(2)},R_{B_1}\leq  C\left(\frac{|h_{RB_1}|^2P\alpha_R^{(2)}}{1+|h_{RB_1}|^2P\alpha_R^{(1)}}\right).
\end{align}
Next we equate  $r_{B_1}=\mbox{RHS of}$~\eqref{eq:AchRatesDownlink1UpBound}, by setting
\begin{align}\label{eq:PowConstraintAlphaR2a}
\alpha_{R}^{(2)}=\frac{\left(2^{r_{B_1}}-1\right)\left(2^{r_{A_1}-r_{B_1}}\right)}{|h_{RB_1}|^2P}.
\end{align}
Using~\eqref{eq:RatesDownlink1} and~\eqref{eq:PowConstraintAlphaR1} and adding this to~\eqref{eq:PowConstraintAlphaR2a} we get
\begin{align}\label{eq:VerifyAlphaR2a}
\alpha_{R}^{(1)} + \alpha_{R}^{(2)}\leq 2\frac{\frac{1+|h_{RB_1}|^2P}{4}-1}{|h_{RB_1}|^2P} \leq 1,
\end{align}
verifying that this is a valid choice of $\alpha_{R}^{(1)}$, $\alpha_{R}^{(2)}$.
Then we equate $r_{A_2}-r_{B_2}=\mbox{RHS of}$~\eqref{eq:AchRatesDownlink3} (second equation), by setting
\begin{equation}\label{eq:PowConstraintAlphaR3}
\alpha_{R}^{(3)}=\frac{\left(2^{r_{A_2}-r_{B_2}}-1\right)\left(1+\frac{|h_{RB_2}|^2P}{|h_{RB_1}|^2P}\left(2^{r_{B_2}}-1\right)\right)}{|h_{RB_2}|^2P} .
\end{equation}
Using~\eqref{eq:RatesDownlink2},~\eqref{eq:RatesDownlink3} and~\eqref{eq:VerifyAlphaR2a} and adding this to~\eqref{eq:PowConstraintAlphaR3} we get
\begin{align}
\sum_{j=1}^3\alpha_{R}^{(j)} \leq \frac{\frac{1+|h_{RB_2}|^2P}{4}-1}{|h_{RB_2}|^2P}+ 3\frac{\frac{1+|h_{RB_1}|^2P}{4}-1}{|h_{RB_1}|^2P} \leq 1,\nonumber
\end{align}
verifying that this is a valid choice of $\alpha_{R}^{(j)}$, $j=1\dots3$.
Finally from~\eqref{eq:AchRatesDownlink2}, we have
\begin{align}~\label{eq:AchRatesDownlink2UpBound}
R_{A_2}^{(2)},R_{B_2}\leq  C\left(\frac{|h_{RA_2}|^2P\alpha_R^{(4)}}{1+|h_{RA_2}|^2P\left(\alpha_R^{(1)}+\alpha_R^{(2)}\right)}\right).
\end{align}
Thus, we equate  $r_{B_2}=\mbox{RHS of}$~\eqref{eq:AchRatesDownlink2UpBound}, by setting
\begin{align}\label{eq:PowConstraintAlphaR4}
\alpha_{R}^{(4)}& = \frac{\left(2^{r_{B_2}}-1\right)}{|h_{RA_2}|^2P}\Big(1+\frac{|h_{RA_2}|^2P}{|h_{RB_1}|^2P}\left(2^{r_{A_1}}-1\right)\Big).
\end{align}
Using~\eqref{eq:RatesDownlink2}, ~\eqref{eq:RatesDownlink4},~\eqref{eq:PowConstraintAlphaR1},~\eqref{eq:PowConstraintAlphaR2a}, ~\eqref{eq:PowConstraintAlphaR3} and adding this to~\eqref{eq:PowConstraintAlphaR4} we get
\begin{align}
\sum_{j=1}^4\alpha_{R}^{(j)} \leq & \frac{\frac{1+|h_{RA_2}|^2P}{4}-1}{|h_{RA_2}|^2P} +\frac{\frac{1+|h_{RB_1}|^2P}{4}-1}{|h_{RB_1}|^2P} \nonumber \\
& +\frac{\frac{1+|h_{RB_2}|^2P}{4}-1}{|h_{RB_2}|^2P} +\frac{\frac{1+|h_{RB_2}|^2P}{4}-1}{|h_{RB_1}|^2P} \leq 1 \nonumber
\end{align}
which shows that this is a valid choice of $\alpha_{R}^{(j)}$, $j=1\dots4$.

\subsection{Case $|h_{RB_1}|\geq |h_{RB_2}|\geq  |h_{RA_1}|\geq |h_{RA_2}|$}

Consider a 4-tuple $(r_{A_1},r_{B_1},r_{A_2},r_{B_2})$ satisfying~\eqref{eq:RatesDownlink1}-\eqref{eq:RatesDownlink6}. Starting with the first equation in~\eqref{eq:AchRatesDownlink3Case2}, we equate
\begin{align}\label{eq:PowConstraintAlphaR1DoCase2}
& \log \left(1+\alpha_{R}^{(1)}|h_{RB_1}|^2P\right)=r_{A_1}-r_{B_1} \\
& \Rightarrow \alpha_{R}^{(1)}= \frac{2^{r_{A_1}-r_{B_1}}-1}{|h_{RB_1}|^2P}. \nonumber
\end{align}
Now from~\eqref{eq:RatesDownlink1} we know that
\begin{align}
\alpha_{R}^{(1)}\leq  \frac{\frac{1+|h_{RB_1}|^2P}{4}-1}{|h_{RB_1}|^2P}& \leq 1,\nonumber
\end{align}
which shows that this is a valid choice of $\alpha_{R}^{(1)}$.

Next we equate $r_{A_2}-r_{B_2}=\mbox{RHS of}$~\eqref{eq:AchRatesDownlink3Case2} (second equation), by setting
\begin{equation}\label{eq:PowConstraintAlphaR3DoCase2}
\alpha_{R}^{(3)}=\frac{\left(2^{r_{A_2}-r_{B_2}}-1\right)\left(1+\frac{|h_{RB_2}|^2P}{|h_{RB_1}|^2P}\left(2^{r_{A_1}-r_{B_1}}-1\right)\right)}{|h_{RB_2}|^2P} .
\end{equation}
Using~\eqref{eq:RatesDownlink2},~\eqref{eq:RatesDownlink3} and~\eqref{eq:VerifyAlphaR2a} and adding this to~\eqref{eq:PowConstraintAlphaR3DoCase2} we get
\begin{align}
\alpha_{R}^{(1)}+\alpha_{R}^{(3)} \leq \frac{\frac{1+|h_{RB_2}|^2P}{4}-1}{|h_{RB_2}|^2P}+ \frac{\frac{1+|h_{RB_1}|^2P}{4}-1}{|h_{RB_1}|^2P} \leq 1,\nonumber
\end{align}
verifying that this is a valid choice of $\alpha_{R}^{(1)}$, $\alpha_{R}^{(3)}$.

From~\eqref{eq:AchRatesDownlink1Case2}, we have
\begin{align}~\label{eq:AchRatesDownlink1UpBoundCase2}
R_{A_1}^{(2)}, R_{B_1}\leq  C\left(\frac{|h_{RA_1}|^2P\alpha_R^{(2)}}{1+|h_{RA_1}|^2P\alpha_R^{(3)}}\right).
\end{align}
Next we equate  $r_{B_1}=\mbox{RHS of}$~\eqref{eq:AchRatesDownlink1UpBoundCase2}, by setting
\begin{align}\label{eq:PowConstraintAlphaR2aDoCase2}
\alpha_{R}^{(2)}=\frac{\left(2^{r_{B_1}}-1\right)\left(1+|h_{RA_1}|^2P\alpha_{R}^{(3)}\right)}{|h_{RA_1}|^2P}.
\end{align}
Using~\eqref{eq:RatesDownlink1} and~\eqref{eq:PowConstraintAlphaR1DoCase2} and adding this to~\eqref{eq:PowConstraintAlphaR2aDoCase2} we get
\begin{align}\label{eq:VerifyAlphaR2aDoCase2}
\sum_{j=1}^3\alpha_{R}^{(j)}\leq & \frac{2^{r_{B_1}}-1}{|h_{RA_1}|^2P} +\frac{2^{r_{B_1}+r_{A_2}}-1}{|h_{RB_2}|^2P} +\frac{2^{r_{A_1}+r_{A_2}}-1}{|h_{RB_1}|^2P} \\
 & \leq 1, \nonumber
\end{align}
verifying that this is a valid choice of $\alpha_{R}^{(j)}$, $j=1\dots3$.
Finally from~\eqref{eq:AchRatesDownlink2}, we have
\begin{align}~\label{eq:AchRatesDownlink2UpBoundCase2}
R_{A_2}^{(2)},R_{B_2}\leq  C\left(\frac{|h_{RA_1}|^2P\alpha_R^{(4)}}{1+|h_{RA_1}|^2P\left(\alpha_R^{(3)}+\alpha_R^{(2)}\right)}\right).
\end{align}
Thus, we equate  $r_{B_2}=\mbox{RHS of}$~\eqref{eq:AchRatesDownlink2UpBoundCase2}, by setting
\begin{align}\label{eq:PowConstraintAlphaR4DoCase2}
\alpha_{R}^{(4)}& = \frac{\left(2^{r_{B_2}}-1\right)}{|h_{RA_1}|^2P}\Big(2^{r_{A_1}}+ \left(2^{r_{A_1}}-1\right)|h_{RA_1}|^2P\alpha_{R}^{(3)}\Big).
\end{align}
Using~\eqref{eq:RatesDownlink2}, ~\eqref{eq:RatesDownlink4},~\eqref{eq:PowConstraintAlphaR1DoCase2},~\eqref{eq:PowConstraintAlphaR2aDoCase2}, ~\eqref{eq:PowConstraintAlphaR3DoCase2} and adding this to~\eqref{eq:PowConstraintAlphaR4DoCase2} we get
\begin{align}
\sum_{j=1}^4\alpha_{R}^{(j)} \leq & \frac{\frac{1+|h_{RA_1}|^2P}{4}-1}{|h_{RA_1}|^2P} +\frac{\frac{1+|h_{RB_1}|^2P}{4}-1}{|h_{RB_1}|^2P}  \nonumber \\
& +\frac{\frac{1+|h_{RB_2}|^2P}{4}-1}{|h_{RB_2}|^2P} +\frac{\frac{1+|h_{RB_2}|^2P}{4}-1}{|h_{RB_1}|^2P} \leq 1\nonumber
\end{align}
which shows that this is a valid choice of $\alpha_{R}^{(j)}$, $j=1\dots4$.

\subsection{Case $|h_{RB_1}|\geq |h_{RB_2}|\geq |h_{RA_2}|\geq  |h_{RA_1}|$}

Consider a 4-tuple $(r_{A_1},r_{B_1},r_{A_2},r_{B_2})$ satisfying~\eqref{eq:RatesDownlink1}-\eqref{eq:RatesDownlink6}. Starting with the first equation in~\eqref{eq:AchRatesDownlink3Case2}, we equate
\begin{align}\label{eq:PowConstraintAlphaR1DoCase3}
& \log \left(1+\alpha_{R}^{(1)}|h_{RB_1}|^2P\right)=r_{A_1}-r_{B_1} \\
& \Rightarrow \alpha_{R}^{(1)}= \frac{2^{r_{A_1}-r_{B_1}}-1}{|h_{RB_1}|^2P}. \nonumber
\end{align}
Now from~\eqref{eq:RatesDownlink1} we know that
\begin{align}
\alpha_{R}^{(1)}\leq  \frac{\frac{1+|h_{RB_1}|^2P}{4}-1}{|h_{RB_1}|^2P}& \leq 1,\nonumber
\end{align}
which shows that this is a valid choice of $\alpha_{R}^{(1)}$.

Next we equate $r_{A_2}-r_{B_2}=\mbox{RHS of}$~\eqref{eq:AchRatesDownlink3Case3} (second equation), by setting
\begin{equation}\label{eq:PowConstraintAlphaR3DoCase3}
\alpha_{R}^{(3)}=\frac{\left(2^{r_{A_2}-r_{B_2}}-1\right)\left(1+\frac{|h_{RB_2}|^2P}{|h_{RB_1}|^2P}\left(2^{r_{A_1}-r_{B_1}}-1\right)\right)}{|h_{RB_2}|^2P} .
\end{equation}
Using~\eqref{eq:RatesDownlink2},~\eqref{eq:RatesDownlink3} and~\eqref{eq:VerifyAlphaR2a} and adding this to~\eqref{eq:PowConstraintAlphaR3DoCase3} we get
\begin{align}
\alpha_{R}^{(1)}+\alpha_{R}^{(3)} \leq \frac{\frac{1+|h_{RB_2}|^2P}{4}-1}{|h_{RB_2}|^2P}+ \frac{\frac{1+|h_{RB_1}|^2P}{4}-1}{|h_{RB_1}|^2P} \leq 1,\nonumber
\end{align}
verifying that this is a valid choice of $\alpha_{R}^{(1)}$, $\alpha_{R}^{(3)}$.

From~\eqref{eq:AchRatesDownlink1Case3}, we have
\begin{align}~\label{eq:AchRatesDownlink1UpBoundCase3}
R_{A_2}^{(2)}, R_{B_2}\leq  C\left(\frac{|h_{RA_2}|^2P\alpha_R^{(4)}}{1+|h_{RA_2}|^2P\alpha_R^{(1)}}\right).
\end{align}
Next we equate  $r_{B_2}=\mbox{RHS of}$~\eqref{eq:AchRatesDownlink1UpBoundCase3}, by setting
\begin{align}\label{eq:PowConstraintAlphaR2aDoCase3}
\alpha_{R}^{(4)}=\frac{\left(2^{r_{B_2}}-1\right)\left(1+|h_{RA_2}|^2P\alpha_{R}^{(1)}\right)}{|h_{RA_2}|^2P}.
\end{align}
Using~\eqref{eq:RatesDownlink1} and~\eqref{eq:PowConstraintAlphaR1DoCase3} and adding this to~\eqref{eq:PowConstraintAlphaR2aDoCase3} we get
\begin{align}\label{eq:VerifyAlphaR2aDoCase3}
\alpha_{R}^{(1)}+\alpha_{R}^{(3)}+\alpha_{R}^{(4)} \leq  & \frac{2^{r_{B_2}}-1}{|h_{RA_2}|^2P} +\frac{2^{r_{A_2}-r_{B_2}}-1}{|h_{RB_2}|^2P} \\
& +\frac{2^{r_{A_1}+r_{B_1}}-1}{|h_{RB_1}|^2P} +\frac{2^{r_{A_1}+r_{B_2}}-1}{|h_{RB_1}|^2P}
 \leq 1, \nonumber
\end{align}
verifying that this is a valid choice of $\alpha_{R}^{(1)}$, $\alpha_{R}^{(3)}$, and $\alpha_{R}^{(4)}$.
Finally from~\eqref{eq:AchRatesDownlink2}, we have
\begin{align}~\label{eq:AchRatesDownlink2UpBoundCase3}
R_{A_1}^{(2)},R_{B_1}\leq  C\left(\frac{|h_{RA_1}|^2P\alpha_R^{(2)}}{1+|h_{RA_1}|^2P\left(\alpha_R^{(3)}+\alpha_R^{(4)}\right)}\right).
\end{align}
Thus, we equate  $r_{B_2}=\mbox{RHS of}$~\eqref{eq:AchRatesDownlink2UpBoundCase3}, by setting
\begin{align}\label{eq:PowConstraintAlphaR4DoCase3}
\alpha_{R}^{(2)}& = \frac{\left(2^{r_{B_1}}-1\right)}{|h_{RA_1}|^2P}\Big(1+ \left(\alpha_{R}^{(3)}+\alpha_{R}^{(4)}\right)|h_{RA_1}|^2P\Big).
\end{align}
Using~\eqref{eq:RatesDownlink2}, ~\eqref{eq:RatesDownlink4},~\eqref{eq:PowConstraintAlphaR1DoCase3},~\eqref{eq:PowConstraintAlphaR2aDoCase3}, ~\eqref{eq:PowConstraintAlphaR3DoCase3} and adding this to~\eqref{eq:PowConstraintAlphaR4DoCase3} we get
\begin{align}
\sum_{j=1}^4\alpha_{R}^{(j)} \leq &  \frac{2^{r_{B_1}}-1}{|h_{RA_1}|^2P} +\frac{2^{r_{B_1}+r_{B_2}}-1}{|h_{RA_2}|^2P} +\frac{2^{r_{B_1}+r_{A_2}}-1}{|h_{RB_2}|^2P} \nonumber \\
& +\frac{2^{r_{A_1}+r_{B_2}}-1}{|h_{RB_1}|^2P}+\frac{2^{r_{A_1}+r_{A_2}}-1}{|h_{RB_1}|^2P}
 \leq 1,\nonumber
\end{align}
which shows that this is a valid choice of $\alpha_{R}^{(j)}$, $j=1,\dots,4$.

\begin{IEEEbiographynophoto}{Aydin Sezgin}
(S'01 - M'05) received the Dipl.-Ing. (M.S.) degree in communications engineering and the Dr.-Ing. (Ph.D.) degree in electrical engineering from the TFH Berlin in 2000 and the  TU  Berlin,  in  2005, respectively.
From 2001 to 2006, he was with the Heinrich-Hertz-Institut (HHI), Berlin.
From 2006 to 2008, he was a Post-doc and Lecturer at the Information Systems Laboratory, Department of Electrical Engineering, Stanford University.
From 2008 to 2009, he was a Post-doc at the Department of Electrical Engineering and Computer Science at the University of California Irvine.
From 2009 2011, he was the Head of the Emmy-Noether-Research Group on Wireless Networks at the Ulm University.
In 2011, he was full professor at the Department of Electrical Engineering and Information Technology at TU Darmstadt, Germany.
He is currently a full professor at the Department of Electrical Engineering and Information Technology at Ruhr-University Bochum, Germany.
His current research interests are in the area of information theory, communication theory, and signal processing with focus on applications to wireless communication systems.
He is currently serving as Editor for IEEE Transactions on Wireless Communications and Area Editor for Elsevier Journal of Electronics and Communications.
\end{IEEEbiographynophoto}

\begin{IEEEbiographynophoto}{A. Salman Avestimehr}
(S'05 - M'08) received the B.S. degree in electrical engineering from Sharif University of Technology, Tehran, Iran, in 2003 and the M.S. degree and Ph.D. degree in electrical engineering and computer science, both from the University of California, Berkeley, in 2005 and 2008, respectively.

He is currently an Assistant Professor at the School of Electrical and Computer Engineering at Cornell University, Ithaca, NY. He was also a postdoctoral scholar at the Center for the Mathematics of Informa- tion (CMI) at the California Institute of Technology, Pasadena, in 2008. His research interests include information theory, communications, and networking.

Dr. Avestimehr has received a number of awards, including the Presidential Early Career Award for Scientists and Engineers (PECASE) in 2011, the Young Investigator Program (YIP) award from the U. S. Air Force Office of Scientific Research (2011), the National Science Foundation CAREER award (2010), the David J. Sakrison Memorial Prize from the U.C. Berkeley EECS Department (2008), and the Vodafone U.S. Foundation Fellows Initiative Research Merit Award (2005). He has been a Guest Editor for the IEEE Transactions on Information Theory Special Issue on Interference Networks.
\end{IEEEbiographynophoto}

\begin{IEEEbiographynophoto}{Amin Khajehnejad}
received the Bachelor's degree in electrical engineering from the University of Tehran, Iran, in 2007 and the Master's degree in electrical engineering from the California Institute of Technology (Caltech), Pasadena, in 2009. He is currently working towards the Ph.D. degree at Caltech.

His general interests are in signal processing, coding, information theory and optimization. He has been with Lyric semiconductors Inc., and NEC laboratories America, Inc., during summers 2009 and 2010, respectively.
\end{IEEEbiographynophoto}

\begin{IEEEbiographynophoto}{Babak Hassibi}
was born in Tehran, Iran, in 1967. He received the B.S. degree from the University of Tehran, Iran, in 1989 and the M.S. and Ph.D. degrees from Stanford University, Stanford, CA, in 1993 and 1996, respectively, all in electrical engineering.

From October 1996 to October 1998, he was a Research Associate at the Information Systems Laboratory, Stanford University, and from November 1998 to December 2000, he was a Member of the Technical Staff in the Mathematical Sciences Research Center at Bell Laboratories, Murray Hill, NJ. He has also held short-term appointments at Ricoh California Research Center, the Indian Institute of Science, and Linkoping University, Sweden. Since January 2001, he has been with the California Institute of Technology, Pasadena, where he is currently Professor and Executive Officer of Electrical Engineering. His research interests include wireless communications and networks, robust estimation and control, adaptive signal processing, and linear algebra. He is the coauthor (with A. H. Sayed and T. Kailath) of the books Indefinite Quadratic Estimation and Control: A Unified Approach to $H^2$ and $H^{\infty}$ Theories (New York: SIAM, 1999) and Linear Estimation (Englewood Cliffs, NJ: Prentice-Hall, 2000).

Dr. Hassibi is a recipient of an Alborz Foundation Fellowship, the 1999 O. Hugo Schuck best paper award of the American Automatic Control Council (with H. Hindi and S. P. Boyd), the 2002 National Science Foundation Career Award, the 2002 Okawa Foundation Research Grant for Information and Telecommunications, the 2003 David and Lucille Packard Fellowship for Science and Engineering, and the 2003 Presidential Early Career Award for Sci- entists and Engineers (PECASE), and was a participant in the 2004 National Academy of Engineering ``Frontiers in Engineering'' program. He has been a Guest Editor for theIEEE Transactions on Information Theory Special Issue on Space-time Transmission, Reception, Coding and Signal Processing, was an Associate Editor for Communications of the IEEE Transactions on Information Theory from 2004 to 2006, and is currently an Editor for the journal Foundations and Trends in Information and Communication.
\end{IEEEbiographynophoto}

\end{document}